\newif\ifarxiv
\setlist{nolistsep}
\def\maxwidth{ %
  \ifdim\Gin@nat@width>\linewidth
    \linewidth
  \else
    \Gin@nat@width
  \fi
}
\definecolor{fgcolor}{rgb}{0.345, 0.345, 0.345}
\definecolor{shadecolor}{rgb}{.97, .97, .97}
\definecolor{messagecolor}{rgb}{0, 0, 0}
\definecolor{warningcolor}{rgb}{1, 0, 1}
\definecolor{errorcolor}{rgb}{1, 0, 0}
\newcommand{\cut}[1]{}
\newcommand{\paragraphb}[1]{\noindent{\bf #1}}
\newcommand{\eg}{e.g., }
\newcommand{\etc}{{etc.}\xspace}
\newcommand{\ie}{i.e., }
\newcommand{\name}{{Mobius}\xspace}
\newcommand{\maas}{{mobility}\xspace}
\newcommand{\qos}{{quality-of-service}\xspace}
\newcommand{\fig}{{Fig.}\xspace}
\newcommand{\app}{{App.}\xspace}
\newcommand{\eqn}[1]{Equation~(#1)}
\newcommand{\im}{{interest map}\xspace}
\newcommand{\ims}{{interest maps}\xspace}
\newcommand{\Ims}{{Interest maps}\xspace}
\newcommand{\ortools}{OR-Tools\xspace}
\newtheorem{theorem}{Theorem}
\newtheorem{lemma}{Lemma}
\newtheorem{definition}{Definition}
\newtheorem{corr}{Corollary}
\begin{document}
\acmYear{2021}\copyrightyear{2021}
\acmConference[MobiSys '21]{The 19th Annual International Conference on Mobile Systems, Applications, and Services}{June 24--July 2, 2021}{Virtual, WI, USA}
\acmBooktitle{The 19th Annual International Conference on Mobile Systems, Applications, and Services (MobiSys '21), June 24--July 2, 2021, Virtual, WI, USA}
\acmPrice{}
\acmDOI{10.1145/3458864.3467881}
\acmISBN{978-1-4503-8443-8/21/06}

\keywords{mobility platforms, vehicle routing, aerial sensing, ridesharing, resource allocation, optimization}
\begin{CCSXML}
<ccs2012>
   <concept>
       <concept_id>10010520.10010553.10010554</concept_id>
       <concept_desc>Computer systems organization~Robotics</concept_desc>
       <concept_significance>500</concept_significance>
       </concept>
   <concept>
       <concept_id>10010520.10010553.10003238</concept_id>
       <concept_desc>Computer systems organization~Sensor networks</concept_desc>
       <concept_significance>500</concept_significance>
       </concept>
   <concept>
       <concept_id>10010147.10010178.10010199</concept_id>
       <concept_desc>Computing methodologies~Planning and scheduling</concept_desc>
       <concept_significance>500</concept_significance>
       </concept>
   <concept>
       <concept_id>10010405.10010481.10010485</concept_id>
       <concept_desc>Applied computing~Transportation</concept_desc>
       <concept_significance>500</concept_significance>
       </concept>
    <concept>
       <concept_id>10003033.10003068.10003073.10003074</concept_id>
       <concept_desc>Networks~Network resources allocation</concept_desc>
       <concept_significance>300</concept_significance>
       </concept>
 </ccs2012>
\end{CCSXML}

\ccsdesc[500]{Computer systems organization~Robotics}
\ccsdesc[500]{Computer systems organization~Sensor networks}
\ccsdesc[500]{Computing methodologies~Planning and scheduling}
\ccsdesc[500]{Applied computing~Transportation}
\ccsdesc[300]{Networks~Network resources allocation}

\title{Throughput-Fairness Tradeoffs in Mobility Platforms}
\author{
    Arjun Balasingam\texorpdfstring{$^\star$}{*},
    Karthik Gopalakrishnan\texorpdfstring{$^\star$}, 
    Radhika Mittal\texorpdfstring{$^\dagger$}{+}, 
    Venkat Arun\texorpdfstring{$^\star$}{*},\texorpdfstring{\\}{}
    Ahmed Saeed\texorpdfstring{$^\star$}{*},
    Mohammad Alizadeh\texorpdfstring{$^\star$}{*},
    Hamsa Balakrishnan\texorpdfstring{$^\star$}{*},
    Hari Balakrishnan\texorpdfstring{$^\star$}{*}
}
\affiliation{
    $^\star$Massachusetts Institute of Technology \hspace{5pt}
    $^\dagger$University of Illinois at Urbana-Champaign
}

\renewcommand{\authors}{Arjun Balasingam,
Karthik Gopalakrishnan,
Radhika Mittal,
Venkat Arun,
Ahmed Saeed,
Mohammad Alizadeh,
Hamsa Balakrishnan,
Hari Balakrishnan}

\ifarxiv
    \settopmatter{printacmref=false}
    \setcopyright{none}
    \renewcommand\footnotetextcopyrightpermission[1]{}
    \pagestyle{fancy}
    \fancyhead[L]{\small Throughput-Fairness Tradeoffs in Mobility Platforms}
    \fancyhead[R]{\small Balasingam, Gopalakrishnan, Mittal, et al.}
    \fancyfoot{}
    \fancyfoot[C]{\thepage}
\fi
\renewcommand{\shortauthors}{Balasingam, Gopalakrishnan, Mittal, et al.}

\begin{abstract}

This paper studies the problem of allocating tasks from different customers to vehicles in mobility platforms, which are used for applications like food and package delivery, ridesharing, and mobile sensing. 
A mobility platform should allocate tasks to vehicles and schedule them in order to optimize both throughput and fairness across customers.
However, existing approaches to scheduling tasks in mobility platforms ignore fairness.

We introduce \name, a system that uses guided optimization to achieve both high throughput and fairness across customers. \name supports spatiotemporally diverse and dynamic customer demands. It provides a principled method to navigate inherent tradeoffs between fairness and throughput caused by shared mobility. Our evaluation demonstrates these properties, along with the versatility and scalability of \name, using traces gathered from ridesharing and aerial sensing applications. Our ridesharing case study shows that \name can schedule more than 16,000 tasks across 40 customers and 200 vehicles in an online manner.

\end{abstract}

\maketitle

\begin{sloppypar}
    \section{Introduction}
\label{sec:intro}

The past decade has seen the rapid proliferation of mobility platforms that use a fleet of mobile vehicles to provide different services. Popular examples include package delivery (UPS, DHL, FedEx, Amazon), food delivery (DoorDash, Grubhub, Uber Eats), and rideshare services (Uber, Lyft). In addition, new types of mobility platforms are emerging, such as drones-as-a-service platforms~\cite{flytos,androne,uav-service,uav-cloud} for deploying different sensing applications on a fleet of drones.

In these mobility platforms, the vehicle fleet of cars, vans, bikes, or drones is a \emph{shared infrastructure}. The platform serves multiple \emph{customers}, with each customer requiring a \emph{set of tasks} to be completed. For instance, each restaurant subscribing to DoorDash is a customer, with several food delivery orders (or tasks) in a city. Similarly, an atmospheric chemist and a traffic analyst might subscribe to a drones-as-a-service platform, each with their own sensing applications to collect air quality measurements and traffic videos, respectively, at several locations in the same urban area. Multiplexing tasks from different customers on the same 
vehicles can increase the efficiency of mobility platforms because vehicles can amortize their travel time by completing co-located tasks (belonging to either the same or different customers) in the same trip.

We study the problem of scheduling spatially distributed tasks from multiple customers on a shared fleet of vehicles. This problem involves (i) assigning tasks to vehicles and (ii) determining the order in which each vehicle must complete its assigned tasks. The constraints are that each vehicle has bounded resources (fuel or battery). While several variants of this scheduling problem 
have been studied, the objective has typically been to complete as many tasks as possible in bounded time, or to maximize aggregate throughput (task completion rate)~\cite{vrp,vrp-applications}. 

We identify a second---equally important---scheduling requirement, which has emerged in today's customer-centric mobility platforms: \emph{fairness} of customer throughput to ensure that tasks from different customers are fulfilled at similar rates.\footnote{The method we develop also applies to weighted fairness.} For example, in food delivery, the platform should serve restaurants equitably, even if it means spending time or resources on restaurants with patrons far from the current location of the vehicles. A ridesharing platform should ensure that riders from different neighborhoods are served equitably, which ridesharing platforms today do not handle well, a phenomenon known as ``destination discrimination''~\cite{middleton2018discrimination, uber-discrimination, uber-dd}.

We seek an online scheduler for mobility platforms that achieves both high throughput and fairness. A standard approach to achieving these goals is to track the resource usage and work done on behalf of different users in a fine-grained way and equalize resource consumption across users.
Such fine-grained accounting and attribution is difficult with shared mobility: the resource used is a moving vehicle traveling toward its next task, but making that trip has a knock-on benefit, not only for the next task served, but for subsequent ones as well. However, the benefit of a specific trip is not equal across the subsequent tasks. Although it may be possible to develop a fair scheduler that achieves high throughput using fine-grained accounting and attribution, it is likely to be complex.

We turn, instead, to an approach that has been used in both societal and computing systems: optimization, which may be viewed as a search through a set of feasible schedules to maximize a utility function. In our case, we can establish such a function, optimize it using both the task assignment and path selection, and then route vehicles accordingly. 

In a typical mobility problem, the planning time frame for optimization could be between 30 minutes and several hours, involving hundreds of vehicles, dozens of customers, and tens of thousands of tasks. The scale of this problem pushes the limits of state-of-the-art vehicle routing solvers~\cite{vrp-largescale}. Moreover, fairness objectives lead to nonlinear utility functions, which make the optimization much more challenging. As a benchmark, optimizing the routes for 3 vehicles and 17 tasks over 1 hour, using the CPLEX solver~\cite{cplex} with a nonlinear objective function, takes over 10 hours~\cite{molina2014multi}.

To address these problems, a natural approach is to divide the desired time duration into shorter rounds, and then run the utility optimization. When we do this, something interesting emerges in mobility settings: the space of feasible solutions---each solution being an achievable set of rates for the customers---often {\em collapses into a rather small and disturbingly suboptimal set!} These feasible solutions are either  fair but with dismal throughput, or with excellent throughput but starving several customers. 

A simple example helps see why this happens. Consider a map with three areas, $A_1$, $A_2$, $A_3$, each distant from the others. There are several tasks in each area: in $A_1$, all the tasks are for customer $C_1$; in $A_2$, all the tasks are for customer $C_2$, and in $A_3$, all the tasks are for two other customers, $C_3$ and $C_4$. Suppose that there are two vehicles. Over 
a duration of a few minutes, we could either have the two vehicles focus on only two areas, achieving high throughput but ignoring the third area and reducing fairness, or, we could have them move between areas after each task to ensure fairness, but waste a lot of time traveling, degrading throughput. It is not possible here to achieve both throughput and fairness {\em over a short timescale}.
Yet, over a long time duration, we can swap vehicles between regions to amortize the movement costs. This shows that planning over a longer timescale permits feasible schedules that are better than what a shorter timescale would permit. 

Our contribution, \emph{\name{}}, divides the desired time duration into rounds, and produces the feasible set of allocations for that round using a standard optimizer. \name guides the optimizer toward a solution that is not in the feasible set for one round but can be achieved over multiple rounds. This guiding is done by aiming for an objective that maximizes a weighted linear sum of customer rates in each round. The weights are adjusted dynamically based on the long-term rates achieved for each customer thus far. The result is a practical system that achieves high throughput and fairness over multiple rounds. This approach of achieving long-term fairness by setting appropriate weights across rounds allows us to use off-the-shelf solvers for the weighted Vehicle Routing Problem (VRP) for path planning in each round.
Importantly, this design allows \name to optimize for fairness in the context of any VRP formulation, making this work complementary to the vast body of prior work on vehicle routing algorithms~\cite{vrp-applications,vrp-maxtpt,vrp-stochastic,pctsp}.

Scheduling over multiple rounds also allows \name to handle tasks that arrive dynamically or expire before being done. Moreover, \name supports a tunable level of fairness modeled by $\alpha$-fair utility functions~\cite{prop-fair}, which generalize the familiar notions of max-min and proportional fairness.

We have implemented \name and evaluated it via extensive trace-driven emulation experiments in two real-world settings: (i) a ridesharing service, based on real Lyft ride request data gathered over a day, ensuring fair \qos to different neighborhoods in Manhattan; and (ii) urban sensing using drones for measuring traffic congestion, parking lot occupancy, cellular throughput, and air quality. We find that:
\begin{enumerate}[label=\arabic*.]
\item Relative to a scheduler that maximizes only throughput, \name compromises only 10\% of platform throughput in order to enforce max-min fairness.
\item Compared to dedicating vehicles to customers, \name improves vehicle utilization by 30-50\% by intelligently sharing vehicles amongst customers.
\item \name can compute fair online schedules at a city scale, involving 40 customers, 200 vehicles, and over 16,000 tasks.
\end{enumerate}

    \section{Problem Setup}
\label{sec:motivation}

\begin{figure*}
	\centering
	\includegraphics[width=1\textwidth]{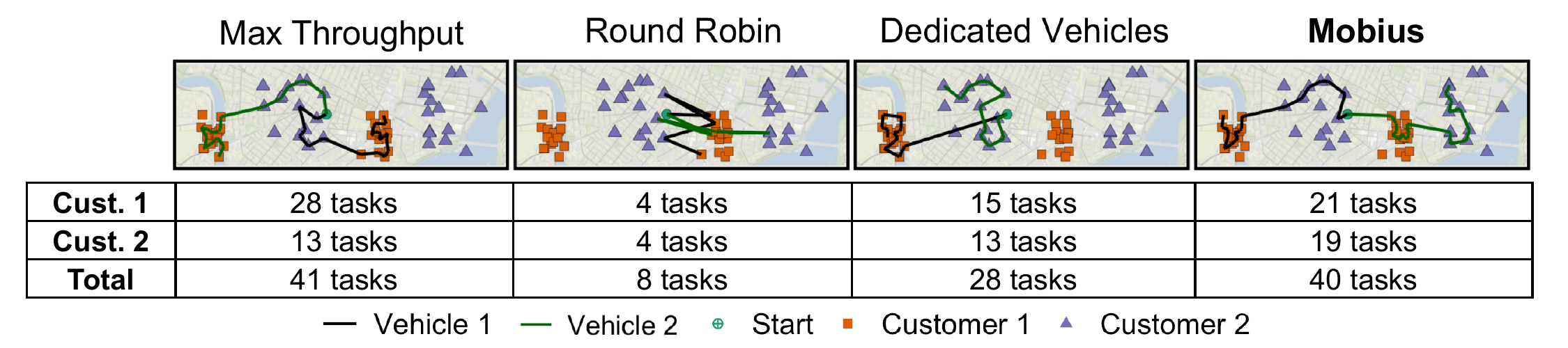}
	\vspace{-15pt}
	\caption{An example with two customers, two vehicles, and a 6-minute planning horizon. \name computes a schedule that (i) achieves a similar total throughput to that of the max throughput schedule, and (ii) preserves the customer-level fairness achieved by the round-robin and dedicated schedules.}
    \label{fig:motivation}
 \end{figure*}

Every customer subscribing to a \maas platform submits several requests over time. Each request specifies a task (\eg gather sensor data or deliver package) and a corresponding location. 
The platform schedules trips for each vehicle over multiple rounds.
It takes into account any changes in a customer's requirements (in the form of new task requests or expiration of older unfulfilled tasks) at the beginning of each round. We say that a customer has a backlog if they have more tasks than can be completed by all available resources within the allocated time.   
For simplicity of exposition, we assume each customer is backlogged (our evaluation in \S\ref{sec:eval} relaxes this assumption).

Let $K$ be the set of customers, 
and $T_k(\tau)$ be the set of tasks requested by customer $k$ during a scheduling round $\tau$. 
We denote $x_k(\tau)$ as the throughput achieved for customer $k$ in scheduling round $\tau$, \ie the total number of tasks in $T_k(\tau)$  that are fulfilled 
divided by the round duration.

We denote $\overline{x}_k(t)$ as the long-term throughput for each customer $k$, after $t$ scheduling rounds, \ie $\overline{x}_k(t) = \frac{1}{t} \sum_{\tau = 1}^t x_k(\tau)$ if rounds are of equal duration.
A good scheduling algorithm should achieve the following objectives:
\begin{itemize}
\item \textbf{Platform Throughput.} Maximize the total long-term throughput after round $t$, \ie $\sum_{k \in K} \overline{x}_k(t)$. 
\item \textbf{Customer Fairness.} For any two customers $k_1, k_2 \in K$ with backlogged tasks, ensure $\overline{x}_{k_1}(t) = \overline{x}_{k_2}(t)$. 
\end{itemize}
Equalizing long-term per-customer throughputs $\overline{x}_k(t)$ provides a desirable measure of fairness for many mobility platforms: higher per-customer throughputs correlate with other performance metrics, such as lower task latency and higher revenue. Our evaluation (\S\ref{sec:eval}) quantifies the impact of optimizing for a fair allocation of throughputs on other platform-specific quality-of-service metrics.

Prior algorithms for scheduling tasks on a shared fleet of vehicles have focused on the VRP, \ie only considered maximizing platform throughput~\cite{vrp,vrp-applications}.  Achieving per-customer fairness introduces three new challenges:

\vspace{3pt}
\paragraphb{Challenge \#1: Attributing vehicle time to customers.} 
Vehicle time and capacity are scarce. 
Consider the example in \fig~\ref{fig:motivation}, with two customers and two vehicles; customer 1 has two densely-packed clusters of tasks, while customer 2 has two dispersed clusters of tasks. We show schedules and tasks fulfilled by \name and three other policies: (i) maximizing throughput, (ii) dedicating a vehicle per customer, and (iii) alternating round-robin between customer tasks. Notice that, to the left of the depot (center of the map), customer 2's tasks can be picked up on the way to customer 1's tasks. Thus, multiplexing both customers' tasks on the same vehicle is more desirable than dedicating a vehicle per customer, because it amortizes resources to serve both customers. However, sharing vehicles amongst customers complicates our ability to reason about fairness, because the travel time between the tasks of different customers cannot be attributed easily to each one. 

\begin{figure}[t]
  \centering
  \includegraphics[scale=0.63]{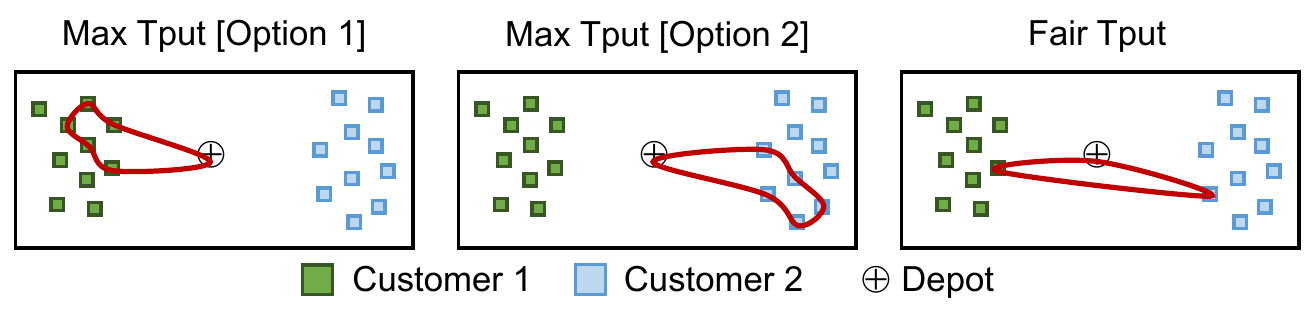}\vspace{-10pt}
  \caption{Imposing fairness at short timescales (\eg one round trip) degrades throughput. Executing Options 1 and 2 provides fairness at longer timescales and leads to greater total throughput.
  }
  \vspace{-10pt}
  \label{fig:motivation:toy}
\end{figure}

\vspace{3pt}
\paragraphb{Challenge \#2: Timescale of fairness.}
\fig~\ref{fig:motivation:toy} shows two customers and one vehicle that must return home to refuel. A high-throughput schedule would dedicate the vehicle to one of the customers. By contrast, a fair schedule would require the vehicle to round-robin customer tasks, achieving low throughput due to travel. Over a longer time duration, however, we can execute two max-throughput schedules (Options 1 and 2) to achieve both fairness and high throughput.

\vspace{3pt}
\paragraphb{Challenge \#3: Spatiotemporal diversity of tasks.}
In \fig~\ref{fig:motivation}, the two customers' tasks have different spatial densities.
The high-throughput schedule favors customer 1. A max-min fair schedule should, by contrast, ensure that customer 2 gets its fair share of the throughput, even if it comes at the cost of higher travel time and lower platform throughput. Striking the right balance between fairly serving a customer with more dispersed tasks and reducing extra travel time is a non-trivial problem.

Customer tasks may also \emph{vary with time}. For example, a food delivery service might receive new requests from restaurants, or an atmospheric scientist may want to update sensing locations that they submitted to a drone service provider based on prior observations. The mobility platform must handle the dynamic arrival and expiration of tasks.

    \section{Overview}
\label{sec:overview}

\begin{figure}[t]
  \centering
  \includegraphics[scale=0.6]{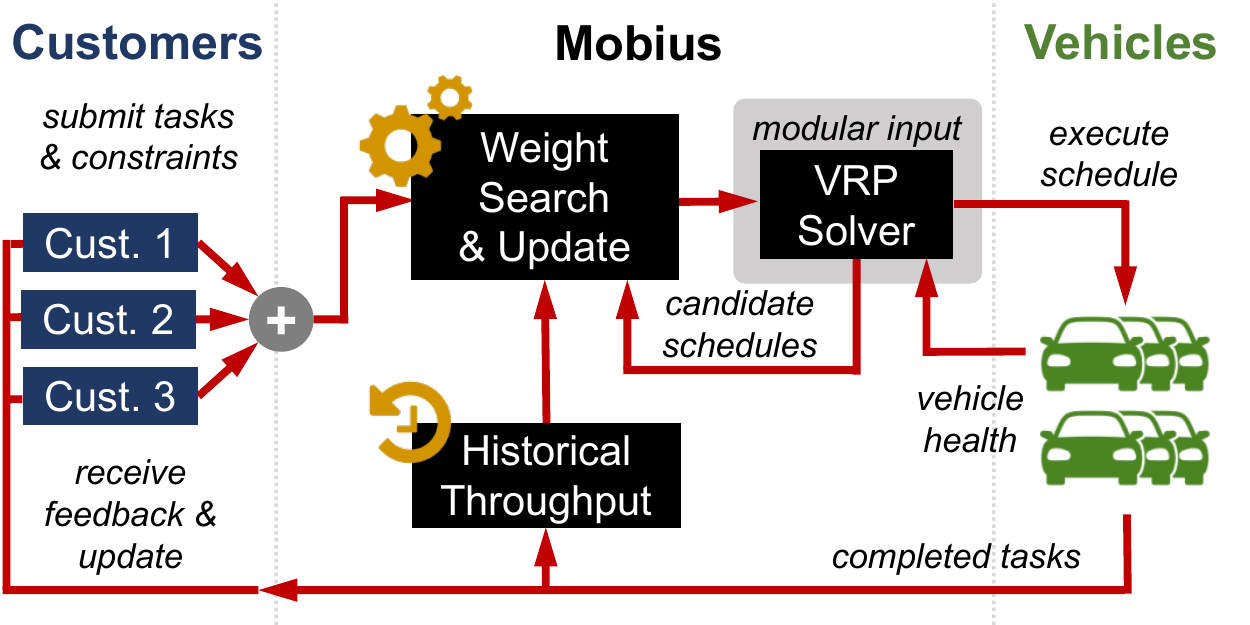}
  \vspace{-10pt}
  \caption{In each round, \name uses a VRP solver to compute a schedule that maximizes a weighted sum of throughputs, and automatically adjusts the weights across rounds to improve fairness.}
  \vspace{-10pt}
  \label{fig:overview}
\end{figure}

Any resource-constrained system exhibits an inherent tradeoff between throughput and fairness. In the best case, the most fair schedule would also have the highest throughput; however, due to the challenges described in \S\ref{sec:motivation}, it is impossible to realize this goal in many mobility settings. \name instead strives for customer fairness with the best possible platform throughput; its approach is to trade some short-term fairness for a boost in throughput, while improving fairness over a longer timescale.

In each round $\tau$, \name uses a VRP solver to maximize a weighted sum of customer throughputs $x_k(\tau)$.\footnote{We formally define the VRP in \S\ref{sec:design}.} 
\name sets the weights in each round to find a high throughput schedule that is approximately fair in that round. By accounting for the long-term throughputs $\overline{x}_k(t)$ delivered to each customer $k$ in prior rounds, it is able to equalize $\overline{x}_k(t)$ over multiple rounds. We formalize this notion of balancing high throughput with fairness in \S\ref{sec:frontier}.
\name uses an iterative search algorithm requiring multiple invocations of a VRP solver to find a schedule that strikes the appropriate balance. 

Our approach of trading off short-term fairness for throughput and longer-term fairness 
raises a natural question: why not directly schedule over a longer time horizon, rather than dividing the scheduling problem into rounds? Scheduling in rounds is desirable for several reasons: (i) their duration can correlate with the fuel or battery constraints of the vehicles, (ii) it provides a target timescale at which \name strives to provide fairness, (iii) shorter timescales make the NP-hard VRP problem more tractable to solve, and (iv) it enables \name to adapt to temporal variations in customer demand that are captured at the beginning of each round.

\fig~\ref{fig:overview} shows the architecture of \name. In each round, customers update their task requests. \name then computes the best weights, generates a schedule, and dispatches the vehicles. At the end of the round, \name updates each customer's throughput, $\overline{x}_k(t)$, and uses this information to select weights in the next round. 

    \section{Balancing Throughput \& Fairness}
\label{sec:frontier}

\begin{figure*}[t]
	\centering
    \begin{subfigure}[t]{0.23\textwidth}
		\centering
		\includegraphics[scale=0.44]{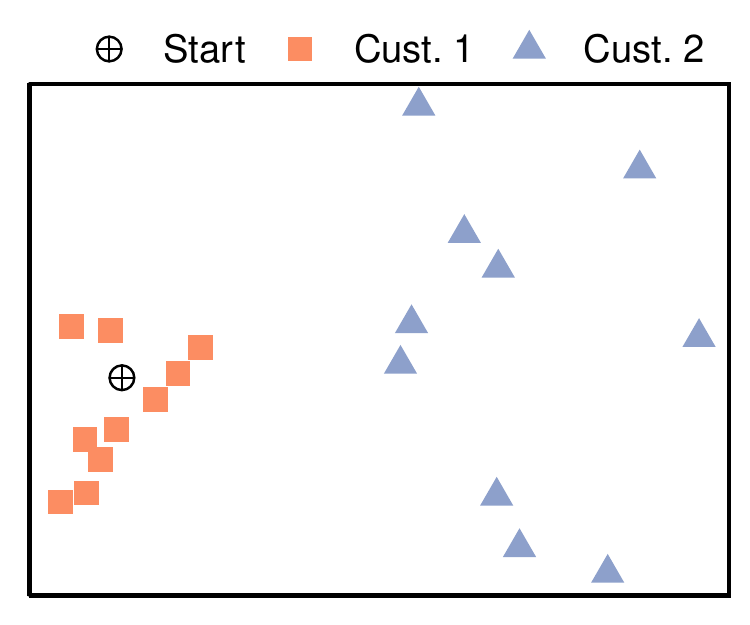}
		\caption{Map. Two vehicles start at $\oplus$.}
		\label{fig:frontier:bf-map}
	\end{subfigure}
    \begin{subfigure}[t]{0.25\textwidth}
		\centering
		\includegraphics[scale=0.44]{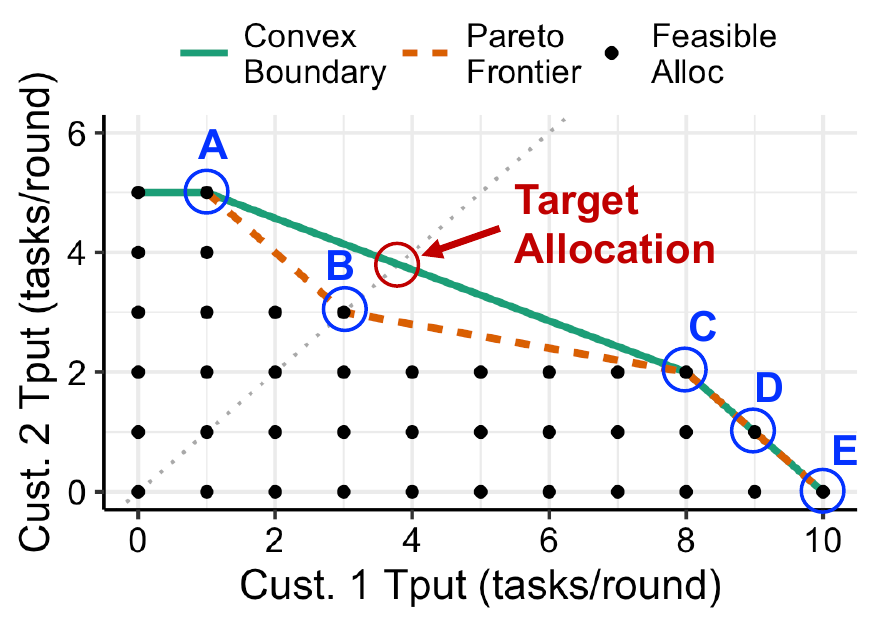}
		\caption{Feasible throughputs in 1 round.}
		\label{fig:frontier:bf-frontier}
	\end{subfigure}
      \begin{subfigure}[t]{0.25\textwidth}
      \centering
      \includegraphics[scale=0.44]{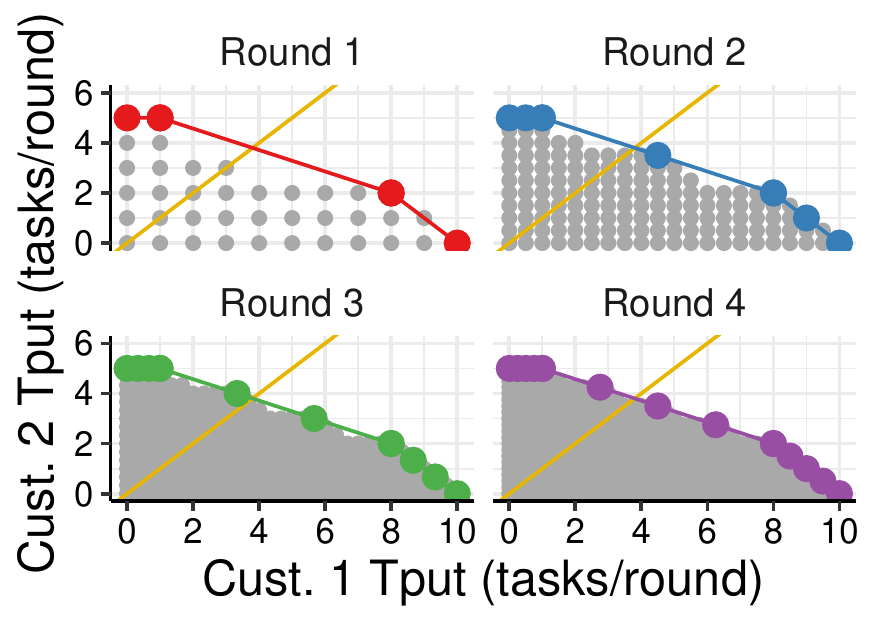}
      \caption{Feasible throughputs over 4 rounds.}
      \label{fig:frontier:bf-rate}
      \end{subfigure}
       \begin{subfigure}[t]{0.25\textwidth}
      \centering
		\includegraphics[scale=0.44]{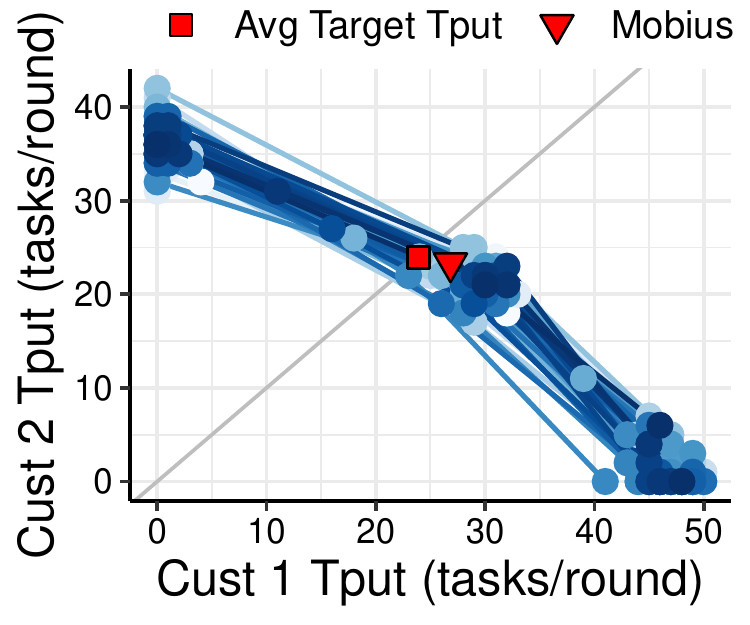}
		\caption{Convex boundary dynamics.}
		\label{fig:frontier:dynamic-hull}
      \end{subfigure}
	\vspace{-10pt}
	\caption{Visualizing feasible allocations of throughput for a small problem with two customers and two vehicles. Allocations on the convex boundary trade short-term fairness for throughput. The convex boundary becomes denser over time, making the target allocation achievable.}
	\vspace{-10pt}
	\label{fig:frontier:model}
\end{figure*}

We now provide the intuition behind our approach for balancing throughput and fairness using the example shown in \fig~\ref{fig:frontier:model}. There are two customers, each requesting tasks from distributions shown on the map in \fig~\ref{fig:frontier:bf-map}. We have two vehicles, each starting at $\oplus$. For simplicity, in \S\ref{sec:frontier:convex}, we consider planning schedules in 10-minute rounds, where the vehicles return to their start locations after 10 minutes. We renew all tasks at the beginning of each round trip. Then, in \S\ref{sec:frontier:dynamics}, we explain how \name generalizes to dynamic settings where customer tasks change with time, and vehicles do not need to return to their starting locations.

\subsection{Scheduling on the Convex Boundary}
\label{sec:frontier:convex}
\paragraphb{Feasible allocations.}
We first consider the set of schedules that are feasible within the time constraint. \fig~\ref{fig:frontier:bf-frontier} shows the tradeoff between throughput and fairness amongst these feasible schedules. Each dot represents an allocation produced by a feasible schedule; the coordinates of the dot indicate the throughputs of the respective customers. We generate the schedules by solving the VRP for each possible subset of customer tasks.\footnote{The VRP is NP-hard (\S\ref{sec:design}), but because the input size is small for this example, we use Gurobi~\cite{gurobi} to compute optimal schedules.} We also indicate the $y=x$ line (dotted gray), which corresponds to fair allocations that give equal throughput to each customer. Note that in this example both vehicles can more easily service Customer 1. Hence, an allocation that maximizes total throughput without regard to fairness (labeled $C$) favors Customer 1.

\vspace{3pt}
\paragraphb{Pareto frontier of feasible allocations.} 
The Pareto frontier over all feasible allocations is denoted by the dashed orange line, containing $A$, $B$, $C$, $D$, and $E$.
If an allocation on the Pareto Frontier achieves throughputs of $x_1$ and $x_2$ for Customers 1 and 2 respectively, there exists no feasible allocation $(\hat{x}_1, \hat{x}_2)$ such that $\hat{x}_1 > x_1$ and $\hat{x}_2 > x_2$.
The allocation that maximizes total throughput will always lie on the Pareto frontier. An allocation on the Pareto frontier is strictly superior, and therefore more desirable than other feasible allocations. So which allocation on the Pareto frontier do we pick? A strictly fair allocation lies at the point where the Pareto frontier intersects the $y=x$ line (labeled $B$ in \fig~\ref{fig:frontier:bf-frontier}). However, allocation $B$ has low total throughput, because the vehicles spend a significant part of the 10 minutes traveling between task clusters.

\vspace{3pt}
\paragraphb{Convex boundary of the Pareto frontier.} 
To capture the subset of allocations that do not significantly compromise throughput, we use the \emph{convex boundary} of all feasible allocations, denoted by the turquoise line in \fig~\ref{fig:frontier:bf-frontier}. 
The convex boundary is the smallest polygon around the feasible set such that no vertex bends inward~\cite{boyd-convexset}, and the \emph{corner points} are the vertices determining this polygon. 
The \emph{target allocation} is the point where the $y=x$ lines intersects the boundary (shown in red). It has high throughput and is fair, but it may not be feasible (as in this example). Is it still possible to achieve the target throughput in such cases?

\vspace{3pt}
\paragraphb{Scheduling over multiple rounds.}
Our key insight is that it is possible to achieve the target allocation over multiple rounds of scheduling by selecting different feasible allocations on the convex boundary in each round. 
In a given round, \name chooses the feasible allocation on the convex boundary that best achieves our fairness criteria. In our example, it chooses allocation $A$ in its first round. By choosing allocation $A$ over allocation $B$ (which achieves equal throughput), \name compromises on short-term fairness for a boost in throughput. However, as we discuss next, it compensates for this choice in subsequent rounds. Notice that if \name instead chooses $B$, it would not be able to recover from the resulting loss in throughput.

As we compute a 10-minute schedule for each round, the set of feasible allocations expands;
this allows \name to compensate for any prior deviation in fairness. \fig~\ref{fig:frontier:bf-rate} illustrates how the feasible set evolves over several 10-minute rounds of planning.
The feasible allocations (denoted by gray dots) possible after round $T$ are derived from the cumulative set of tasks completed in $T$ rounds.
Notice that over the four rounds, the set of feasible allocations becomes denser, and the Pareto frontier approaches the convex boundary.
Thus, the target allocation (\ie the allocation on the convex boundary that coincides with the $y=x$ line) becomes feasible.

In summary, the key insights driving the design of \name are: (i) the convex boundary describes a set of allocations that trade off short-term fairness for a boost in throughput, and (ii) the Pareto frontier approaches the convex boundary over multiple rounds of planning, making it possible to correct for unfairness over a slightly longer timescale.

\subsection{Scheduling in Dynamic Environments}
\label{sec:frontier:dynamics}
In practice, environments are more dynamic: customer tasks may not recur at the same locations, and vehicles need not return to their start locations regularly. Thus, the convex boundary may not be identical in each round. However, in practice, because (i) vehicles move continuously over space and (ii) customer tasks tend to observe spatial locality, the convex boundary does not change drastically over time.

To illustrate this, we extend the example in \fig~\ref{fig:frontier:model}, by creating a map with the same densities as in \fig~\ref{fig:frontier:bf-map}, but with 50 tasks per customer. To simulate dynamics, we create a new task for each customer every 3 minutes at a location chosen uniformly at random within a bounding box.  
We still consider two vehicles starting at the same location (\ie in the middle of customer 1's cluster) and plan in 10-minute rounds. We eliminate the return-to-home constraint. In order to adapt to the customers' changing tasks, we compute new 10-minute schedules every 1 minute (\ie 10-minute rounds slide in time by 1 minute). We run this simulation for 60 minutes.

In order to understand how these dynamics impact the convex boundary as we plan iteratively, we show in \fig~\ref{fig:frontier:dynamic-hull} the convex boundary of 10-minute schedules at each 1-minute replanning interval. Notice that the convex boundaries hover around a narrow band, indicating that we can still track the target throughput reliably. The red square marks the value of the average target throughput across all 50 convex boundaries; 
we also mark the throughput achieved by \name's scheduling algorithm (\S\ref{sec:design}).

In addition to the convex boundary remaining relatively stable from one timestep to the next, this method of replanning at much quicker intervals (\eg 1 minute) than the round duration (\eg 10 minutes) makes \name resilient to uncertainty in the environment.\footnote{\S\ref{sec:eval} further evaluates the effectiveness of \name's algorithm for dynamic, real-world customer demand.} For instance, \name can react to streaming requests in a punctual manner, and can also incorporate requests that are unfulfilled due to unexpected delays (\eg road traffic or wind). Moreover, since \name uses a VRP solver as a building block to compute its schedule (\S\ref{sec:overview}), it can also leverage algorithms that solve the stochastic VRP~\cite{vrp-stochastic}, where requests arrive and disappear probabilistically.

\subsection{Visualizing Routes Scheduled by \name}
\label{sec:frontier:viz}
\begin{figure}
  \centering
  \includegraphics[scale=0.65]{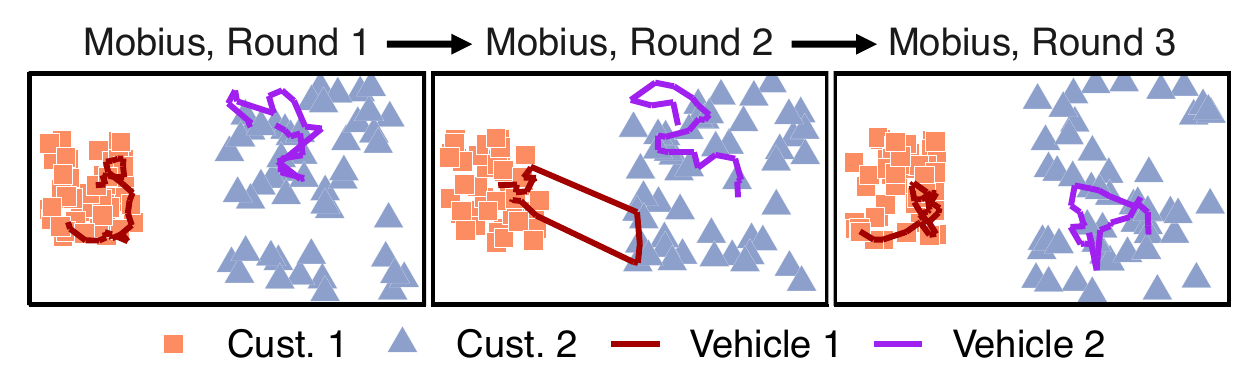}
  \vspace{-10pt}
  \caption{The difference in spatial density of tasks leads to short-term unfairness (Rounds 1 and 3). \name compensates for this by directing more resources to the underserved customer (Round 2).}
  \vspace{-10pt}
  \label{fig:frontier:routes}
\end{figure}

To illustrate how \name converges to fair per-customer allocations without significantly degrading platform throughput,  in \fig~\ref{fig:frontier:routes} we show 3 consecutive 10-minute round schedules computed by \name, for the dynamic example in \fig~\ref{fig:frontier:dynamic-hull}. In Rounds 1 and 3, we observe that \name decides to dedicate one vehicle to each customer in order to give them both sufficiently high throughput; here, customer 2 receives lower throughput because its tasks are more dispersed. However, in Round 2, \name compensates for this short-term unfairness by scheduling an additional vehicle to customer 2, while also collecting a few tasks for customer 1 in the outbound trip. 

    \section{\name Scheduling Algorithm}
\label{sec:design}

Based on the insights in \S\ref{sec:frontier}, we design \name to compute a schedule on the convex boundary in each round,
such that the long-term throughputs $\overline{x}_k(t)$ approach the target allocation. \name works in two steps:
\begin{enumerate}
\item In each round, \name finds the \emph{support allocations}, which we define as the corner points on the convex boundary of the current round, near the target allocation (\S\ref{sec:design:hull}). For example, in \fig~\ref{fig:frontier:bf-frontier}, \name would find support allocations $A$ and $C$.
\item Amongst the support allocations found in step (1), \name selects the one that \emph{steers the long-term throughputs} $\overline{x}_k(t)$ toward the target allocation (\S\ref{sec:design:rounds}).
\end{enumerate}
In this section, we present \name in the context of strict fairness (\ie $\overline{x}_k(t)$ must lie along the $y=x$ line).
\S\ref{sec:design:opt} provides a theoretical analysis of \name's optimality under simplifying assumptions, and \S\ref{sec:design:impl} describes our implementation. In \S\ref{sec:alpha}, we extend \name's formulation to work with a class of fairness objectives. 

\subsection{Finding Support Allocations}
\label{sec:design:hull}

Since the convex boundary of the Pareto frontier is equivalent to the convex boundary of the feasible set of schedules, a naive way to find the support allocations is to compute the Pareto frontier, take its convex boundary, and then identify the support allocations near the target allocation. However, computing the Pareto frontier is computationally expensive because it requires invoking an NP-hard solver an exponential number of times in the number of tasks. \name uses a VRP solver as a building block to find a subset of the corner points of the convex boundary around the target allocation.

The VRP involves computing a path
$\mathcal{P}_v$ for each vehicle $v$, defined as an ordered list of tasks from the set of all tasks $\{T_k(\tau) \; \mid k \in K\}$, such that the time to complete $\mathcal{P}_v$ does not exceed the total time budget $B$ for a round. VRP solvers maximize the platform throughput without regard to fairness. 

We capture different priorities amongst customer tasks by assigning a weight $w_k$ to each customer $k$'s tasks. Let $\mathbf{x} \in \mathbb{R}^{|K|}$ represent a throughput vector, where $x_k$ is the throughput for customer $k$, and let $\mathbf{w} \in \mathbb{R}^{|K|}$ represent a weight vector, with a weight $w_k$ for each customer $k$.\footnote{$\mathbf{x}$ and $\mathbf{w}$ vary with each round $\tau$. We drop the round index $\tau$ whenever there is no ambiguity about the current round.}

The weighted VRP seeks to maximize the total \emph{weighted throughput} of the system, 
where each task is allowed a weight. We can describe this as a mixed-integer linear program:
\begin{align}
& \underset{\mathcal{P}_v, \; \forall v \in V}{\text{argmax}} & \sum_{k \in K} w_k x_k  \;\; = \;\; &\underset{\mathcal{P}_v, \; \forall v \in V}{\text{argmax}} \;\; \mathbf{w}^\intercal\mathbf{x} & \label{eqn:vrp:obj} \\
& \textrm{s.t.} & c(\mathcal{P}_v) \leq B & & \forall v \in V \label{eqn:vrp:budget} \\
& & \mathcal{P}_v \textrm{ is a valid path} & & \forall v \in V, \label{eqn:vrp:constraints}
\end{align}
where $c(\cdot)$ specifies the time to complete a path. \eqn{\ref{eqn:vrp:budget}} enforces that, for each vehicle, the time to execute the selected path does not exceed the budget. \eqn{\ref{eqn:vrp:constraints}} captures constraints that are specific to the vehicles (\eg if vehicles must return to home at the end of each round) and customers (\eg if tasks are only valid during specific windows during the scheduling horizon). The weighted VRP (also called the prize-collecting VRP) is NP-hard, but there are several known algorithms with optimality bounds~\cite{pctsp, vrp}.

\vspace{3pt}
\paragraphb{Using weights to find the corner points.}
We can adjust the weight vector $\mathbf{w}$ in order to capture a bias toward a particular customer; $\mathbf{w}$ describes a direction in the customer throughput space, reflecting that bias. \fig~\ref{fig:design:search} visualizes $\mathbf{w}$ in a 2-D customer throughput space. A solver optimizing for \eqn{\ref{eqn:vrp:obj}} searches for the schedule with the highest throughput in the direction of $\mathbf{w}$~\cite{boyd-lp}, thus requiring the schedule to lie on the convex boundary. For example, $\mathbf{w_1} = (1,0)$ finds the schedule on the convex boundary that prioritizes customer 1 (\ie along the $x$-axis), and $\mathbf{w_2} = (0,1)$ finds a schedule that prioritizes customer 2 (\ie along the $y$-axis). 

\begin{figure}
	\centering
    \begin{subfigure}[t]{0.23\textwidth}
		\centering
		\includegraphics[scale=0.5]{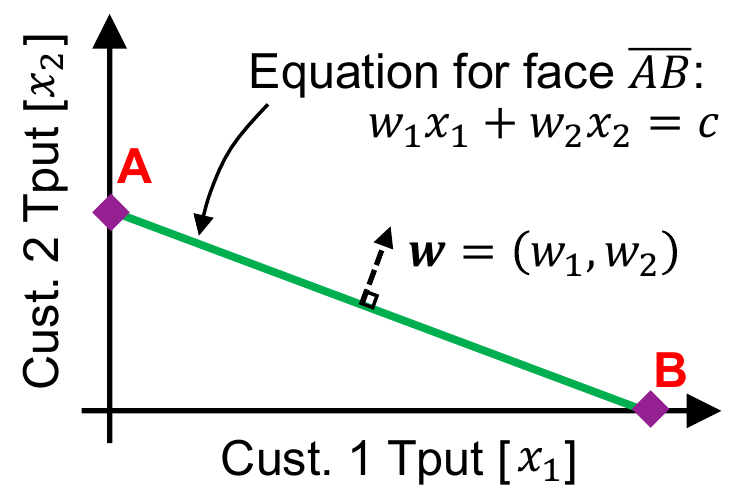}
		\caption{Computing $\mathbf{w}$ for a face.}
		\label{fig:design:search}
	\end{subfigure}
    \begin{subfigure}[t]{0.23\textwidth}
		\centering
		\includegraphics[scale=0.5]{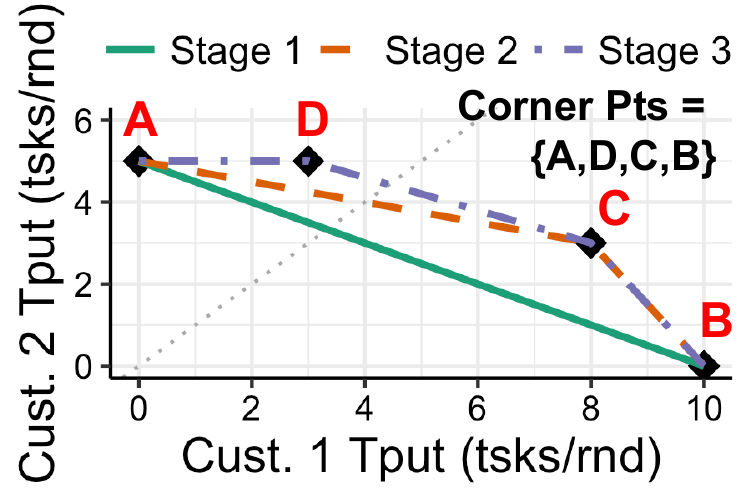}
		\caption{Finding support allocations.}
		\label{fig:design:stages}
	\end{subfigure}
  \vspace{-10pt}
  \caption{Using a blackbox VRP solver as a building block, \name runs an iterative search algorithm to find the support allocations.}
  \vspace{-10pt}
  \label{fig:design:trace}
\end{figure}

\vspace{3pt}
\paragraphb{Searching on the convex boundary.}
Recall that, for strict fairness, the target allocation is the point where the $y=x$ line intersects the convex boundary for the current round (\S\ref{sec:frontier}). At the start of the search, \name does not yet know the convex boundary, so it cannot know the target allocation. 
To find allocations on the convex boundary, \name employs an iterative search algorithm, analogous to binary search; in each stage, it tries to find a new allocation on the convex boundary in the direction of the $y=x$ line.
\name begins the search with allocations along the customer axes. For two customers, it begins with weights $\mathbf{w_1}$ and $\mathbf{w_2}$ above, which gives two allocations on the convex boundary. 
In each stage of the search, \name computes a new weight vector, using allocations found on the convex boundary in the previous stage, in order to find a new allocation on the convex boundary. It terminates when no new allocation can be found. By searching in the right direction, \name only needs to compute a subset of corner points on the convex boundary.

To better illustrate the algorithm, consider the example in \fig~\ref{fig:design:stages}, with 2 customers.
\name starts the search by looking at the two extreme points on the customer 1 ($x_1$) and customer 2 ($x_2$) axes, which correspond to prioritizing all vehicles for either customer. So in stage 1, \name computes these schedules, using the weight vectors $\mathbf{w_1} = (1,0)$ and $\mathbf{w_2} = (0,1)$, which give the allocations  $A$ and $B$, respectively, in \fig~\ref{fig:design:stages}. After stage 1, $\{A, B\}$ is the current set of corner points determining the convex boundary.

In the next stage, \name computes a new weight $\mathbf{w}$ to continue the search in the direction normal to $\overline{AB}$ (\fig~\ref{fig:design:search}). 
Let the equation for the face $\overline{AB}$ be $w_1x_1 + w_2x_2 = c$, where $w_1$, $w_2$, and $c$ can be derived
using the known solutions on the line, $A$ and $B$. 
So, by invoking the VRP solver (\eqn{\ref{eqn:vrp:obj}}) with $\mathbf{w} = (w_1, w_2)$, we try to find a schedule on the convex boundary, with the highest throughput in the direction normal to $\overline{AB}$. Let $\hat{x}_1$ and $\hat{x}_2$ be the throughputs for the schedule computed with weight $\mathbf{w}$. If $(\hat{x}_1, \hat{x}_2)$ lies above this line, \ie $w_1 \hat{x}_1 + w_2 \hat{x}_2 > c$, then the point $(\hat{x}_1, \hat{x}_2)$ is a valid extension to the convex boundary. In this example, \name finds a new allocation $C$; so, the new set of corner points is $\{A, C, B\}$.

Notice that this extension in stage 2 creates two new faces on the convex boundary, $\overline{AC}$ and $\overline{CB}$. But, the $y=x$ line only passes through $\overline{AC}$. So, in stage 3, \name continues the search, extending $\overline{AC}$ by the computing the weights as described above (normal to $\overline{AC}$), and discovers a new allocation $D$. Finally, \name tries to extend the face $\overline{DC}$ because it intersects the $y=x$ line. It finds no valid extension, and so, it terminates its search on the face $\overline{DC}$, and returns the support allocations $D$ and $C$.

\vspace{3pt}
\paragraphb{Generalizing to more customers.}
\name computes a weight for each customer $k \in K$, \ie $\mathbf{w} \in \mathbb{R}^{|K|}$. Faces on the convex boundary become $|K|$-dimensional hyperplanes, described by the equation $\sum_{k \in K} w_k x_k = c$. \name solves for $\mathbf{w}$ using the $|K|$ allocations that define each face, and finds $|K|$ support allocations at the end of the search. 
Recall from the example in \S\ref{sec:design:hull} that each stage produced 2 new faces and that \name only continued the search by extending 1 face. With $|K|$ customers, even with $|K|$ new faces after each stage, \name only invokes the VRP solver once to continue the search. A naive algorithm, by contrast, would require $|K|$ calls to the VRP solver in each stage. Thus \name scales easily with more customers by pruning the search space efficiently.

\subsection{Scheduling Over Rounds}
\label{sec:design:rounds}

In each round, \name finds $|K|$ support allocations, which determine the face of the convex boundary that contains the target allocation. It then selects a support allocation among these $|K|$ such that the per-customer long-term throughputs $\overline{x}_k(t)$ approach the target throughput. By tracking $\overline{x}_k(t)$ over many rounds, \name can select allocations that compensate for any unwanted bias introduced to some customer in a prior round.

Mathematically, to choose a schedule in round $t$, \name considers the effect of each support allocation $\mathbf{x}(t)$ on the average throughput $\overline{\mathbf{x}}(t+1)$. The average throughput is defined for each customer $k$ as $\overline{x}_k(t+1) = \gamma_t x_k(t) + (1-\gamma_t) \overline{x}_k(t),$ where $\gamma_t = 1/(t+1)$. \name chooses $\mathbf{x}(t)$ such that $\overline{\mathbf{x}}(t+1)$ is closest to the $y=x$ line (in Euclidean distance).

\begin{figure}
	\centering
    \begin{subfigure}[t]{0.23\textwidth}
		\centering
		\includegraphics[scale=0.5]{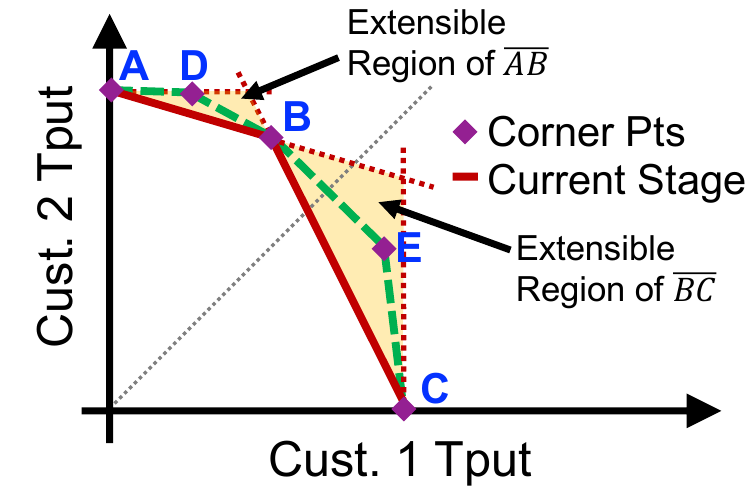}
		\caption{Extensible region of face $\overline{BC}$.}
		\label{fig:design:ext}
	\end{subfigure}
    \begin{subfigure}[t]{0.23\textwidth}
		\centering
		\includegraphics[scale=0.5]{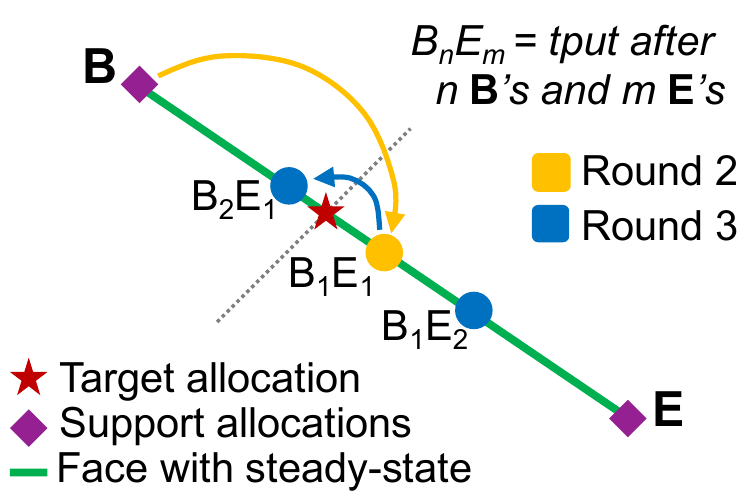}
		\caption{Throughputs in each round.}
		\label{fig:design:rounds}
	\end{subfigure}
  \vspace{-10pt}
  \caption{\name (a) finds the support allocations nearest the target allocation in each round, and (b) converges to the target allocation.}
  \vspace{-10pt}
  \label{fig:design:opt}
\end{figure}

\subsection{Optimality of \name}
\label{sec:design:opt}

\paragraphb{\name is optimal in a round.}
We can prove that \name finds the support allocations nearest the target throughput (in Euclidean distance).
We illustrate this through the example in \fig~\ref{fig:design:ext}, where the corner points of the convex boundary are $\{A, D, B, E, C\}$, and $B$ is closest to the target allocation.
In the previous stage, \name discovered $B$, and it needs to pick one face to continue the search. The shaded yellow regions indicate the extensible regions of the two candidate faces $\overline{AB}$ and $\overline{BC}$. The extensible region of a face describes the space of allocations that can be obtained by searching with the weight vector that defines that face, while maintaining a convex boundary (\S\ref{sec:design:hull}). 
Since \name finds a new allocation on the convex boundary in every stage of the search, no allocation can exist outside these regions; otherwise, the resulting set of discovered allocations would no longer be convex. Thus, the best face for \name to continue the search is indeed $\overline{BC}$, because its extensible region is the only one that may contain a better allocation closer to the $y=x$ line. 
\ifarxiv
\app~\ref{app:opt:boundary} includes
\else
Our technical report~\cite{mobius} contains 
\fi
a formal proof that the optimal support allocation (\ie the allocation closest to the line $y=x$) is unique and that \name finds it. 

\vspace{3pt}
\paragraphb{Optimality over multiple rounds.} 
Under a static task arrival model, we can show that the schedules computed by \name achieve throughputs that are  optimal at the end of every round, i.e., the achieved throughput has the minimum distance possible to the target allocation after  each round. This model assumes the convex boundary remains the same across rounds. One way to realize this is to require (i) the vehicles return to their starting locations at the end of each round, and (ii) all tasks are renewed at the beginning of each round. We make these simplifying assumptions only for ease of analysis; our evaluation in \S\ref{sec:eval} does not use them.

We describe an intuition for this result below.
\ifarxiv
\footnote{See \app~\ref{app:opt:rounds} for a formal proof.}
\else
\footnote{See our technical report~\cite{mobius} for a formal proof.} 
\fi
Per the static task arrival model, the convex boundary is the same in \emph{each subsequent round}; therefore, \name finds the same support allocations in every round. By taking into account the long-term per-customer rates, $\overline{x}_k(t)$, \name oscillates between these support allocations in each round at the right frequency, such that $\overline{x}_k(t) \; \forall k \in K$ converges to the target allocation over multiple rounds. We illustrate this in \fig~\ref{fig:design:rounds}, which shows the support allocations $B$ and $E$. The face $\overline{BE}$ contains the target allocation, denoted by the star. Because \name oscillates between $B$ and $E$, the allocation $\left(\overline{x}_1(t), \overline{x}_2(t)\right)$ must lie along $\overline{BE}$. \name chooses $B$ in the first round because its throughput is closer than $E$ to the target allocation. In the second round, it chooses $E$, moving the average throughput to $B_1E_1$. In the third round, \name chooses $B$, moving the average throughput to $B_2E_1$. Notice that if it had instead chosen $E$ in the third round, the average throughput would be $B_1E_2$, which is further away from the target throughput. Thus, this myopic choice between $B$ and $E$ results in the closest solution to the target allocation after any number of rounds. Additionally, notice that the length of the jump (\eg from $B$ to $B_1E_1$ and from $B_1E_1$ to $B_2E_1$) decreases in each round; therefore, \name \emph{converges} to the target throughput.

\subsection{Implementation}
\label{sec:design:impl}

We implement the core \name scheduling system in 2,300 lines of Go.\footnote{github.com/mobius-scheduler/mobius} It plugs directly with external VRP solvers implemented in Python or C++~\cite{gurobi, google-vrp}. \name exposes a simple, versatile interface to customers, which we call an \im. An \im consists of a list of desired tasks, where each task includes a geographical location, the time to complete the task once the vehicle has reached the location, and a task deadline (if applicable). In each round, \name gathers and merges \ims from all customers, before computing a schedule. At the end of each round, it informs the customers of the tasks that have been completed, and customers can submit updated \ims. \Ims serve as an abstraction for \name to ingest and aggregate customer requests; however, the merged \im is directly compatible with standard weighted VRP formulations~\cite{pctsp,pdptw} without modification. Thus, \name acts as an interface between customers and vehicles, using a VRP solver as a primitive in its scheduling framework (\fig~\ref{fig:overview}).

\vspace{3pt}
\paragraphb{Bootstrapping VRP solvers.}
Since the VRP is NP-hard~\cite{vrp}, solvers resort to heuristics to optimize \eqn{\ref{eqn:vrp:obj}}. In practice, we find that state-of-the-art solvers do not compute optimal solutions; however, we can aid these solvers with initial schedules that the heuristics can improve upon. We warm-start the VRP solvers with initial schedules generated by the following policies: (i) maximizing throughput, (ii) dedicating vehicles (assuming a sufficient number of vehicles), and (iii) a greedy heuristic that maximizes our utility function (\S\ref{sec:alpha}).
\ifarxiv
\footnote{\app~\ref{app:greedy} describes this heuristic in detail.}
\else
\footnote{Our technical report~\cite{mobius} includes a detailed description of this heuristic.}
\fi
At the beginning of each round, \name builds a suite of warm start solutions. Then, prior to invoking the VRP solver with some weight vector $\mathbf{w}$, \name chooses the initial schedule from its warm start suite with the highest weighted throughput (\ie objective of \eqn{\ref{eqn:vrp:obj}}). \name also caches the schedules found from all invocations to the VRP solver (\S\ref{sec:design:hull}), to use for warm start throughout the round. \name parallelizes all independent calls to the VRP solver (\eg when computing warm start schedules and when generating $|K|$ schedules to initialize the search along the convex boundary).

    \section{Generalizing to \texorpdfstring{$\mathbf{\alpha}$}{a}-Fairness}
\label{sec:alpha}

The fairness objective we have considered so far aims to provide all customers with the same long-term throughput (maximizing the minimum throughput). However, an operator of a mobility platform may be willing to slightly relax their preference for fairness for a boost in throughput.
To navigate throughput-fairness tradeoffs, we can generalize \name's algorithm (\S\ref{sec:design}) to optimize for a general class of fairness objectives. We use the $\alpha$-parametrized family of utility functions $U_\alpha$, developed originally to characterize fairness in computer networks~\cite{prop-fair}:
\begin{equation}
\label{eqn:utility}
U_\alpha(\mathbf{y}) = \sum_{k \in K} \frac{{y_k}^{1-\alpha}}{1-\alpha},
\end{equation}
where $\mathbf{y} \in \mathbb{R}^{|K|}$ and $y_k$ is the throughput of customer $k$ (either short-term $x_k$ or long-term $\overline{x}_k$). $U_\alpha$ captures a general class of concave utility functions, where $\alpha \in \mathbb{R}_{\geq 0}$ controls the degree of fairness. For instance, when $\alpha = 0$, the utility simplifies to the throughput-maximizing objective defined in Equation~\ref{eqn:vrp:obj} (assuming all customers have the same weight). By contrast, when $\alpha \to \infty$, the objective becomes maximizing the minimum customer's throughput (\ie max-min fairness). $\alpha = 1$\footnote{$U_\alpha$ is not defined at $\alpha = 1$, so we take the limit as $\alpha \rightarrow 1$.} corresponds to proportional fairness, where the sum of log-throughputs of all customers is maximized; this ensures that no individual customer's throughput is completely starved.

\vspace{3pt}
\paragraphb{Generalizing \name's search algorithm.}
When \name generalizes to $\alpha$-fairness, the target allocation is no longer simply the point on the convex boundary that intersects the $y=x$ line. The target allocation is instead the allocation on the convex boundary with the greatest utility $U_\alpha$. When searching the convex boundary in each round, \name determines which candidate face contains the target throughput by using Lagrange Multipliers to find the point \emph{along the face}
\ifarxiv
\footnote{\app~\ref{app:lm} shows how to find the face containing the target throughput.}
\else
\footnote{Our technical report~\cite{mobius} shows how to find the face containing the target throughput.}
\fi
with the greatest utility. Once it finds each support allocation $\mathbf{x}$, \name incorporates the historical throughput $\mathbf{\overline{x}}$ to select the schedule with greatest cumulative utility $U_\alpha\left(\gamma_t \mathbf{x}(t+1) + (1- \gamma_t) \mathbf{\overline{x}}(t)\right)$, where $\gamma_t$ is as defined in \S\ref{sec:design:rounds}.

\vspace{3pt}
\paragraphb{An example.}
\begin{figure}
  \centering
  \includegraphics[scale=0.63]{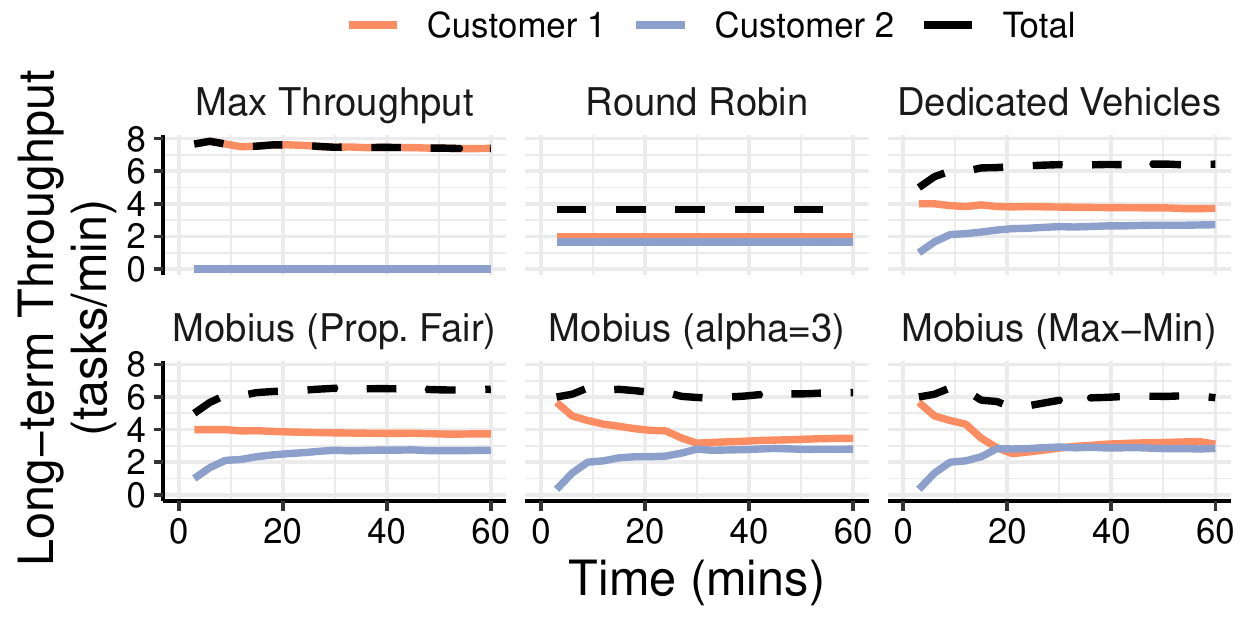}
  \vspace{-10pt}
  \caption{\name can tune its allocation to deliver proportional fairness ($\alpha=1$) and max-min fairness (approximated with $\alpha=100$).}
  \vspace{-10pt}
  \label{fig:alpha:ts}
\end{figure}
\fig~\ref{fig:alpha:ts} shows a time-series chart of long-term customer and platform throughputs for the example described in \S\ref{sec:frontier:dynamics}. By adapting to different schedules on the convex boundary, \name converges to a fair allocation of rates without degrading total throughput. $\alpha$ allows \name to compute \emph{expressive schedules}; for instance, $\alpha=1$ strives to maximize total throughput without starving either customer. Additionally, \name (max-min)\footnote{\name approximates max-min fairness ($\alpha \to \infty$) with $\alpha = 100$.} converges to a fair allocation of long-term throughputs within 20 minutes.

    \section{Real-World Evaluation}
\label{sec:eval}

We evaluate \name using trace-driven emulation (\S\ref{sec:eval:setup}) in two real-world mobility platforms. In \S\ref{sec:eval:lyft}, we apply \name to  Lyft ridesharing in Manhattan and demonstrate that it scales to large online problems. In \S\ref{sec:eval:drone}, we deploy \name on a shared aerial sensing system, involving multiple apps with diverse spatiotemporal preferences. Our evaluation focuses on answering the following questions:
\begin{itemize}
\item How does \name compare to traditional approaches in online scheduling for large-scale mobility problems?
\item How robust is \name in the presence of dynamic spatiotemporal demand from customers?
\item How can we tune \name's timescale of fairness?
\item What other benefits does \name provide to customers, beyond optimizing per-customer throughputs?
\end{itemize}

\subsection{Online Trace-Driven Emulation}
\label{sec:eval:setup}

We implement a trace-driven emulation framework to compare \name against other scheduling schemes, under the same real-world environment. This framework replays timestamped traces of requests submitted by each customer, by streaming tasks to the scheduler as they arrive, and sending task results back to the customer.

\vspace{3pt}
\paragraphb{Capturing environment dynamics and uncertainty.}
To emulate dynamic customer demand, our emulation framework streams tasks according to the timestamps in the trace---so \name has no visibility into future tasks. To emulate uncertainty in customer demand, we cancel tasks that are not scheduled in 10 minutes. Additionally, the case studies in \S\ref{sec:eval:lyft} and \S\ref{sec:eval:drone} consider scenarios where \emph{at least} one customer is backlogged (defined in \S\ref{sec:motivation}). If no customers are backlogged, then the platform can fulfill all tasks within the planning horizon, and the resulting schedule would have maximal throughput and fairness. Thus, the problems are only interesting when at least one customer is backlogged; \name is effective and required only in such situations.

\vspace{3pt}
\paragraphb{Backend VRP solver.} 
We use the Google \ortools package~\cite{google-vrp} as our backend weighted VRP solver (\eqn{\ref{eqn:vrp:obj}}). \ortools is a popular package for solving combinatorial optimization problems, and supports a variety of VRP constraints, including budget, capacity, pickup/dropoff, and time windows. Our case studies involve VRPs with different sets of constraints. We run our experiments on an Amazon EC2 \texttt{c5.9xlarge} instance with 36 CPUs.

\vspace{3pt}
\paragraphb{Baselines.}
In our experiments, we evaluate \name's throughput and fairness against two baseline routing algorithms: (i) a max throughput scheduler, and (ii) dedicated vehicles. The max throughput scheduler simply runs the backend VRP solver on the same input of customer tasks fed into \name for a round. This solution provides a benchmark on the platform capacity, and quantifies the maximum achievable total throughput.
We compute the ``dedicated vehicles'' schedule by first distributing the vehicles evenly among all customers,\footnote{Dedicating vehicles is most suitable when the number of vehicles is a multiple of the number of customers.} and then invoking the max throughput scheduler once for each customer. This solution provides a benchmark schedule that divides vehicle time equally among all customers.
As shown in \S\ref{sec:motivation}, round-robin scheduling achieves very low throughput; hence we omit it from the results in this section. 

To the best of our knowledge, \name is the first algorithm that explicitly optimizes for customer fairness in mobility platforms. We considered evaluating \name by running a scheduler that optimizes throughput and fairness over a longer timescale using a mixed-integer linear program solver (\eg Gurobi~\cite{gurobi} or CPLEX~\cite{cplex}); however, this is not feasible in practice, because (i) customer demands arrive in a streaming fashion, and (ii) these solvers do not scale beyond tens of tasks~\cite{molina2014multi}. Thus, we believe the baselines described above offer reasonable comparisons for \name.

\vspace{3pt}
\paragraphb{Microbenchmarks.}
In addition to the real-world case studies (\S\ref{sec:eval:lyft}-\S\ref{sec:eval:drone}), we also evaluate \name on microbenchmarks created from synthetic customer demand, including scenarios where \name is optimal (under the static task arrival model, \S\ref{sec:design:opt}). We compare \name against max throughput, dedicating vehicles, and round robin, and show, through controlled experiments, that (i) it provides provably good throughput and fairness for a variety of spatial demand patterns, (ii) it scales for different numbers of vehicles, (iii) it controls its timescale of fairness, and (iv) it can tune its fairness parameter $\alpha$. We also report the runtime of \name in various environments. We include these results in
\ifarxiv
\app~\ref{app:runtime} and \app~\ref{app:eval:synthetic}.
\else
our technical report~\cite{mobius}.
\fi

\subsection{Case Study: Lyft Ridesharing in Manhattan}
\label{sec:eval:lyft}

\paragraphb{Setting.}
Motivated by the issue of ``destination discrimination''~\cite{middleton2018discrimination,uber-dd,uber-discrimination} discussed in \S\ref{sec:intro}, we consider a ridesharing service that receives requests from different neighborhoods (customers) in a large metro area. Some neighborhoods are easier to travel to than others, and rider demand out of a neighborhood can vary with the time of day. We show that \name can guarantee a fair quality-of-service (in terms of max-min fair task fulfillment) to all neighborhoods throughout the course of a day, without significantly compromising throughput. We also show that, although it optimizes for an equal allocation of throughputs, \name does not degrade other quality-of-experience metrics, such as rider wait times. We further demonstrate that \name is a scalable \emph{online} platform that generates schedules for a large city-scale problem.

\begin{figure}[t]
	\centering
	\includegraphics[scale=0.65]{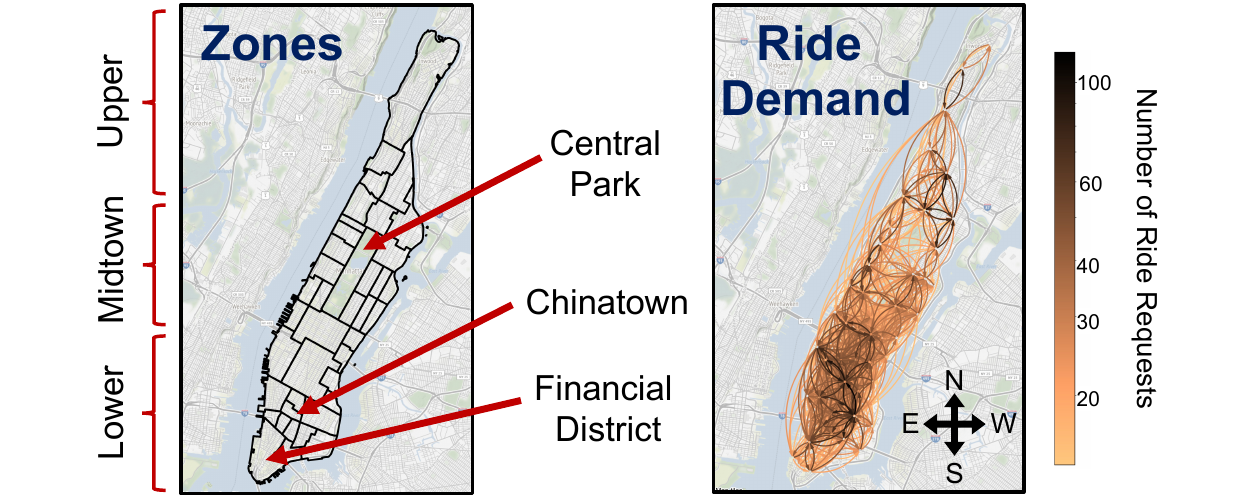}
    \vspace{-10pt}
	\caption{Maps of zones (customers) and demand in Manhattan, indicating skews in both spatial coverage and volume of ride requests.}
    \vspace{-10pt}
	\label{fig:eval:lyft-demand}
\end{figure}

\vspace{3pt}
\paragraphb{Ridesharing demand.}
We use a 13-hour trace of 16,817 timestamped Lyft ride requests, published by the New York City Taxi and Limousine Commission, involving 40 neighborhoods (zones) in Manhattan over the course of a day~\cite{lyft-data}.
Each request consists of a pickup and a dropoff zone, and we seek to provide pickups from all zones equitably. The map in \fig~\ref{fig:eval:lyft-demand} (left) demarcates the customer zones.

\fig~\ref{fig:eval:lyft-demand} (right) illustrates the scale of this scheduling problem. It visualizes traffic on the top 1,000 (out of 3,300) pickup-dropoff pairs; the color of each arrow indicates the volume of ride requests for that pickup-dropoff location. Notice that both the distance of rides and the volume of requests originating from zones vary vastly throughout the island. A significant fraction of requests arrive into and depart from Lower Manhattan. Some zones in Upper Manhattan have as few as 15 unique outbound trajectories, while other zones have hundreds.

Moreover, ridesharing demand varies significantly with the time of day. For instance, a busy zone near Midtown Manhattan sees the load vary from around 200 to 600 requests/hour, and a quiet zone near Central Park experiences a minimum load of 3 requests/hour and peak load of 24 requests/hour. Notice that the dynamic range of demand throughout the 13 hours also varies across zones.

\vspace{3pt}
\paragraphb{Experiment setup.}
This ridesharing problem maps to the capacitated pickup/delivery VRP formulation~\cite{pdptw}. It computes schedules that maximize the total number of completed rides, such that (i) a ride's pickup and dropoff are completed on the same vehicle, and (ii) each vehicle is completing at most one ride request at any point in time. We configure the solver to retrieve real-time traffic-aware travel time estimates from the Google Maps API~\cite{gmaps}, and we constrain \ortools to report a solution within 3 minutes.

We use the trace described above in our emulation framework (\S\ref{sec:eval:setup}). We compute schedules for a fleet of 200 vehicles.\footnote{The number of vehicles does not matter, since we compare \name to the platform capacity (from the max throughput scheduler).} In order to ensure that the schedules are not myopic, we plan our routes with 45-minute horizons; however, to reduce rider wait times, we recompute the schedule every 10 minutes, while ensuring that we honor any requests that we have already committed to in the schedule. We assume that riders cancel requests that are not incorporated into a schedule within 10 minutes of the request time. 

\vspace{3pt}
\paragraphb{Fairness with high vehicle utilization.}
\begin{figure}[t]
	\centering
	\includegraphics[scale=0.65]{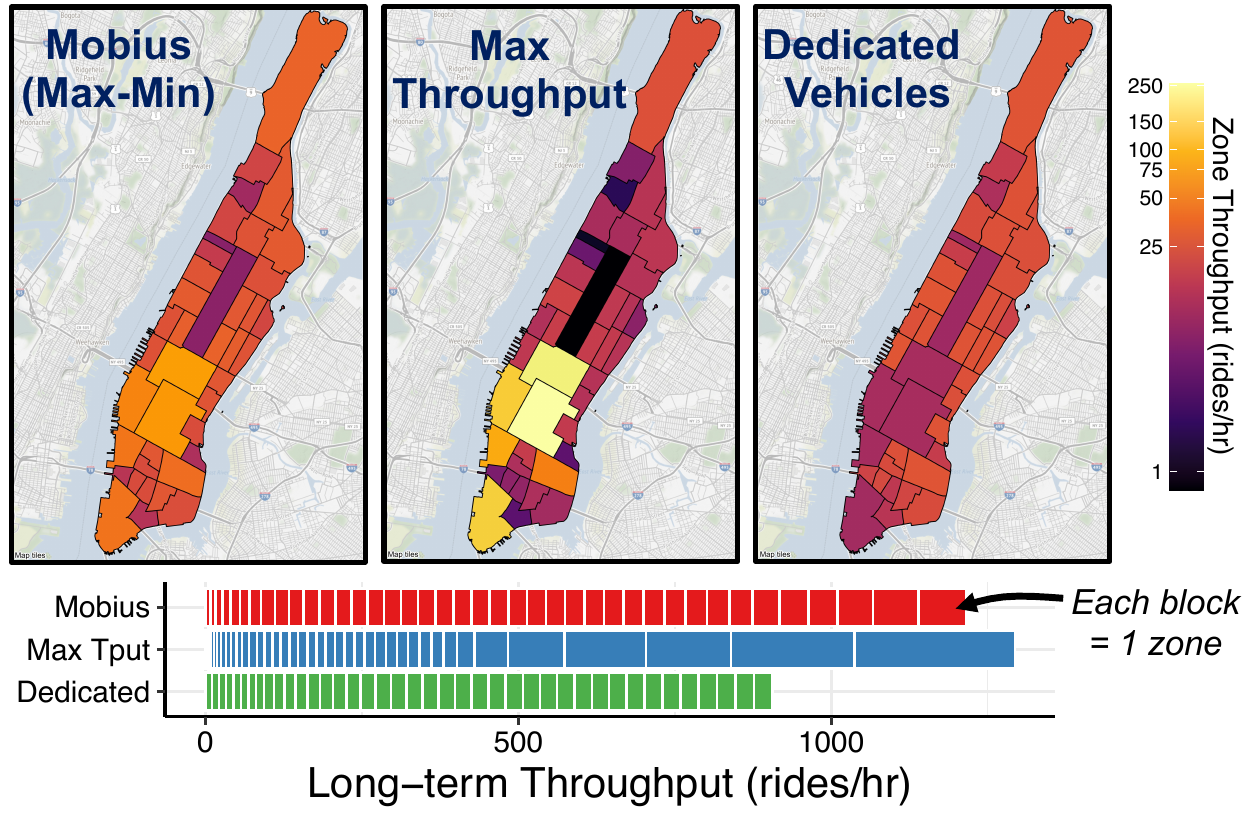}
	\vspace{-10pt}
	\caption{Long-term throughputs for zones in Manhattan after 13 hours. A good scheduler should have a stacked plot with large evenly-sized blocks, and a map with bright (high throughput) and homogeneous (fair) colors across zones.}
    \vspace{-10pt}
	\label{fig:eval:lyft}
\end{figure}
Since \name plans continuously, having several allocations on the convex boundary at its disposal, we expect it to converge to a fair allocation of rates, despite the skew in demand. \fig~\ref{fig:eval:lyft} shows the long-term throughputs achieved for each zone by different scheduling algorithms, after 13 hours. The color of each zone in the map indicates that zone's throughput. Bright colors correspond to high throughput, and a homogeneous mix of colors indicates a fair allocation. Beneath the maps, we also stack the zone throughputs to indicate how each scheduler divides up the total platform throughput across the zones; ideally we would like large, evenly-sized blocks.

The max throughput scheduler divides the platform throughput most unevenly across zones. In particular, we see that while it serves nearly 200 rides/hour out of the Financial District (Lower Manhattan), it virtually starves zones near Central Park. From the demand map (\fig~\ref{fig:eval:lyft-demand}), notice that (i) a majority of rides originate from Lower Manhattan, and (ii) most of these trips are destined for neighboring zones. Thus, the best policy to maximize the total number of trips completed is to stay in Lower Manhattan, which is what the max throughput scheduler does.

The bar chart indicates that dedicating 5 vehicles to each zone results in 40\% lower platform throughput than the max throughput scheduler. This is because a heterogeneous demand across zones cannot be effectively satisfied by an equal division of resources (vehicles).
Nevertheless, \fig~\ref{fig:eval:lyft} shows that this scheduler shares the platform throughput most evenly across zones. The division of per-zone throughputs is not perfectly even, in spite of dedicating an equal number of vehicles, because (i) ride requests from different zones can have different trip lengths, and (ii) some zones have inherently low demand and do not backlog the system, leaving some vehicles idle. 

By contrast, \name strikes the best balance between throughput and fairness. It achieves roughly equal zone throughputs, while compromising only 10\% of the maximum platform throughput. 
Compared to dedicating vehicles, we see, from the map, that \name achieves higher throughput for most zones by identifying an incentive to chain requests from different zones. For example, \name combines two requests from different zones into the same trip, when the dropoff of the first request is close to the pickup of the second request. While this helps improve efficiency, \name also prioritizes pickups from zones with a historically low throughput to ensure fairness across zones.
This ridesharing simulation reveals that  it is possible to achieve a fair allocation of rates in a practical setting \emph{without} significantly degrading platform throughput. 

\vspace{3pt}
\paragraphb{Controlling the timescale of fairness.}
\begin{figure}[t]
	\centering
	\includegraphics[scale=0.65]{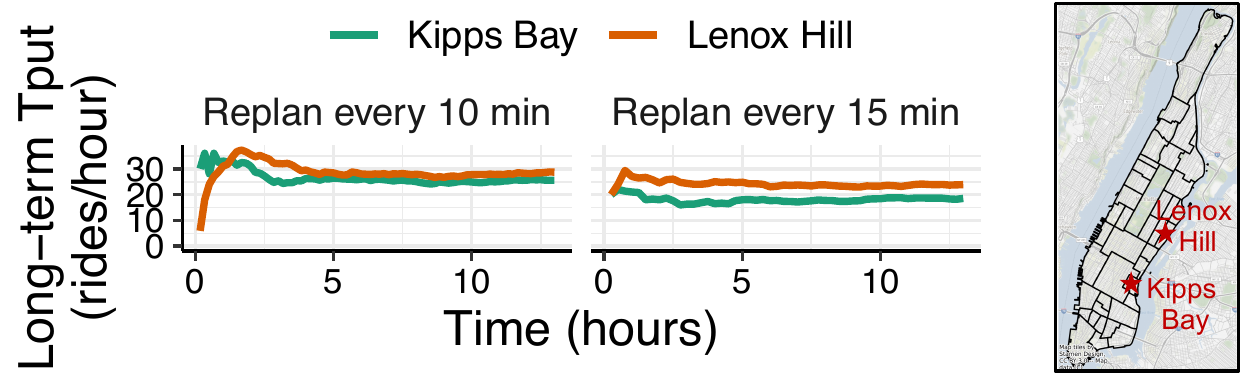}
	\vspace{-10pt}
	\caption{Time series of long-term throughputs for two zones for different replanning horizons. Frequent replanning ensures fairness (equal throughputs) at shorter timescales.}
    \vspace{-10pt}
	\label{fig:eval:lyft-timescale}
\end{figure}
\name's replanning interval controls the timescale over which it is fair. The more often that \name replans, the more up-to-date its record of long-term customer throughputs; \name can then adapt to short-term unfairness quickly by finding a more suitable schedule on the convex boundary. Recall that when replanning frequently, the convex boundary does not change drastically between scheduling intervals (\S\ref{sec:frontier:dynamics}), if the spatial distribution of tasks do not change rapidly with time. So, in practice, we do not expect to deviate far from the ideal target throughput. \fig~\ref{fig:eval:lyft-timescale} shows the long-term throughputs achieved for two zones, for replanning timescales of 10 minutes and 15 minutes. \name  equalizes throughputs better when it replans more frequently.

\vspace{3pt}
\paragraphb{Rider wait times.}
\begin{figure}[t]
	\centering
	\includegraphics[scale=0.65]{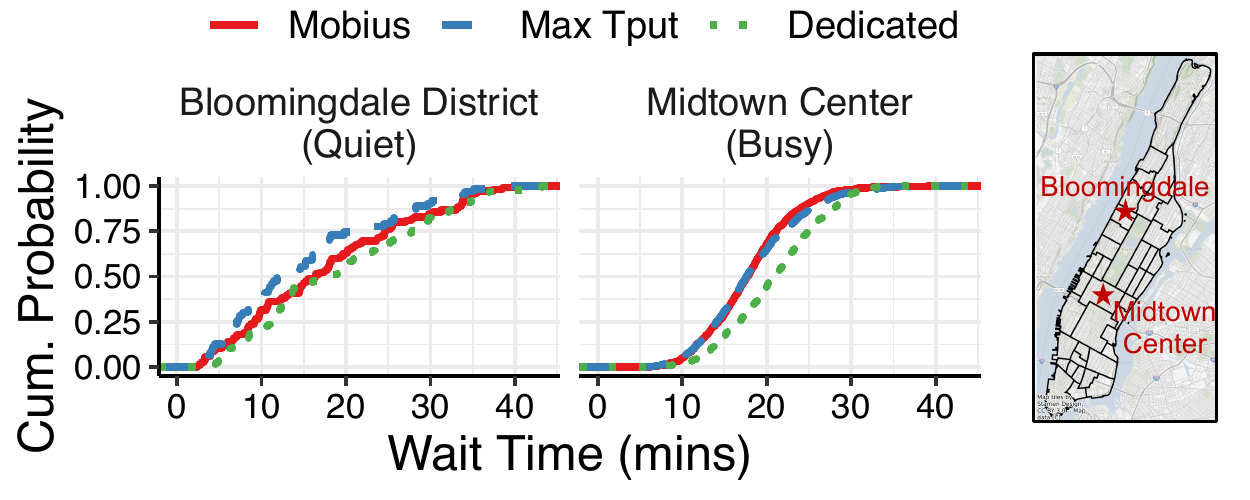}
	\vspace{-10pt}
	\caption{Distributions of rider wait times for two zones. Even though \name compromises some throughput for fairness, it delivers similar wait times as the max throughput scheduler.}
    \vspace{-10pt}
	\label{fig:eval:lyft-wt}
\end{figure}
Platform operators prefer high throughput schedules because they translate directly to high revenue; low throughput would lead to more cancelled rides. 
While \fig~\ref{fig:eval:lyft} demonstrates that \name is fair without degrading throughput, we would like to know if optimizing for fairness impacts rider wait time (\ie the time between requesting a ride and being picked up).

\begin{figure*}
  \centering
  \includegraphics[scale=0.69]{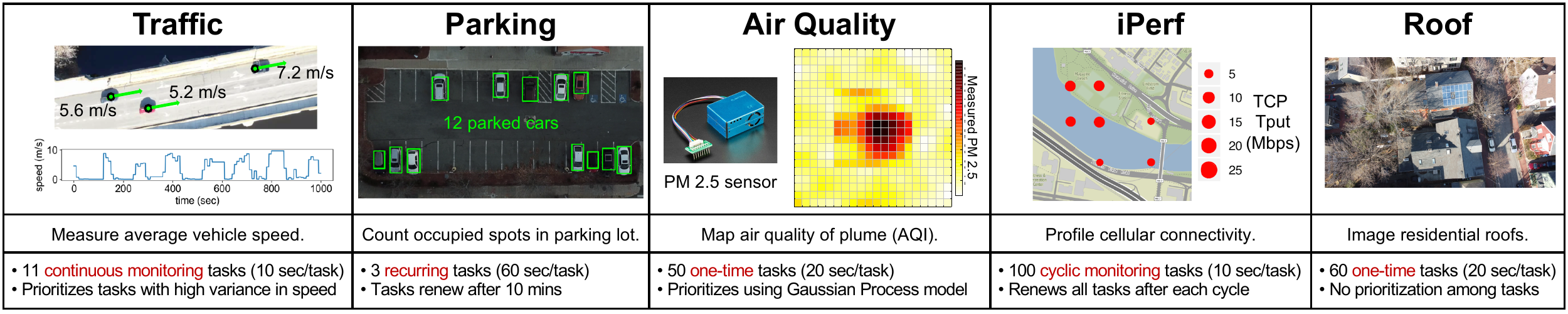}
  \vspace{-10pt}
  \caption{Summary of aerial sensing applications, which span a variety of spatial demand and reactive/continuous sensing preferences. We collected ground truth data for each of these applications using real drones, and created traces to evaluate \name.}
  \vspace{-10pt}
  \label{fig:eval:drone-apps}
\end{figure*}

\fig~\ref{fig:eval:lyft-wt} compares the distributions of rider wait times for rides originating from Bloomingdale District (a quiet neighborhood west of Central Park) and from Midtown Center (a busy district near Times Square). We compute wait times are only for fulfilled tasks.
Notice that in both zones---with two very different demand patterns---the distribution of wait times for \name is comparable to that of the max throughput scheduler. 

We observe that the wait times in the quiet zone are slightly higher for \name (average of 17 minutes, compared with 15 minutes for max throughput). This is because the wait times for \name are computed for significantly more tasks (\name fulfills 66.7\% more ride requests than does max throughput). 
The schedule that dedicates vehicles sees higher wait times than \name, especially when rides originate from a busy zone (\eg Midtown Center), since vehicles would be idle until they return to their assigned zone to pick up a new rider. 

\vspace{3pt}
\paragraphb{Scalability.}
This case study demonstrates that \name is practical at an urban scale. In fact, when scheduling its fleet of taxis, New York City's Yellow Cab restricts its scheduling region to Manhattan and organizes its requests according to approximately 40 taxi zones~\cite{nyc-tlc, vrp-largescale}. In our experiments, the backend VRP solver (\ie max throughput scheduler) computes each 45-minute schedule in 3 minutes (capped by the timeout). We observe that \name takes 5-6 minutes; \name sees a speedup by (i) parallelizing calls to the VRP solver and (ii) warm-starting the VRP solver with initial schedules (\S\ref{sec:design:impl}). These optimizations help \name easily scale to tens of thousands of tasks. We believe we can further improve the speed by leveraging parallelism in the backend VRP solver~\cite{vrp-parallel} (\ortools does not expose a multi-threaded solver). 

\subsection{Case Study: Shared Aerial Sensing Platform}
\label{sec:eval:drone}

\paragraphb{Setting.}
The recent proliferation of commodity drones has generated an increased interest in the development of aerial sensing and data collection applications~\cite{suduwella-mosquitoes, TrackIO, mao-followme, Allison-wildfire, las-surveillance, mersheeva-surveillance}, as well as general-purpose drone orchestration platforms~\cite{beecluster,voltron,petrolo-astro}. An emerging mobility platform is a drones-as-a-service system~\cite{androne, farmbeats, flytos, uav-service, uav-cloud}, where developers submit apps to a platform that deploys these app tasks on a shared fleet of drones. App (customer) semantics in a drone sensing platform can show significant heterogeneity in both space and time.
To ensure a satisfactory \qos for all applications, a scheduler must not only efficiently multiplex tasks from different applications in each flight (typically constrained to 20 minutes due to the battery life~\cite{dji}), but also share task completion throughput equitably across apps. Since apps can be reactive (\ie sensing preferences change as apps receive measured data), \name must additionally provide a sustained rate-of-progress to each app, as opposed to ``bursty'' throughput.

\begin{figure}
  \centering
  \includegraphics[scale=0.5]{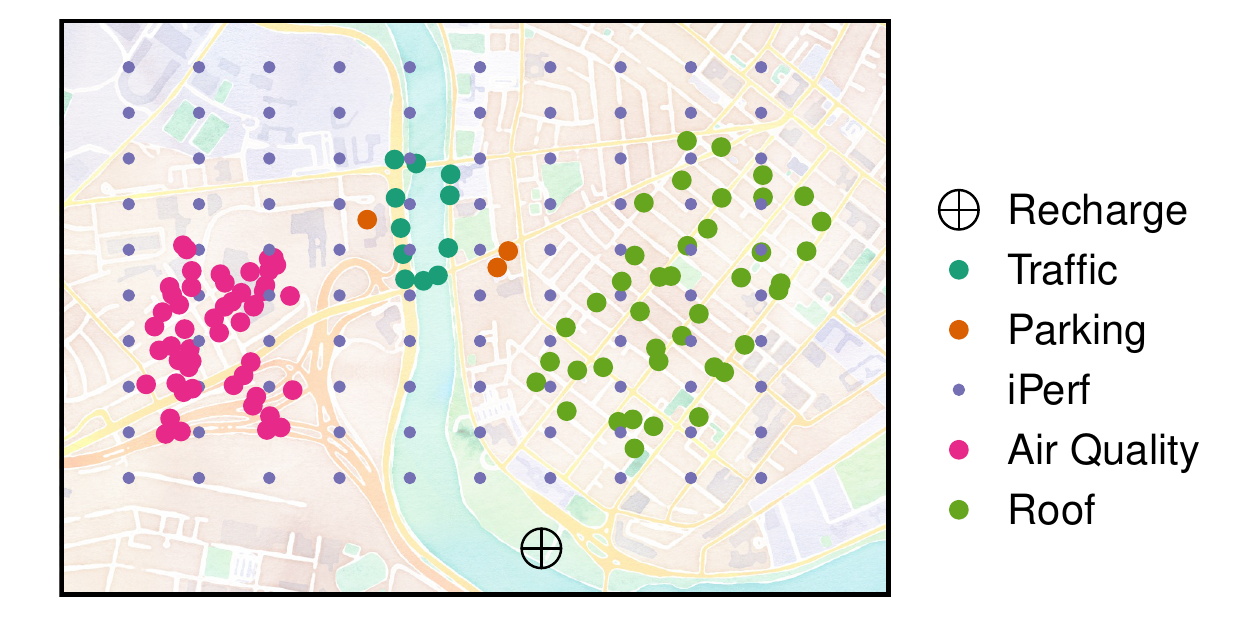}
  \vspace{-10pt}
  \caption{Map of tasks for 5 aerial sensing apps, spanning a 1 square mile area in Cambridge, MA. \name replans every 5 minutes, in order to incorporate new requests. Each drone returns to recharge every 15 minutes.}
  \vspace{-10pt}
  \label{fig:eval:drone-map}
\end{figure}

\vspace{3pt}
\paragraphb{Sensing apps.}
We implement 5 popular urban sensing apps to evaluate \name in this drones-as-a-service context, summarized in \fig~\ref{fig:eval:drone-apps}. \fig~\ref{fig:eval:drone-map} depicts the locations for the sensing tasks submitted by each app. We describe each app below:
\begin{itemize}
\item The \emph{Traffic app} continuously monitors road traffic congestion over 11 contiguous segments of road in an urban area. To measure average vehicle speed, it collects 10-second video clips at each road segment, detects all cars using YOLOv3~\cite{yolo}, and tracks the trajectory~\cite{opencv} of each vehicle. After gathering multiple initial samples at all 11 locations, the app prioritizes the locations with the highest variance in speed, in order to collapse uncertainty in its overall estimates of road congestion. 
\item The \emph{Parking app} counts parked cars at 3 sites, by monitoring each lot for 1 minute; to maintain fresh estimates of counts, this app renews these 3 tasks after 10 minutes. 
\item The \emph{Air Quality app} measures PM2.5 concentration around a plume~\cite{aqi}, submitting a candidate list of 100 one-time sampling locations. This app is also reactive; on receiving a measurement, it updates a Gaussian Process model~\cite{gaussianprocess} and cancels any unfulfilled tasks with high predicted accuracy. 
\item The \emph{iPerf app} builds a map of cellular coverage in the air, by profiling throughput at 100 spatially-dispersed locations. It renews all tasks after each cycle of 100 measurements is complete. 
\item The \emph{Roof app} submits 60 one-time tasks to image roofs over a residential area.
\end{itemize}
Notice that these apps collectively have a variety of spatiotemporal characteristics. For instance, the Traffic app changes its requests with time, based on the uncertainty in speed estimates and the freshness in measurements. By contrast, the Air Quality app changes its requests with space, using a statistical model to collapse uncertainty in a task based on nearby measurements. The iPerf app has no temporal preferences, and instead functions as a ``free-riding'' app that gathers quick measurements over a large area.

\vspace{3pt}
\paragraphb{Ground-truth data collection.} 
To run our drones-as-a-service platform on real-world sensor data, which is critical to the performance of the reactive and continuous monitoring apps, we separately gather 90 minutes of ground-truth data for each app, using real drones. This gives us a trace of timestamped measurement values of each app. We then use our trace-driven emulation framework (\S\ref{sec:eval:setup}) to evaluate different scheduling algorithms. \fig~\ref{fig:eval:drone-apps} shows highlights from our data collection. For example, to collect ground-truth for the Traffic app, we instrument 6 DJI Mavic Pros~\cite{dji} to continuously gather video and track cars over the 11 measurement locations (\fig~\ref{fig:eval:drone-map}) for 90 minutes. Similarly, for the iPerf and Air Quality apps, we program a DJI F450 drone~\cite{djif450} equipped with an LTE dongle and a PM2.5 sensor~\cite{aqi} to gather measurements at their respective measurement locations. We instrument our drone to communicate its location, battery status, and measurement data to a dashboard hosted on an EC2 instance, from which we observe the drone's progress on our laptop.

\vspace{3pt}
\paragraphb{Experiment setup.}
We configure our backend solver to estimate travel time as the Euclidean distance between the sensing tasks plus the sensing time for the destination task. In order to be sufficiently reactive to the Traffic and Air Quality apps, we schedule in 5-minute rounds, and require that the drones return to recharge their batteries every 15 minutes. We run our trace-driven emulation framework with 5 drones. Additionally, we configure the Roof app to join the system after 30 minutes.

\begin{figure}
  \centering
  \includegraphics[scale=0.65]{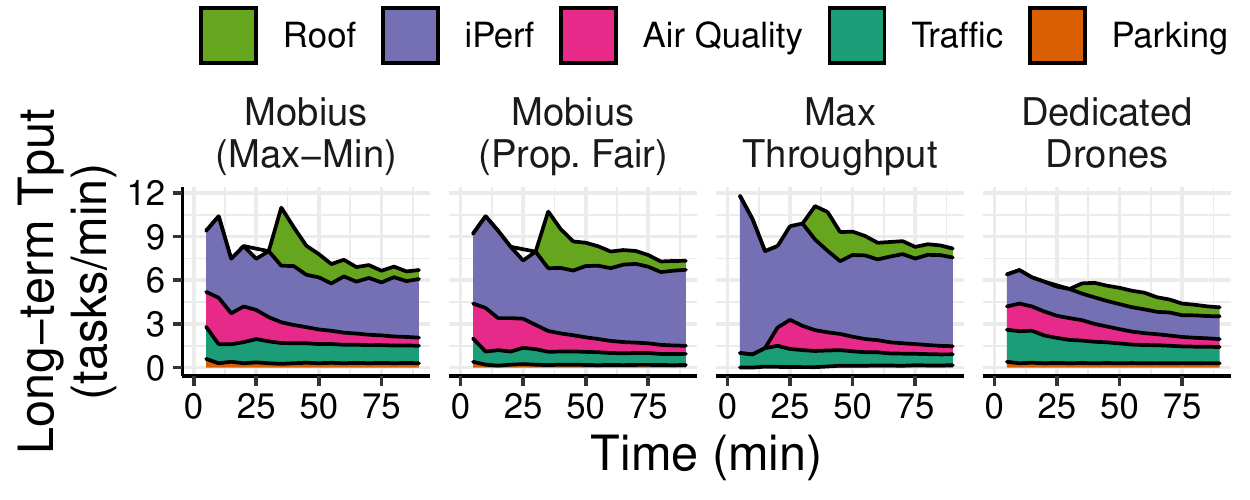}
  \vspace{-10pt}
  \caption{Long-term throughputs achieved over 90 minutes. \name achieves high throughput and best shares it amongst the apps.}
  \vspace{-10pt}
  \label{fig:eval:drone-thp}
\end{figure}

\vspace{3pt}
\paragraphb{High throughput, high fairness.}
To understand how \name divides the platform throughput, we show the long-term throughput for each app over 90 minutes in \fig~\ref{fig:eval:drone-thp}. 
\name (max-min) achieves 55\% more throughput than dedicating drones and only 15\% less throughput than maximizing throughput. 
\name with a proportional fairness objective similarly outperforms max throughput and dedicated vehicles in navigating the throughput-fairness tradeoff.  
Note that the throughputs of the Air Quality and Roof apps decay with time, after their one-time tasks are fulfilled.

Because these apps have variable demand (\eg 100 tasks for iPerf and 3 tasks for Parking), studying throughput is not sufficient. Hence, we plot the tasks completed as a fraction of demand for each app in \fig~\ref{fig:eval:drone-completion}. 
Notice that, under \name, even the most starved app (iPerf) completes nearly 34\% of its tasks; by contrast, max throughput and dedicated drones deliver worst-case task completions of 30\% and 13\%, respectively. Even though dedicating drones guarantees equal drone time for each app, it is extremely unfair toward apps with higher demand or more spatially-distributed tasks. 

\vspace{3pt}
\paragraphb{Impacts of sensing and travel times.}
\fig~\ref{fig:eval:drone-map} would suggest that the Air Quality and Roof tasks are easier to service, since their tasks are more spatially concentrated; however, their tasks take 20 seconds each (\fig~\ref{fig:eval:drone-apps}). The max throughput scheduler understands this tradeoff in terms of maximizing throughput, and thus prioritizes the iPerf app, since its 10-second tasks (\fig~\ref{fig:eval:drone-apps}) are cheap to complete. 
By contrast, \name additionally understands how to navigate this tradeoff in terms of fairness; for instance, it forgoes some iPerf tasks to complete more 20-second AQI measurements.

\begin{figure}
  \centering
  \includegraphics[scale=0.65]{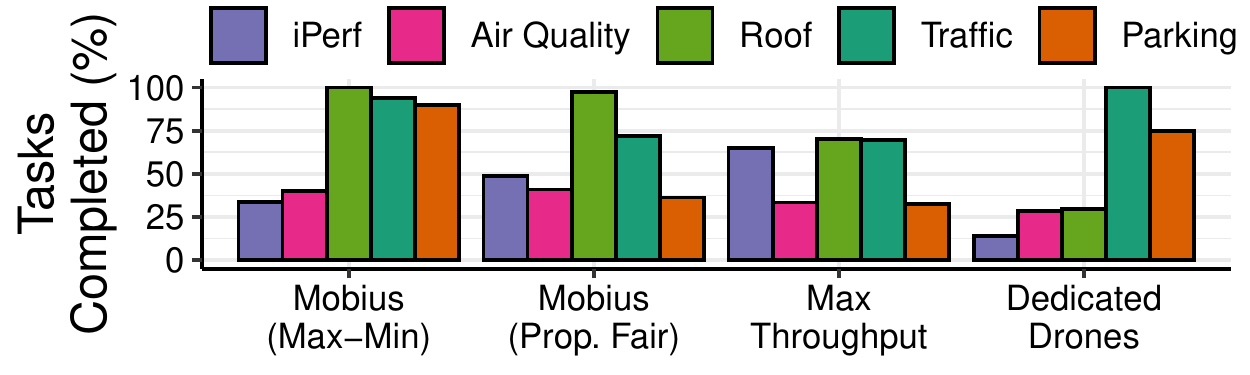}\vspace{-10pt}
  \caption{Percentage of tasks completed per app. \name fulfills nearly all requests for the Traffic and Parking apps, before allocating ``excess'' vehicle time to the more backlogged apps.}
  \vspace{-10pt}
  \label{fig:eval:drone-completion}
\end{figure}

\vspace{3pt}
\paragraphb{Reliable rate-of-progress.}
In enforcing either proportional or max-min fairness, \name does not starve any app, at any instant of time. Indeed, \fig~\ref{fig:eval:drone-thp} indicates that \name delivers a reliable rate-of-progress to the Air Quality app, gradually giving it roughly 3 tasks/min over the first 20 minutes. By contrast, the max throughput scheduler is more ``bursty'', and only services this app after 20 minutes. As a result, we find that, with \name, the root-mean-square error (RMSE) of the Gaussian Process model for the air quality drops more rapidly. 

\vspace{3pt}
\paragraphb{Catering to transient apps.}
\begin{figure}
  \centering
  \includegraphics[scale=0.65]{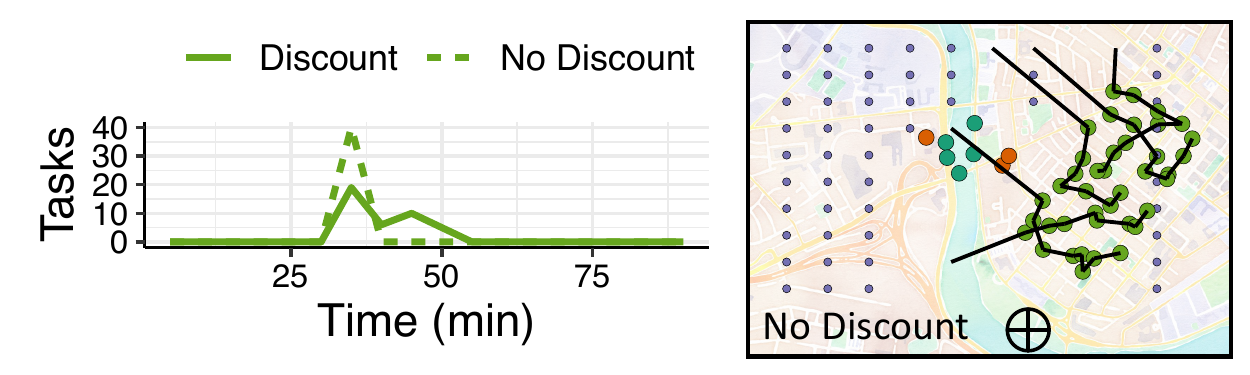}
  \vspace{-10pt}
  \caption{Discounting long-term throughput allows \name to gradually respond to the sudden presence of the transient Roof app, instead of dedicating all drones to it.}
  \vspace{-10pt}
  \label{fig:eval:drone-discount}
\end{figure}
Recall that the Roof app joins the platform after 30 minutes. \fig~\ref{fig:eval:drone-thp} indicates that \name rapidly adapts to this change in demand with a spike in throughput for the Roof app at the cost of lower throughput for the iPerf and Air Quality apps. Notice that this spike in \name's schedule is larger in magnitude than the one in the max throughput schedule. This is because \name realizes that, when the Roof app joins, it has no accumulated throughput, while other apps have amassed higher throughput from living in the system for longer. \fig~\ref{fig:eval:drone-discount} (right) plots the routes for all 5 drones during minutes 30-35; all drones immediately flock to the Roof app. With \name, an operator can choose to respond to the arrival of new apps by discounting throughput accumulated in prior rounds. \fig~\ref{fig:eval:drone-discount} (left) shows how \name can control the Roof app's rate of task fulfillment, with a discount factor of 0.1.

    \section{Related Work}
\label{sec:related}

\paragraphb{Shared mobility and sensing platforms.}
Ridesharing platforms rely on different flavors of the VRP; these systems have typically been interested in maximizing profit (\ie throughput)~\cite{vrp-emptycar, vrp-maxtpt}, minimizing the size of the fleet~\cite{vrp-minfleet}, and planning in an online fashion~\cite{vrp-largescale}. Similarly, there has been a large amount of recent work on drones-as-a-service platforms, which have primarily addressed challenges surrounding data acquisition~\cite{farmbeats}, multi-tenancy and security~\cite{androne}, and programming interfaces~\cite{voltron,beecluster}. All of these systems use a throughput-maximizing algorithm under the hood. \name is motivated by the advent of \emph{customer-centric} mobility platforms in a variety of domains, where guarantees on \qos to customers are paramount to the viability~\cite{uber-dd}
of these services~\cite{middleton2018discrimination}.

\vspace{3pt}
\paragraphb{Vehicle routing problem.}
The VRP has been extensively studied by the Operations Research community~\cite{vrp}. Many variants of the problem have been considered, ranging from the budget-constrained VRP~\cite{pctsp}, capacitated VRP~\cite{vrp-applications}, VRP with time windows~\cite{pdptw}, predictive routing under stochastic demands~\cite{beecluster,vrp-stochastic}, \etc Prior work has extended the VRP to consider multiple objectives, such as minimizing the variance in vehicle travel time or tasks completed by each vehicle~\cite{multiobj-vrp}. These load balancing objectives, however, do not consider customer-level fairness, which is the focus of \name. Moreover, \name abstracts out fairness from the underlying vehicle scheduling problem, making its techniques complementary to the large body of work on the VRP and its variants.

\vspace{3pt}
\paragraphb{Fair resource allocation in computer systems.}
Our approach to formalizing throughput and fairness in mobile task fulfillment is inspired by $\alpha$-fair bandwidth allocation in computer networks~\cite{prop-fair,numfabric}. However, as noted in \S\ref{sec:intro}, mobility platforms introduce new challenges around attributing cost to serve customers, that do not arise when addressing fairness in switch scheduling~\cite{fair-queuing}, congestion control~\cite{kellyfairness}, and multi-resource compute environments~\cite{drf}. \name 
develops a novel set of techniques to address these challenges.

    \section{Conclusion}
\label{sec:conclusion}

We developed \name, a scheduling system that can deliver both high throughput and fairness in shared mobility platforms. \name uses the insight that, when operating over rounds, scheduling on the convex boundary of feasible allocations, as opposed to the Pareto frontier, provably improves on fairness with time. We showed that \name can handle a variety of spatial and temporal demand distributions, and that it consistently outperforms baselines that aim to maximize throughput or achieve fairness at smaller timescales. Additionally, through real-world ridesharing and aerial sensing case studies, we demonstrated that \name is versatile and scalable.

There are several opportunities for extending the capabilities of \name. First, \name assumes that customers are not adversarial. Developing strategyproof mechanisms that incentivize truthful reporting of tasks by customers is an open problem. Second, we design \name to only balance customer throughputs. We believe the optimization techniques we developed (\S\ref{sec:design}) can be extended to support other platform objectives, such as task latency, vehicle revenue, and driver fairness. Finally, incorporating predictive scheduling, where the platform can strategically position vehicles in anticipation of future tasks, is an interesting direction for future work, as it can further increase platform throughput. 

    \begin{acks}
We thank Songtao He, Favyen Bastani, Sam Madden, our shepherd, and the anonymous MobiSys reviewers for their helpful discussions and thoughtful feedback. This research was supported in part by the NSF under Graduate Research Fellowship grant \#2389237. Any opinions, findings, and conclusions or recommendations expressed in this material are those of the authors and do not necessarily reflect the views of the National Science Foundation. The NASA University Leadership Initiative (grant \#80NSSC20M0163) provided funds to assist the authors with their research, but this article solely reflects the opinions and conclusions of its authors and not any NASA entity.
\end{acks}

    \clearpage
    \bibliographystyle{abbrv}
    \bibliography{ref}

    \ifarxiv
        \clearpage
        \appendix

\section{\name Scheduler Design}
\label{app:alg}

\subsection{Searching for \texorpdfstring{$\alpha$}{a}-Fair Allocation}
\label{app:lm}

\S\ref{sec:alpha} explains how \name generalizes its formulation to support a class of $\alpha$-fair objectives. Recall that, in each round, the target allocation is the allocation on the convex boundary with the greatest utility $U_\alpha$; \name finds the support allocations that are closest in utility to the target allocation. In each stage of the search, \name uses Lagrange Multipliers to identify the face containing the allocation that maximizes $U_\alpha$. Specifically, for a face described by the equation $\sum_{k \in K} w_k x_k = c$, \name computes the allocation $\left(x_1^*, \dots, x_{|K|}^*\right)$ with greatest utility, subject to the constraint that it lies on the face. The Lagrangian can be written as:
\[
\mathcal{L}(\mathbf{x}, \lambda) = U_\alpha(\mathbf{x}) - \lambda \left(\sum_{k \in K} w_k x_k - c\right)
\]where $\mathbf{x} \in \mathbb{R}^{|K|}$. To find the utility-maximizing allocation $\mathbf{x^*}$, we set $\nabla\mathcal{L} =  0$, and solve for  $\mathbf{x^*}$:
\[
\nabla\mathcal{L} = 
		\begin{bmatrix}
        \frac{\partial U_\alpha}{\partial x_1^*} - \lambda w_1\\
        \vdots \\
        \frac{\partial U_\alpha}{\partial x_k^*} - \lambda w_k\\
        \vdots \\
        \frac{\partial U_\alpha}{\partial x_{|K|}^*} - \lambda w_{|K|}\\
        \sum_{k \in K} w_k x_k^* - c
        \end{bmatrix}
        = \mathbf{0}
\]
Substituting $\partial U_\alpha/\partial x_k^* = (1/x_k^*)^{\alpha}$, we get:
\begin{equation}
\label{eqn:lm}
x_k^* = (\lambda w_k)^{-1/\alpha}
\end{equation}
To solve for $\lambda$, we substitute \eqn{\ref{eqn:lm}} into the last element of $\nabla \mathcal{L}$:
\begin{equation}
\label{eqn:lambda}
\lambda = \left[\frac{c}{\sum_{k \in K} w_k^{(1-1/\alpha)}}\right]^{-\alpha}
\end{equation}

\name computes $x_k^*$ for the $|K|$ faces that arise from the most recent extension, and identifies the one face that contains $x_k^*$ within the boundaries of its vertices. \name then extends this face in the next stage. For example, in stage 2 in \fig~\ref{fig:design:stages}, we compute $x_k^*$ for faces $\overline{AC}$ and $\overline{CB}$, and continue the search in stage 3 above $\overline{AC}$ because it contains the $x_k^*$ (\ie it intersects the $y=x$ line).

\subsection{Algorithm}
\begin{algorithm}[t]
\caption{\name Scheduler}
\label{alg:mobius}
\begin{algorithmic}[1]
\small

 \Procedure{RunMobius}{$\alpha$}
	\State Initialize history $\mathbf{\overline{x}} \in \mathbb{R}^{|K|}$, with $\mathbf{\overline{x}} = \mathbf{0}$
    \For{each round}
    	\State i $\leftarrow$ InitFace() \Comment{Use basis weights $\mathbf{e_k}$.}
    	\State face $\leftarrow$ \Call{SearchBoundary}{$\alpha$, i}
        \State $\mathbf{x^*} \leftarrow \underset{\mathbf{x} \in \text{face}}{\text{argmax}} \; U_\alpha(\mathbf{x} + \mathbf{\overline{x}})$
        \State Execute schedule with throughput $\mathbf{x^*}$.
        \State $\mathbf{\overline{x}} \leftarrow \mathbf{\overline{x}} + \mathbf{x^*}$
    \EndFor
\EndProcedure

\Function{SearchBoundary}{$\alpha$, face}
	\State $\mathbf{p_{ext}}$ $\leftarrow$ ExtendFace(face) \Comment{Compute $\mathbf{w}$ for face; call VRP.}
	\If{$\mathbf{p_{ext}}$ exists}
    	\For{$\mathbf{p_{face}}$ $\in$ face}
        	\State candidate $\leftarrow$ $\{\mathbf{p_{ext}}\}$ + $\{\mathbf{p} \in$ face | $\mathbf{p} \neq \mathbf{p_{face}}\}$
            \If{\Call{OptInFace}{$\alpha$, candidate}}
            	\State\Return \Call{SearchBoundary}{$\alpha$, candidate}
            \EndIf
        \EndFor
    \Else
    	\State\Return face
    \EndIf
\EndFunction

\Function{OptInFace}{$\alpha$, face}
    \State opt $\leftarrow$ ComputeOpt($\alpha$, face) \Comment{\eqn{\ref{eqn:lm}}}
    \If{opt lies within face}
    	\State\Return true
     \Else 
     	\State\Return false
     \EndIf
\EndFunction

\end{algorithmic}
\end{algorithm}

Algorithm~\ref{alg:mobius} provides pseudocode for \name's scheduling algorithm. When a \maas platform is initialized, \name starts executing the \texttt{RunMobius()} function, first initializing the long-term average throughput $\mathbf{\overline{x}}$. In each round, it computes an initial face using the $|K|$ basis weights $\mathbf{e_k} \; \forall k \in K$, where $\mathbf{e_k}$ is a vector of zeros, with the $k$-th element set to 1. This gives an initial allocation on every axis in the $|K|$-dimensional customer throughput. Then \name runs the \texttt{SearchBoundary()} function to compute the face containing the $|K|$ support allocations around the target throughput. It chooses the allocation $\mathbf{x^*}$ that maximizes the total average throughput.

\section{Optimality of \name}
\label{app:opt}
    
\S\ref{sec:design:opt} provides intuition about the optimality of \name. We show the following results:
\begin{enumerate}[label=\arabic*.]
\item \name gives the optimal solution on the convex boundary in a round.
\item Assuming customer tasks stream in according to a static task arrival model (\S\ref{sec:design:opt}), \name (i) is the optimal allocation on the convex boundary at the end of every around, and (ii) converges to the target allocation with an error that decreases as $\mathcal{O}(1/T)$.
\end{enumerate}
We provide formal mathematical proofs for both results below.

\subsection{\name is Optimal in a Round}
\label{app:opt:boundary}

We denote $H$ as the set of all possible allocations in a round. We show that for every round, \texttt{SearchBoundary()}  (Alg.~\ref{alg:mobius}, line 5) returns a face on the convex boundary of $H$ that maximizes the utility function $U_{\alpha}$. The structure of the proof is as follows:
\begin{enumerate}[label=\arabic*.]
\item Lemma~\ref{app:lem:unique-max} and Corollary~\ref{app:cor:unique-max} identify that the maximizer of $U_{\alpha}$ on $H$ is unique.
\item Lemma~\ref{app:lem:extend-face} shows that any point in the extensible region of a face will not lie above other faces. 
\item In Lemma~\ref{app:lem:max-utility}, we note that extending faces that do not contain the utility maximizing allocation for the stage results in a lower utility. 
\item Finally, we piece together these ideas in Theorem~\ref{app:thm:round-opt} in order to prove the optimality of \texttt{SearchBoundary()} over one round.
\end{enumerate}

\begin{lemma}
For any convex polyhedron $H \in \mathbb{R}^{n^+}$, there is a unique point that maximizes $U_\alpha$ for $\alpha \in (0, \infty]$, and the maximum lies on the boundary of $H$.
\label{app:lem:unique-max}
\end{lemma}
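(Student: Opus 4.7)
The plan is to lean on two structural properties of $U_\alpha$ on the positive orthant: strict concavity and strict monotonicity in each coordinate. Strict concavity combined with convexity of $H$ will give uniqueness of the maximizer, while strict monotonicity will force that maximizer to lie on the Pareto frontier (and hence on the topological boundary of $H$).

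\textbf{Step 1: strict concavity for finite $\alpha$.} I would first compute the Hessian of $U_\alpha$ on $\mathbb{R}^{|K|}_{>0}$. For $\alpha \in (0,\infty) \setminus \{1\}$, the Hessian is diagonal with entries $\partial^2 U_\alpha / \partial y_k^2 = -\alpha\, y_k^{-\alpha-1} < 0$, so it is negative definite; the limit $\alpha \to 1$ gives $U_1(\mathbf{y}) = \sum_k \log y_k$ (up to an additive constant), which is also strictly concave. Therefore $U_\alpha$ is strictly concave on the positive orthant. Since $H$ is convex, if two distinct maximizers $\mathbf{y} \neq \mathbf{y}'$ existed, convexity of $H$ together with strict concavity of $U_\alpha$ would give $U_\alpha\bigl((\mathbf{y}+\mathbf{y}')/2\bigr) > \max_{H} U_\alpha$, a contradiction. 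Hence the maximizer is unique.

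\textbf{Step 2: the maximizer lies on the boundary.} Since $\partial U_\alpha / \partial y_k = y_k^{-\alpha} > 0$ on the positive orthant, $U_\alpha$ is strictly increasing in each coordinate. If the unique maximizer $\mathbf{y}^\star$ lay in the relative interior of $H$, then for any basis direction $\mathbf{e}_k$ and sufficiently small $\varepsilon > 0$ we would have $\mathbf{y}^\star + \varepsilon\, \mathbf{e}_k \in H$ with $U_\alpha(\mathbf{y}^\star + \varepsilon\, \mathbf{e}_k) > U_\alpha(\mathbf{y}^\star)$, contradicting optimality. Thus $\mathbf{y}^\star$ lies on the boundary of $H$.

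\textbf{Main obstacle: the $\alpha = \infty$ case.} The clean argument above breaks when $\alpha = \infty$, because $U_\infty(\mathbf{y}) = \min_k y_k$ is only concave (not strictly so), and two distinct maximizers of $\min_k y_k$ can in principle coexist on a convex polyhedron. I would handle this by interpreting $\alpha = \infty$ as lexicographic max-min fairness (the standard convention in the fair-allocation literature), which is known to yield a unique solution over a convex polyhedron; alternatively, one can define the $\alpha=\infty$ maximizer as the limit of the (unique) finite-$\alpha$ maximizers as $\alpha \to \infty$ and check that this limit exists and is unique. In either formulation, monotonicity of $\min_k y_k$ in each coordinate still forces the solution to the boundary, so the second claim carries over. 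This edge case is where I expect the argument to require the most care.
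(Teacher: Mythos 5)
Your proposal is correct and follows the same core approach as the paper's proof, which simply observes that $U_\alpha$ is strictly concave for finite $\alpha$ and cites a textbook result for uniqueness and the boundary property. However, you are more careful on two points where the paper's one-line argument is actually incomplete. First, the boundary claim does not follow from strict concavity over a convex set alone (consider $-\|\mathbf{y}\|^2$ maximized over a ball); what forces the maximizer to the boundary is exactly the strict coordinate-wise monotonicity $\partial U_\alpha/\partial y_k = y_k^{-\alpha} > 0$ that you supply in Step~2. Second, the lemma is stated for $\alpha \in (0,\infty]$, but the paper's proof only covers finite $\alpha$; you correctly flag that $U_\infty(\mathbf{y}) = \min_k y_k$ is merely concave, so uniqueness can genuinely fail on a polyhedron (e.g.\ a box whose max-min optimum is a face), and your proposed fixes (lexicographic max-min, or the limit of finite-$\alpha$ maximizers) are the standard ways to restore it. One minor point of care in Step~2: if $H$ is not full-dimensional, a perturbation along a basis direction $\mathbf{e}_k$ may exit $H$; but in that case every point of $H$ already lies on its topological boundary, so the claim holds trivially, and your argument covers the substantive full-dimensional case.
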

\begin{proof}
$U_{\alpha}$ is strictly concave for any finite $\alpha$. When maximizing a concave function over a convex set, the optimum point is unique, the local maximum is the global maximum, and the optimal point is on the boundary of $H$~\cite[Theorem 8.3]{opt-intro}.
\end{proof}

\begin{corr}
There exists exactly one candidate face at any stage of the convex boundary (among all possible faces in Alg. \ref{alg:mobius}, line 14) for which \texttt{OptInFace()} is true.
\label{app:cor:unique-max}
\end{corr}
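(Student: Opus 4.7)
The plan is to reduce the statement to the uniqueness established in Lemma~\ref{app:lem:unique-max} by analyzing the local geometry of the $|K|$ candidate faces produced after an extension.

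First I would set up the geometric picture. After finding $\mathbf{p_{\text{ext}}}$, the $|K|$ candidate faces (each obtained by replacing one of the previous face's vertices with $\mathbf{p_{\text{ext}}}$) share $\mathbf{p_{\text{ext}}}$ as a common vertex and tile the newly uncovered portion of the convex boundary: two adjacent candidate faces meet along a $(|K|{-}2)$-dimensional facet, and every point on the extended hull that projects onto the previous face lies in exactly one of them (up to a measure-zero shared boundary). Denote their union by $\mathcal{F}$. Observe also that \texttt{OptInFace} uses the closed-form solution from \eqn{\ref{eqn:lm}} on the affine hull of the candidate face, so its hyperplane maximizer is unique by strict concavity of $U_\alpha$.

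For \emph{existence}, I would apply Lemma~\ref{app:lem:unique-max} to the current polytope to obtain a unique boundary maximizer $\mathbf{x}^\ast$ of $U_\alpha$, and argue inductively (using how \texttt{SearchBoundary} chose to extend this face in the prior stage) that $\mathbf{x}^\ast \in \mathcal{F}$. Let $F^\ast$ be the candidate face whose relative interior contains $\mathbf{x}^\ast$. The hyperplane maximizer computed by \eqn{\ref{eqn:lm}} on $F^\ast$ must coincide with $\mathbf{x}^\ast$: otherwise, a small displacement in the hyperplane toward the hyperplane maximizer would stay within $F^\ast$ (by the relative-interior property) while strictly increasing $U_\alpha$, contradicting optimality of $\mathbf{x}^\ast$ on the boundary. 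Hence \texttt{OptInFace} returns true on $F^\ast$.

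For \emph{uniqueness}, suppose two candidate faces $F_i, F_j$ both return true, with hyperplane maximizers $\mathbf{x}_i^\ast$ and $\mathbf{x}_j^\ast$ lying inside them respectively. Both points belong to the boundary of the extended convex hull. Applying Lemma~\ref{app:lem:unique-max} to this hull gives a unique global boundary maximizer, so $U_\alpha(\mathbf{x}_i^\ast) \neq U_\alpha(\mathbf{x}_j^\ast)$ would force the loser not to be a local maximizer on the boundary---yet it is, by construction, the maximizer on its own face and its face is part of the boundary, a contradiction. So the utilities are equal, and uniqueness from Lemma~\ref{app:lem:unique-max} forces $\mathbf{x}_i^\ast = \mathbf{x}_j^\ast$, which places this common point on the shared facet $F_i \cap F_j$; this is the degenerate boundary case that the algorithm can resolve with any consistent tie-break (e.g., the first face satisfying the condition in the loop of Alg.~\ref{alg:mobius}, line 14).

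The main obstacle is the uniqueness step: careful accounting is needed to rule out two distinct candidate faces both hosting interior hyperplane maximizers. The cleanest way is to argue that if \texttt{OptInFace} is true for $F_i$, then the hyperplane maximizer on $F_i$ is simultaneously a local boundary maximizer of $U_\alpha$ over the extended hull (since $F_i$ is a facet of that hull, and the point lies in its relative interior); by concavity, every local boundary maximizer is global, so Lemma~\ref{app:lem:unique-max} collapses any two such points to the same one. Getting this local-to-global step precise, and handling the shared-facet edge case, is the only delicate piece.
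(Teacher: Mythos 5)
Your proposal takes essentially the same route as the paper: both derive the corollary directly from the uniqueness of the $U_\alpha$ maximizer on the convex boundary (Lemma~\ref{app:lem:unique-max}), and you simply flesh out the existence direction, the local-to-global step, and the degenerate tie case that the paper compresses into two sentences. One caution on your uniqueness step: the claim that ``by concavity, every local boundary maximizer is global'' is false for general concave functions---the boundary of a convex set is not convex, and a concave function can have several distinct local maxima on it. The step is salvageable here only because $U_\alpha$ is componentwise increasing and each candidate face has a nonnegative normal $\mathbf{w}$, so a hyperplane maximizer that lies in a face of the hull in fact maximizes $U_\alpha$ over the entire half-space $\{\mathbf{y} : \mathbf{w}^\intercal\mathbf{y} \le c\}$ containing the hull, and is therefore the unique global maximizer of Lemma~\ref{app:lem:unique-max}; you should make that monotonicity argument explicit rather than appealing to concavity alone. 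Your observation that the optimum may land on a facet shared by two candidate faces, so that \texttt{OptInFace()} returns true for both absent a tie-break, is a genuine edge case that the paper's proof silently ignores.
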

\begin{proof}
It follows from Lemma 1 that the optimal $U_\alpha$ over the current convex boundary is unique. This means that exactly one face must have the optimal within its face. 
\end{proof}

\begin{definition}
A point $\mathbf{p}$ is said to be above (or below) a face described as
$\sum_{k\in K} w_k x_k = c$ if $\sum_{k\in K} w_k p_k > c$ (or $\sum_{k\in K} w_k p_k < c$). 
\end{definition}

\begin{lemma}
Let $f \in F$ be a face among all candidate faces $F$ during a given stage. Any allocation resulting from an initial call to \texttt{SearchBoundary($f$)} will never lie above any face in $F\setminus \{f\}$.
\label{app:lem:extend-face}
\end{lemma}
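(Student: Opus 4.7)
The plan is to prove Lemma~\ref{app:lem:extend-face} by induction on the depth of recursion inside \texttt{SearchBoundary}$(f)$, leveraging two structural facts. First, the candidate faces in $F$ were all created jointly during the previous stage's extension: each $f' \in F$ consists of the most recently discovered vertex $\mathbf{u}$ together with a distinct subset of $|K|{-}1$ vertices of the parent face. Hence the faces in $F$ meet pairwise at least at $\mathbf{u}$, and their outward normals $\{\mathbf{w_{f'}}\}_{f'\in F}$ span a cone emanating outward from $\mathbf{u}$. Second, every previously-discovered allocation is an extreme point of the true feasible polytope $H$, because each was returned by the VRP solver as the argmax of a linear objective over $H$.

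For the base case, consider the direct invocation \texttt{ExtendFace}$(f)$, which returns $\mathbf{p} = \arg\max_{\mathbf{x} \in H} \mathbf{w_f}^{\intercal} \mathbf{x}$. Assume for contradiction that $\mathbf{p}$ lies strictly above some $f' \in F \setminus \{f\}$, i.e., $\mathbf{w_{f'}}^{\intercal}\mathbf{p} > c_{f'}$, while by optimality $\mathbf{w_f}^{\intercal}\mathbf{p} \geq c_f = \mathbf{w_f}^{\intercal}\mathbf{u}$. I would then construct explicit non-negative coefficients expressing $\mathbf{u}$ as a strict convex combination of $\mathbf{p}$ together with vertices of $f$ and $f'$ adjacent to $\mathbf{u}$, where the coefficients solve the linear system enforcing both face equalities plus the simplex constraint. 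Because $\mathbf{p}$ and all those vertices lie in $H$, this contradicts $\mathbf{u}$ being an extreme point of $H$, forcing $\mathbf{p}$ to lie on or below every $f' \in F \setminus \{f\}$.

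For the inductive step, assume the invariant holds throughout the recursion up to depth $k$. At depth $k{+}1$, the algorithm extends a sub-face $\tilde{f}$ formed by swapping one vertex of $f$ for a previously-returned allocation; by the induction hypothesis $\tilde{f}$ already sits inside the wedge above $f$ that excludes the exterior of every $f' \in F \setminus \{f\}$. Extending $\tilde{f}$ searches in a direction lying in the same wedge, and an identical extreme-point contradiction applies to any point returned by the VRP: if it lay above some $f' \in F \setminus \{f\}$, a previously-discovered vertex of $H$ (either $\mathbf{u}$, the allocation found at depth $k$, or one inherited from $f$) would become a strict convex combination of points in $H$.

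The main obstacle is the convex-combination construction in the base case when $|K| > 2$, since two faces in $F$ can share a $(|K|{-}2)$-dimensional ridge rather than just the vertex $\mathbf{u}$, making a direct 2D picture insufficient. I anticipate handling this by projecting onto the two-dimensional subspace spanned by $\mathbf{w_f}$ and $\mathbf{w_{f'}}$, reducing the geometric content to the elementary fact that in 2D a point strictly exterior to two faces meeting at a vertex puts that vertex inside the new triangle, and then lifting the convex coefficients back to $|K|$ dimensions using the shared vertices of the ridge. Care will also be needed to handle degenerate argmax ties, which I would dispose of by perturbing $\mathbf{w_f}$ infinitesimally within its normal cone.
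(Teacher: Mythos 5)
Your proposal is correct in substance but reaches the contradiction by a different route than the paper. The paper's proof is a direct objective-domination argument: if the allocation $\mathbf{p}$ obtained by extending $f$ lay strictly above another candidate face, then the earlier \texttt{ExtendFace()} call that discovered the vertex shared by the candidate faces---a call that maximized a linear objective whose weight lies in the cone positively spanned by those faces' normals---would have returned $\mathbf{p}$ instead of that vertex, since $\mathbf{p}$ strictly dominates it in that direction. You instead route the contradiction through extreme points, writing the shared vertex $\mathbf{u}$ as a strict convex combination of $\mathbf{p}$ and its neighbors. These are essentially primal and dual renderings of the same fact: if you pair your convex coefficients with the weight $\mathbf{w_0}$ that discovered $\mathbf{u}$, the combination yields $\mathbf{w_0}^\intercal \mathbf{u} < \mathbf{w_0}^\intercal \mathbf{u}$ outright (the parent-face vertices carrying positive coefficients are strictly $\mathbf{w_0}$-suboptimal because the extension producing $\mathbf{u}$ was accepted), which is exactly the paper's one-line contradiction. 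The paper's phrasing buys robustness---it needs only that $\mathbf{u}$ is \emph{a} maximizer of $\mathbf{w_0}$ over $H$, whereas your extreme-point premise additionally requires argmax uniqueness, hence the perturbation caveat you flag. Your phrasing buys an explicit certificate (the coefficients), at the cost of the $|K|>2$ lifting step, which is the weakest link in your plan: the projection of an extreme point need not be extreme in the projection, so the ``lift back'' must ensure the ridge vertices enter with nonnegative coefficients, a complication the normal-cone summation sidesteps entirely. Your inductive step and the paper's closing sentence are equally terse about the genuinely delicate case where a recursed sub-face no longer contains $\mathbf{u}$, so you match the paper's level of rigor there.
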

\begin{proof}

We prove this by contradiction. Suppose $\mathbf{p}$ was above two faces $f$ and $\tilde{f}$. Then there will exist at least one support allocation $\mathbf{x}$ of face $\tilde{f}$ which could not have been found from a previous call to \texttt{ExtendFace()}, and $\mathbf{p}$ would have been found instead of $\mathbf{x}$. For example, in \fig~\ref{fig:design:search}, any point above the extensible regions of $\overline{AB}$ and $\overline{BC}$ would contradict the existence of $B$. 

Thus, $\mathbf{p}$ cannot lie above more than one face. This argument is true for any subsequent calls to \texttt{SearchBoundary()}, and any subsequent allocation obtained from extending a face has to be below all other faces.
\end{proof}

\begin{lemma}
Let $\mathbf{p}$, contained in face $f_p$, be the maximizer of $U_\alpha$. The utility of any allocation in the extensible region of a face $f \neq f_p$ will be lower than the utility at allocation $\mathbf{p}$.
\label{app:lem:max-utility}
\end{lemma}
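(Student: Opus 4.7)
The plan is to leverage the first-order (KKT) optimality condition for the concave program $\max_{\mathbf{x} \in H} U_\alpha(\mathbf{x})$ together with the position bound from Lemma~\ref{app:lem:extend-face}. Let $f_p$ be described by the hyperplane $\mathbf{w}_p^{\intercal}\mathbf{x} = c_p$ with $\mathbf{w}_p^{\intercal}\mathbf{x} \leq c_p$ on all of $H$. Because $\mathbf{p}$ maximizes $U_\alpha$ over $H$ and, by how \texttt{OptInFace()} selects $f_p$ via its Lagrange-multiplier calculation (Equation~\ref{eqn:lm}), the hyperplane-constrained optimum lies within the relative interior of $f_p$, so $\mathbf{p}$ is in that relative interior. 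The normal cone to $H$ at $\mathbf{p}$ then reduces to the ray $\{\lambda\mathbf{w}_p : \lambda \geq 0\}$, and KKT gives $\nabla U_\alpha(\mathbf{p}) = \lambda\mathbf{w}_p$ for some $\lambda > 0$ (the multiplier is strictly positive since $U_\alpha$ has no finite unconstrained maximizer for $\alpha \in (0,\infty]$).

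Next, I would apply Lemma~\ref{app:lem:extend-face} to any allocation $\mathbf{q}$ in the extensible region of a face $f \neq f_p$: that lemma guarantees $\mathbf{q}$ does not lie above any face of $H$ other than $f$, so in particular $\mathbf{w}_p^{\intercal}\mathbf{q} \leq c_p = \mathbf{w}_p^{\intercal}\mathbf{p}$. Combining the two observations,
\[
\nabla U_\alpha(\mathbf{p})^{\intercal}(\mathbf{q}-\mathbf{p}) = \lambda\,\mathbf{w}_p^{\intercal}(\mathbf{q}-\mathbf{p}) \leq 0.
\]
The gradient inequality for the concave function $U_\alpha$ then yields $U_\alpha(\mathbf{q}) \leq U_\alpha(\mathbf{p}) + \nabla U_\alpha(\mathbf{p})^{\intercal}(\mathbf{q}-\mathbf{p}) \leq U_\alpha(\mathbf{p})$. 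Strict concavity (Lemma~\ref{app:lem:unique-max}) upgrades this to a strict inequality once we verify $\mathbf{q} \neq \mathbf{p}$; this is immediate since $\mathbf{q}$ lies strictly above $f$ (hence outside $H$) while $\mathbf{p} \in H$.

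The main obstacle I anticipate is the degenerate case in which $\mathbf{p}$ coincides with a vertex shared by multiple faces of $H$. There $\nabla U_\alpha(\mathbf{p})$ lives in a higher-dimensional normal cone and need not be a positive scalar multiple of $\mathbf{w}_p$ alone, so the directional inequality $\nabla U_\alpha(\mathbf{p})^{\intercal}(\mathbf{q}-\mathbf{p}) \leq 0$ is no longer automatic from $\mathbf{w}_p^{\intercal}(\mathbf{q}-\mathbf{p}) \leq 0$. The clean workaround is to invoke Corollary~\ref{app:cor:unique-max}: \texttt{OptInFace()} returns true for exactly one face, and the Lagrange-multiplier characterization (Equations~\ref{eqn:lm}--\ref{eqn:lambda}) singles out the face whose hyperplane-maximizer lies strictly interior to its vertex set; this forces $\mathbf{p}$ into the relative interior of $f_p$ and restores the single-multiplier KKT representation used above. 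Alternatively, one can decompose $\nabla U_\alpha(\mathbf{p}) = \sum_i \lambda_i \mathbf{w}_i$ over active faces and observe that, since $\mathbf{q}$ is weakly below every active face distinct from $f$, each summand with $\mathbf{w}_i \neq \mathbf{w}_f$ contributes non-positively, while the uniqueness from Corollary~\ref{app:cor:unique-max} forces $\lambda_f = 0$.
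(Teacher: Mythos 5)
Your proof is correct and takes essentially the same route as the paper's: the paper likewise combines Lemma~\ref{app:lem:extend-face} (any extension of a face $f \neq f_p$ stays weakly below $f_p$) with concavity of $U_\alpha$ and the fact that $\mathbf{p}$ maximizes $U_\alpha$ on $f_p$, so that exceeding $U_\alpha(\mathbf{p})$ would require moving above $f_p$. You merely spell out the KKT/gradient-inequality details and the degenerate shared-vertex case that the paper's two-sentence argument leaves implicit.
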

\begin{proof}
Since $U_\alpha$ is concave, and is maximized at point $\mathbf{p}$ on $f_p$, the value of $U_\alpha$ will only increase if evaluated at a point above $f_p$. From Lemma \ref{app:lem:extend-face}, we know that every extension of a face other than $f_p$ will be below $f_p$, and thus have a lower utility than the current value evaluated at $\mathbf{p}$.
\end{proof}

\begin{theorem}
In each round, \texttt{SearchBoundary()} returns the face on the convex boundary that contains the allocation that maximize $U_{\alpha}$.
\label{app:thm:round-opt}
\end{theorem}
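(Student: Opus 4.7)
The plan is to proceed by induction on the stages of \texttt{SearchBoundary()}, showing that the algorithm maintains two invariants: (i) the set of support allocations discovered so far defines an inner approximation $H_t \subseteq H$ whose convex boundary agrees with $\partial H$ on every face that has already been tested and failed to extend, and (ii) at every stage the unique face identified by \texttt{OptInFace} contains the maximizer of $U_\alpha$ over $H_t$. The base case comes from the $|K|$ axis-aligned initializing calls, which yield a nondegenerate initial polytope; Corollary~\ref{app:cor:unique-max} then guarantees exactly one initial face $f^*$ passes \texttt{OptInFace}.

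For the inductive step I would argue as follows. At stage $t$, the algorithm calls \texttt{ExtendFace}$(f^*)$, which invokes the weighted VRP solver with the normal $\mathbf{w}$ of $f^*$. Two cases arise. If a new allocation $\mathbf{p}$ is returned strictly above $f^*$, then $H_{t+1} = \mathrm{conv}(H_t \cup \{\mathbf{p}\})$ is still contained in $H$ (since $\mathbf{p}\in H$) and strictly refines the inner approximation; by Corollary~\ref{app:cor:unique-max} exactly one of the newly created candidate faces again passes \texttt{OptInFace}, so the invariants are preserved and the algorithm recurses. If no such $\mathbf{p}$ exists, then by optimality of the VRP solver for the linear objective $\mathbf{w}^\intercal \mathbf{x}$, the face $f^*$ is a supporting hyperplane of $H$, and thus $f^* \subseteq \partial H$.

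It remains to show that when the recursion terminates, the returned face $f^*$ contains the true maximizer $\mathbf{p}^\star$ of $U_\alpha$ over $H$. By Lemma~\ref{app:lem:unique-max}, $\mathbf{p}^\star$ lies on $\partial H$ and is unique. Suppose for contradiction $\mathbf{p}^\star$ lies in some other face $f' \neq f^*$ of $\partial H$. Then at some earlier stage $s$, $f'$ was either (a) already on the inner boundary but not chosen by \texttt{OptInFace}, contradicting Corollary~\ref{app:cor:unique-max} applied at stage $s$ together with Lemma~\ref{app:lem:max-utility}, or (b) not yet discovered, in which case Lemma~\ref{app:lem:extend-face} implies that every extension subsequently produced by the algorithm lies below $f'$, so $\mathbf{p}^\star$ must still be discoverable, meaning the search could not have terminated with the correct $f^*$ already on $\partial H$ and $\mathbf{p}^\star$ unreached unless $f^* = f'$, a contradiction.

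The main obstacle will be case (b): making rigorous the claim that an undiscovered face containing $\mathbf{p}^\star$ must have been picked for extension before termination. This requires showing that \texttt{OptInFace} tracks the maximizer over the \emph{current} inner approximation at every stage, and that whenever the current-stage maximizer migrates to a different face after a successful extension, the algorithm follows it. Lemma~\ref{app:lem:max-utility} is the crucial tool: it forces the utility of any allocation reachable by extending a wrong face to be strictly below $U_\alpha(\mathbf{p}^\star)$, so the only way to ever reach $\mathbf{p}^\star$ is to extend faces that contain the current-stage optimum, which is precisely what the algorithm does by construction. Handling degenerate cases where the optimum lies exactly on a shared edge of two faces will need care, but Lemma~\ref{app:lem:unique-max}'s uniqueness, together with the strict concavity of $U_\alpha$, rules this out.
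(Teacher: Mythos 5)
Your proposal is correct and takes essentially the same route as the paper's proof: both rest on the uniqueness of the $U_\alpha$-maximizer (Lemma~\ref{app:lem:unique-max}), the fact that extensions of non-chosen faces stay below the current boundary (Lemma~\ref{app:lem:extend-face}), and the resulting utility bound (Lemma~\ref{app:lem:max-utility}) that justifies pruning every candidate face except the unique one passing \texttt{OptInFace} (Corollary~\ref{app:cor:unique-max}). The only difference is presentational: you wrap this pruning argument in an explicit induction over stages with an inner-approximation invariant, whereas the paper states it directly and concludes via concavity that the local optimum on the terminal face is global.
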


\begin{proof}
In every round, \texttt{SearchBoundary()} iteratively extends the face that contains the optimal $U_\alpha$ over all faces. Lemma~\ref{app:lem:max-utility} guarantees that any subsequent exploration of a face required by \texttt{SearchBoundary()} will only increase $U_\alpha$, while pruning out the search spaces which cannot improve the solution. When maximizing a concave objective function over a convex set, the local optimal is the global optimal. Therefore, the solution returned when \texttt{SearchBoundary()} terminates is the maximum of $U_\alpha$ over $H$.
\end{proof}

\subsection{\name Converges to the Target Throughput}
\label{app:opt:rounds}
In our problem setting, customer tasks are only presented to \name for the current round, and no knowledge of future task locations is assumed.\footnote{A greedy approach (discussed in \app~\ref{app:greedy}) would be a regret-free online algorithm for this planning problem.} In this section, we show an interesting result: under the static task arrival model (\S\ref{sec:design:opt}), \name, although myopic, results in allocations that are globally optimal. In other words, asymptotically, \name achieves the {\em same average throughput allocation} that a utility-maximizing oracle, which jointly planned over multiple rounds, would achieve.

\begin{definition}
A task distribution is said to be static if the set of throughput allocations (denoted as $H$) is the same for all rounds.
\label{app:def:stationary}
\end{definition}

\begin{definition}
The maximum of $U_\alpha$ over the convex boundary of $H$ is defined as the optimal long term throughput, and is denoted as $\mathbf{x}^{*}$. 
\end{definition}

The set of all feasible average throughput allocations after $t$ rounds is denoted as $F_t$. We prove the following: (i) in every round, \name chooses the solution which maximizes $U_\alpha (\mathbf{\overline{x}}(t))$ on the convex boundary of $F_t$, and (ii) $\mathbf{\overline{x}}(t)$ converges to $\mathbf{x^{*}}$ with an error that decreases as $1/t$.
For brevity, we consider the case with two customers ($|K| = 2$). However, the intuition and results generalize for any number of customers. The proof outline is as follows
\begin{enumerate}[label=\arabic*.]
\item Lemma \ref{app:lem:sameCH} and \ref{app:lem:denseCH} characterize the evolution of the convex boundary
\item Lemma \ref{app:lem:opt-in-CH} proves that $\mathbf{x^{*}}$ lies on this boundary.
\item Lemma \ref{app:lem:Mobius-is-optimal} shows that \name is optimal at the end of any finite round $t$.
\item Finally, we prove asymptotic optimality and describe the rate of convergence in Theorem \ref{app:thm:round-converge}.
\end{enumerate}

\begin{lemma}
For any round $t$, the convex boundary of $F_t$ remains constant. 
\label{app:lem:sameCH}
\end{lemma}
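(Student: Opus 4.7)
My plan is to unpack the definition of $F_t$ as the set of average allocations after $t$ rounds, and then prove two containments that pin down its convex boundary. Writing $\overline{\mathbf{x}}(t) = \frac{1}{t}\sum_{\tau=1}^{t} \mathbf{x}(\tau)$ with $\mathbf{x}(\tau)\in H$ for every $\tau$, the set $F_t$ consists exactly of the uniform convex combinations of $t$ points drawn (with repetition) from $H$. The static task arrival model in Definition~\ref{app:def:stationary} guarantees that $H$ is the same set every round, so this description of $F_t$ depends only on $t$ and on the fixed set $H$.

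First, I would show $F_t \subseteq \mathrm{conv}(H)$: every element of $F_t$ is by construction a convex combination (with equal weights $1/t$) of elements of $H$, hence lies in $\mathrm{conv}(H)$. Taking convex hulls preserves this, giving $\mathrm{conv}(F_t)\subseteq \mathrm{conv}(H)$. For the reverse direction, note that any single point $\mathbf{p}\in H$ can be realized in $F_t$ by choosing the constant sequence $\mathbf{x}(\tau)=\mathbf{p}$ for all $\tau=1,\dots,t$; averaging still gives $\mathbf{p}$. Therefore $H\subseteq F_t$, and taking convex hulls yields $\mathrm{conv}(H)\subseteq \mathrm{conv}(F_t)$. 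Combining the two containments gives $\mathrm{conv}(F_t)=\mathrm{conv}(H)$, so the corner points (the extreme points of this common convex set) coincide and the convex boundary of $F_t$ equals the convex boundary of $H$.

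Since $H$ is identical across rounds by Definition~\ref{app:def:stationary}, the convex boundary of $F_t$ does not depend on $t$, which is exactly the claim. The one subtle point I would want to be explicit about is that the corner points of $H$ (the vertices the search algorithm of Section~\ref{sec:design:hull} discovers) are genuine extreme points of $\mathrm{conv}(H)$, so they remain extreme in $\mathrm{conv}(F_t)$; any new point produced by averaging lies strictly in the interior of some face and therefore cannot appear as a new vertex. This is the main obstacle in making the argument tight: ruling out the pathological case where the averaging operation somehow exposes or hides a corner point. The containment argument above handles this cleanly because extreme points of a convex set are determined by the set itself, not by which convex combinations are ``achievable'' by averages of a fixed length.
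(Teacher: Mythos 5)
Your proof is correct and takes essentially the same route as the paper: the paper's (much terser) argument also observes that averaging allocations from $H$ cannot leave the convex hull and hence the boundary of $F_t$ coincides with that of $H$. Your version is actually tighter — the two explicit containments $F_t \subseteq \mathrm{conv}(H)$ and $H \subseteq F_t$ (via constant sequences) avoid the paper's loose appeal to ``$H$ is convex,'' which need not hold for the raw feasible set, only for its hull.
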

\begin{proof}
The allocations in $F_t$ are obtained by averaging the throughput obtained over $t$ rounds of $H$ (see Fig. \ref{fig:frontier:bf-rate}). Since $H$ is convex, any average of allocations in $H$ will remain the same boundary. Thus, the convex boundary of $H$ and $F_t$ are the same.
\end{proof}

\begin{lemma}
At round $t$, each face of the convex boundary contains $t-1$ equidistant allocations.
\label{app:lem:denseCH}
\end{lemma}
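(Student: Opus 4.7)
The plan is to fix an arbitrary face of the convex boundary of $F_t$ (equivalently, of $H$, by Lemma \ref{app:lem:sameCH}) and enumerate exactly which average allocations of the per-round schedules can land on it. In the two-customer case declared in the excerpt, each such face is an edge between two adjacent corner points $A$ and $B$.

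The first step is a supporting-hyperplane argument that forces each per-round allocation contributing to an edge-landing average to itself lie on the edge. Writing the edge as $\{\mathbf{x}:\mathbf{w}^{\intercal}\mathbf{x} = c\}$ with $\mathbf{w}^{\intercal}\mathbf{x} \le c$ for every $\mathbf{x}$ in $H$, any average $\overline{\mathbf{x}}(t) = \tfrac{1}{t}\sum_{\tau=1}^{t}\mathbf{x}(\tau)$ that lies on the edge satisfies
\begin{equation*}
c \;=\; \mathbf{w}^{\intercal}\overline{\mathbf{x}}(t) \;=\; \tfrac{1}{t}\sum_{\tau=1}^{t}\mathbf{w}^{\intercal}\mathbf{x}(\tau) \;\le\; c,
\end{equation*}
and equality forces $\mathbf{w}^{\intercal}\mathbf{x}(\tau) = c$ for every $\tau$. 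Hence every per-round allocation $\mathbf{x}(\tau)$ that contributes to $\overline{\mathbf{x}}(t)$ also lies on $\overline{AB}$.

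Next, by Theorem \ref{app:thm:round-opt} each round's allocation returned by \name is a corner point of the convex boundary. The only corner points on the edge $\overline{AB}$ are $A$ and $B$ themselves, so every $\mathbf{x}(\tau)$ contributing to an average on that edge must equal $A$ or $B$. If $A$ is chosen in $k$ of the $t$ rounds and $B$ in the remaining $t-k$, the average is $\tfrac{k}{t}A + \tfrac{t-k}{t}B$. As $k$ ranges over $\{0,1,\dots,t\}$ this yields $t+1$ collinear allocations on $\overline{AB}$, equispaced with separation $|B-A|/t$. Removing the two endpoints at $k=0$ and $k=t$, which coincide with the corner points $B$ and $A$ themselves, leaves exactly $t-1$ strictly interior allocations on the face, as claimed.

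The only genuinely delicate step is justifying the restriction to the two corner-point choices per face, since the convex hull of $H$ contains a continuum of collinear points along each edge. This is precisely what the supporting-hyperplane equality argument combined with Theorem \ref{app:thm:round-opt} delivers; the remainder of the proof is then a short bookkeeping count, and the same strategy (replacing the two-corner edge by an affine simplex on $|K|$ corners) would generalize the counting to arbitrary $|K|$ if needed later.
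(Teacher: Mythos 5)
Your proof is correct and follows essentially the same route as the paper's: restrict each round's choice to the two corner points $A$ and $B$ of the face, and count the $t-1$ interior convex combinations $\tfrac{k}{t}A + \tfrac{t-k}{t}B$ for $k = 1,\dots,t-1$. Your supporting-hyperplane step makes explicit a restriction the paper simply asserts from the algorithm's design (that \name only ever selects corner points of the active face), and your explicit exclusion of the two endpoints tidies up the paper's slightly loose count of ``$t-1$ combinations of $n$ and $m$ that sum to $t$.''
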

\begin{proof}
Consider one face $f$ in the convex boundary of $H$. From Lemma \ref{app:lem:sameCH}, we know that the convex boundary of the average throughput at any subsequent round will remain the same (see Fig. \ref{fig:frontier:bf-rate}). However, with subsequent rounds, linear combinations of two corner points that constitute a face will create new allocations that lie along the same face. For example, in \fig~\ref{fig:design:rounds}, the face $\overline{BE}$ gets ``denser'' with more allocations, as the number of rounds increases. Remember that in every round, \name can only choose between throughput allocations $B$ or $E$ to modify the average. In particular, by round $t$, $n$ allocations of $B$ and $m$ allocations of $E$ result in an allocation $B_n E_m = \frac{n}{t} B + \frac{m}{t} E$, where $n$ and $m$ are integers. Now, since there are $t-1$ possible combinations of $n$ and $m$ that sum up to $t$, we get $t-1$ equidistant allocations on the face.
\end{proof}

\begin{lemma}
$\mathbf{x}^{*}$ lies on the face in the convex boundary of $H$.
\label{app:lem:opt-in-CH}
\end{lemma}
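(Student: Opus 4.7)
The plan is to argue directly from the definition of $\mathbf{x}^{*}$ together with the polyhedral structure of the convex boundary of $H$. Since there are finitely many feasible task-to-vehicle assignments, the set $H$ of per-round throughput allocations is finite; hence its convex hull $\mathrm{conv}(H)$ is a bounded convex polytope in $\mathbb{R}^{|K|}$. The ``convex boundary'' of $H$ in the sense of Section~\ref{sec:frontier} is precisely the topological boundary of $\mathrm{conv}(H)$ that is Pareto-nondominated, which by standard polyhedral theory decomposes as a finite union of faces (vertices, edges, and higher-dimensional facets).

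Given this decomposition, the proof proceeds in three short steps. First, recall that $\mathbf{x}^{*}$ is defined as the maximizer of $U_\alpha$ over the convex boundary of $H$, so by definition $\mathbf{x}^{*}$ is a point of that boundary. Second, observe that any point of the boundary of a polytope lies in at least one face of the polytope. Third, specialize to the relevant face: since $U_\alpha$ is monotone increasing in each coordinate, its maximum over the boundary must be attained on the Pareto-optimal portion; for finite $\alpha$, Lemma~\ref{app:lem:unique-max} further guarantees that $\mathbf{x}^{*}$ is unique, and Lemma~\ref{app:lem:sameCH} ensures that this face is also a face of the convex boundary of $F_t$ for every $t$. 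Combining these gives the existence of a face $f^{*}$ of the convex boundary of $H$ with $\mathbf{x}^{*} \in f^{*}$, which is exactly what downstream results in Lemma~\ref{app:lem:Mobius-is-optimal} and Theorem~\ref{app:thm:round-converge} need.

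The main subtlety, and the step that warrants a careful statement rather than a one-line remark, is handling the distinction between vertices, edges, and facets of $\mathrm{conv}(H)$. If $\mathbf{x}^{*}$ lies in the relative interior of some facet, then $f^{*}$ is uniquely determined and has exactly $|K|$ affinely independent corner points, matching the $|K|$ support allocations returned by \texttt{SearchBoundary()}; this is the case Mobius is designed for. If instead $\mathbf{x}^{*}$ coincides with a vertex of the boundary (which can happen for $\alpha = 0$, where $U_\alpha$ is linear), then $\mathbf{x}^{*}$ belongs to several faces simultaneously, and we may pick any incident facet as $f^{*}$, since $\mathbf{x}^{*}$ is then already a feasible single-round allocation and no oscillation is required in the later convergence argument. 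Finally, to rule out the degenerate possibility that the maximum is attained only in the interior of $\mathrm{conv}(H)$, we invoke the monotonicity of $U_\alpha$ together with the fact that from any interior point one can translate along a positive direction and remain inside $\mathrm{conv}(H)$ until hitting the boundary, strictly increasing $U_\alpha$; this forces $\mathbf{x}^{*}$ onto the convex boundary, and hence onto some face of it.
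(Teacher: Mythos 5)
Your argument is correct and takes essentially the same route as the paper: both reduce the claim to the fact that the maximizer of the concave utility $U_\alpha$ over the convex polytope $\mathrm{conv}(H)$ lies on its boundary (the paper's Lemma~\ref{app:lem:unique-max}) and hence on some face of the convex boundary. One small but genuine improvement on your side is worth keeping: the paper justifies ``the optimum lies on the boundary'' by concavity over a convex set alone, which is not sufficient (a concave function can attain its maximum in the interior), whereas your explicit appeal to the coordinatewise monotonicity of $U_\alpha$ is the step that actually forces $\mathbf{x}^{*}$ onto the boundary.
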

\begin{proof}
The maximizer of $U_\alpha$ in the long term searches over the convex boundary of $H$. Since $U_\alpha$ is a concave function, and the search space is a convex set, $\mathbf{x}^{*}$ must lie on the face of the convex boundary (Lemma~\ref{app:lem:unique-max}).
\end{proof}

\begin{figure}
    \centering
    \includegraphics[width = 0.9\linewidth]{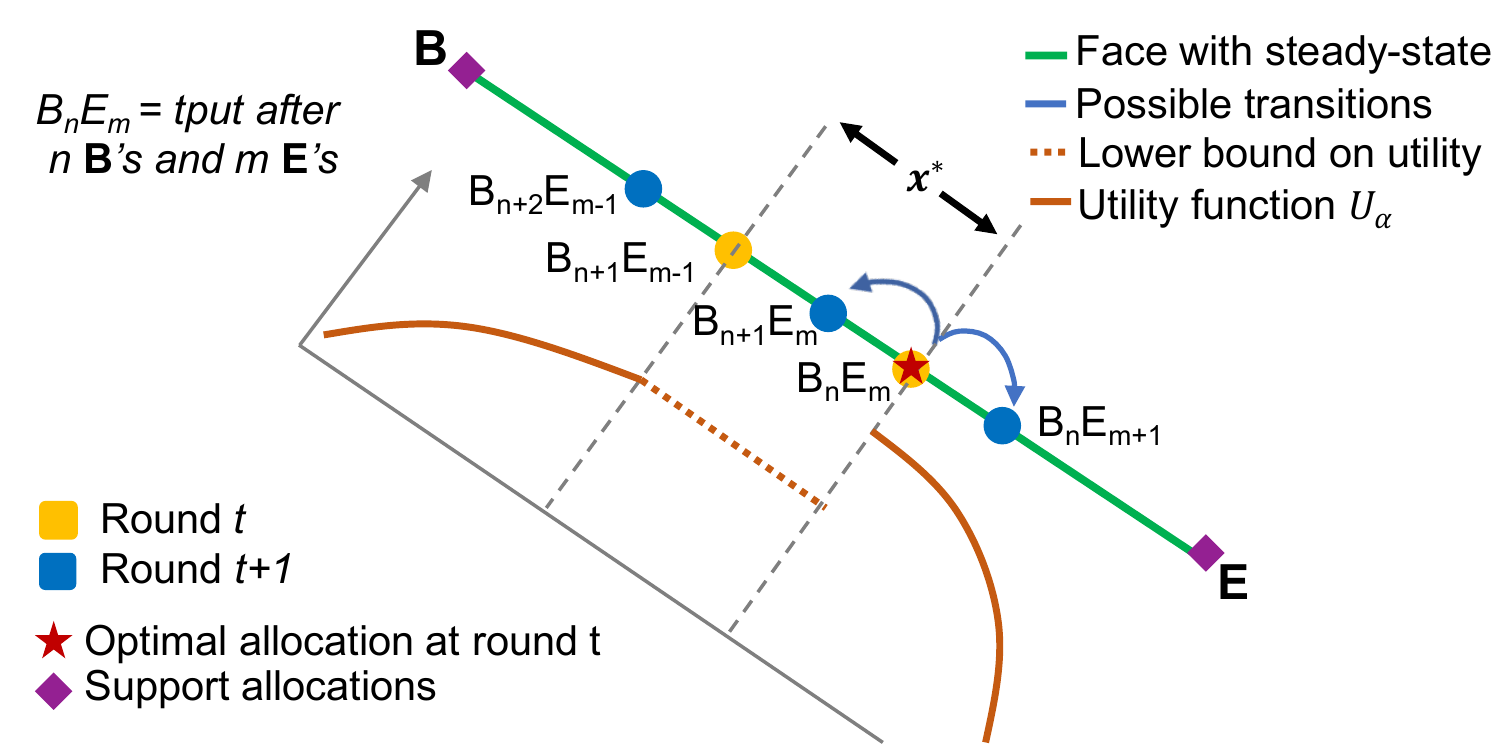}
    \caption{Proof setup for Lemma \ref{app:lem:Mobius-is-optimal}. The face $\overline{BE}$ is the same as in Fig. \ref{fig:design:opt}}
    \label{fig:app:optimality-proof}
\end{figure}

\begin{lemma}
For any round $t$, \name finds the utility-maximal allocation on the convex boundary of $F_t$.
\label{app:lem:Mobius-is-optimal}
\end{lemma}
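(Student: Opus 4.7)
The plan is to reduce the $|K|$-dimensional problem on $F_t$ to an integer-lattice optimization on the face $f^*$ of the convex boundary that contains the long-term target $\mathbf{x}^*$, and then argue by induction that \name's greedy per-round choice tracks the utility-maximum of this lattice at every round.

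First, I would verify that \name identifies the same face $f^*$ in every round. Lemma~\ref{app:lem:sameCH} gives that the convex boundary of $F_t$ equals that of $H$ for all $t$, and Lemma~\ref{app:lem:opt-in-CH} places $\mathbf{x}^*$ on this boundary. Since $U_\alpha$ evaluated on the per-round running average remains strictly concave, Theorem~\ref{app:thm:round-opt} applies verbatim and guarantees that \texttt{SearchBoundary()} returns the face $f^*$ whose vertices $\mathbf{v}_1, \ldots, \mathbf{v}_{|K|}$ flank $\mathbf{x}^*$. By Lemma~\ref{app:lem:denseCH}, the feasible averages on $f^*$ after $t$ rounds form the lattice $L_t = \{\sum_{i=1}^{|K|} (n_i/t)\mathbf{v}_i : \sum_i n_i = t,\; n_i \in \mathbb{Z}_{\geq 0}\}$.

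Next, writing $\mathbf{x}^* = \sum_i \lambda_i \mathbf{v}_i$ in barycentric coordinates, strict concavity of $U_\alpha$ restricted to $f^*$ implies that the lattice-maximizer on $L_t$ is unique (modulo tie-breaking), and by the unimodality of $U_\alpha$ along each edge of $f^*$, its coordinates $n_i^{*,t}$ satisfy $n_i^{*,t} \in \{\lfloor \lambda_i t \rfloor, \lceil \lambda_i t \rceil\}$. The integer-rounding identity $\lfloor \lambda_i t \rfloor - \lfloor \lambda_i (t-1) \rfloor \in \{0,1\}$, valid for $\lambda_i \in [0,1]$, then yields that the round-$t$ maximizer differs from the round-$(t-1)$ maximizer by incrementing exactly one coordinate. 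The induction on $t$ closes the argument: assuming $\overline{\mathbf{x}}(t-1)$ matches the round-$(t-1)$ lattice-maximum, \name's greedy rule evaluates $U_\alpha$ for each of the $|K|$ candidate single-coordinate increments and picks the best; since one of these increments yields the round-$t$ lattice-maximum and \name ranks them by the same $U_\alpha$ criterion, $\overline{\mathbf{x}}(t)$ must coincide with that maximum, which by strict concavity of $U_\alpha$ is the utility-maximal allocation on the convex boundary of $F_t$.

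The hardest part I anticipate is the multi-dimensional integer-approximation step. In the $|K|=2$ case it reduces to a familiar fact about best rational approximants with adjacent denominators (as sketched in Fig.~\ref{fig:design:rounds}), but for $|K| > 2$ I must rule out that the round-$t$ arg max shifts by a non-adjacent lattice vector relative to the round-$(t-1)$ arg max. I would resolve this via an $\ell_1$-unit-step bound derived from strict concavity of $U_\alpha$ on the simplex $f^*$ together with a pairwise-swap argument, and would handle boundary cases ($\lambda_i = 0$, or $\lambda_i t \in \mathbb{Z}$) with a consistent lexicographic tie-breaking rule so that the inductive hypothesis is preserved.
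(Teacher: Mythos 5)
Your proposal takes essentially the same route as the paper: induction on rounds, with the key step being that the utility-maximal lattice point on the face containing $\mathbf{x}^*$ at round $t+1$ is reachable from the round-$t$ optimum by exactly one of the $|K|$ single-vertex increments, so \name's greedy $U_\alpha$ comparison over those candidates attains it. The only real difference is that the paper establishes this adjacency directly from concavity/unimodality of $U_\alpha$ restricted to the face (showing a candidate like $B_{n+2}E_{m-1}$ can never beat $B_{n+1}E_{m}$), rather than via floor/ceiling rounding of barycentric coordinates---note that your rounding identity alone does not rule out a floor-to-ceiling flip of the arg max between consecutive rounds, so you still need the unimodality argument to close that case---and, like you, the paper only writes out the argument for $|K|=2$ and asserts that it generalizes.
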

\begin{proof}
We prove this by induction on the number of rounds. Theorem~\ref{app:thm:round-opt} proves the base case, where $t=1$. Suppose \name finds the highest utility solution on the convex boundary of $F_t$. We want to show that the schedule that \name computes (Algorithm~\ref{alg:mobius}) has the highest utility on the convex boundary of $F_{t+1}$. We use the example in \fig~\ref{fig:design:rounds} to build this argument. Suppose $B_{n}E_{m}$ has the highest utility after $t = n + m$ rounds. Without loss of generality, we assume that the optimal allocation $\mathbf{x}^{*}$ lies between $B_{n}E_{m}$ and $B_{n+1}E_{m-1}$ (rather than $B_{n}E_{m}$ and $B_{n-1}E_{m+1}$).

We now argue that \name chooses $B$ or $E$ appropriately at round $t+1$ to ensure that the average throughput is the optimal at the end of round $t+1$. Figure \ref{fig:app:optimality-proof} illustrates the setup for this proof.

Recall that $U_\alpha$ is a concave function. Thus, the utility function evaluated over the face $\overline{BE}$ is also concave. This implies that (i) the utility at $B_{n}E_{m}$ is higher than any allocation in $(B_{n}E_{m}, E]$, and (ii) the utility at $B_{n+1}E_{m-1}$ is higher than any allocation in $[B, B_{n+1}E_{m-1})$. Also, since $B_{n}E_{m}$ was the optimal solution at round $t$ (by the inductive hypothesis), the utility at $B_{n}E_{m}$ is higher than the utility at $B_{n+1}E_{m-1}$. Thus the utility for any allocation in $[ B_{n+1}E_{m-1} , B_{n}E_{m} ]$ is lower bounded by the utility at $B_{n+1}E_{m-1}$.

The above relations prove that in round $t+1$, $B_{n+2}E_{m-1}$ can never have the highest utility. Thus the optimal throughput on the convex boundary of $F_{t+1}$ is either $B_{n+1}E_{m}$ or $B_{n}E_{m+1}$. This results in two cases:
\begin{itemize}
\item $B_{n+1}E_{m}$ is the optimal, in which case \name would choose allocation $B$ for round $t+1$ to reach the optimal.
\item $B_{n}E_{m+1}$ is the optimal, in which case \name would choose allocation $E$ for round $t+1$ to reach the optimal.
\end{itemize}
Note that by eliminating $B_{n+2}E_{m-1}$ as an optimal allocation, the two remaining candidates can be reached by appropriately choosing $B$ or $E$ in round $t+1$. This concludes the induction argument and proves the lemma.
\end{proof}

Note that Lemma~\ref{app:lem:Mobius-is-optimal} proves that at the end of round $t$, \name not only reaches the best allocation at the end of round $t$, but it also achieves the best allocation for each preceding round before $t$. 

\begin{theorem}
\name (i.e. Alg. \ref{alg:mobius}) converges to $\mathbf{x^*}$ such that the distance between $\overline{\mathbf{x}}$ and $\mathbf{x^*}$ decreases as $\mathcal{O}(1/t)$, where $t$ is the number of rounds.
\label{app:thm:round-converge}
\end{theorem}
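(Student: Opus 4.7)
The plan is to combine the optimality result from Lemma~\ref{app:lem:Mobius-is-optimal} with the geometric density bound from Lemma~\ref{app:lem:denseCH}. First, by Lemma~\ref{app:lem:opt-in-CH}, the optimal long-term allocation $\mathbf{x}^*$ lies on some face $f^*$ of the convex boundary of $H$; in the two-customer setting this face is the line segment joining two corner allocations $B$ and $E$, so $\mathbf{x}^* = \theta B + (1-\theta) E$ for some $\theta \in [0, 1]$. Lemma~\ref{app:lem:denseCH} then shows that after $t$ rounds the set of achievable averages along $f^*$ is the finite grid $\{\tfrac{n}{t} B + \tfrac{t-n}{t} E : n = 0, 1, \ldots, t\}$, whose consecutive members are separated by exactly $\|B - E\|/t$.

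Second, I would invoke Lemma~\ref{app:lem:Mobius-is-optimal}, which guarantees that \name's average $\overline{\mathbf{x}}(t)$ is the element of this grid maximizing $U_\alpha$. Since $U_\alpha$ restricted to $f^*$ is a strictly concave univariate function whose unique maximizer is $\mathbf{x}^*$, the maximizer over any finite subset of $f^*$ must be the grid point closest to $\mathbf{x}^*$ in Euclidean distance. Because the grid spacing is $\|B - E\|/t$, the closest grid point lies within $\|B - E\|/(2t)$ of $\mathbf{x}^*$, giving
\begin{equation*}
\|\overline{\mathbf{x}}(t) - \mathbf{x}^*\| \;\leq\; \frac{\|B - E\|}{2t} \;=\; \mathcal{O}(1/t),
\end{equation*}
which is the claim. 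The constant $\|B - E\|$ is independent of $t$ since the corner allocations of $H$ are fixed under the static task arrival model (Lemma~\ref{app:lem:sameCH}).

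The main hurdle I foresee is the implicit assumption that \name settles on the correct face $f^*$ from the very first round. The induction in Lemma~\ref{app:lem:Mobius-is-optimal} presumes that $B$ and $E$ have already been identified as the support allocations bracketing $\mathbf{x}^*$; a careful write-up should explicitly invoke Theorem~\ref{app:thm:round-opt} to argue that \texttt{SearchBoundary()} under the static boundary returns $f^*$ in every round, so oscillation really does occur between $B$ and $E$ rather than across unrelated faces. The edge case $\theta \in \{0, 1\}$, where $\mathbf{x}^*$ coincides with a corner allocation, should also be handled by noting that the grid then contains $\mathbf{x}^*$ itself once \name chooses that corner, at which point the error is zero and the $\mathcal{O}(1/t)$ bound is trivially satisfied. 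With these points dispatched, the remainder is bookkeeping around concavity.
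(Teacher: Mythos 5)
Your argument matches the paper's proof essentially step for step: both combine Lemma~\ref{app:lem:denseCH} (each face of the convex boundary is split into equally spaced segments of length proportional to $1/t$) with Lemma~\ref{app:lem:Mobius-is-optimal} (\name attains the utility-optimal allocation in $F_t$) to conclude the $\mathcal{O}(1/t)$ bound, and your two caveats about settling on the correct face and the corner case are reasonable tightenings of points the paper leaves implicit. One small quantitative note: concavity of $U_\alpha$ restricted to the face only guarantees that the grid maximizer is one of the two grid points bracketing $\mathbf{x}^*$, hence within one full grid spacing $\|B-E\|/t$ rather than $\|B-E\|/(2t)$ (the utility-maximal grid point need not be the Euclidean-closest one), but this does not affect the asymptotic claim.
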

\begin{proof}
From Lemma \ref{app:lem:denseCH}, we know that at any round $t$ there are $t-1$ feasible cumulative allocations (excluding the extreme points) on a face, and that these allocations split the face into equally-spaced segments of length $\propto \frac{1}{t}$. Thus, the best allocation in $F_t$ converges to the optimal $\mathbf{x}^{*}$ with an error that is bounded by $\mathcal{O}(1/t)$. Since Lemma~\ref{app:lem:Mobius-is-optimal} establishes that \name chooses the optimal allocation in $F_t$, the result follows.
\end{proof}

\section{Greedy Heuristic to Maximize \texorpdfstring{$U_\alpha$}{U\_a}}
\label{app:greedy}

\S\ref{sec:design:impl} describes how \name builds a suite of warm start schedules to assist the VRP solver in maximizing a weighted sum of customer throughputs. Since \name is guided by a utility function $U_\alpha$ (\S\ref{sec:alpha}), we implement a heuristic that computes an $\alpha$-fair schedule by performing a greedy maximization of $U_\alpha$. Note that this algorithm is not guaranteed to result in a schedule on the convex boundary; we instead use it as an initial schedule to warm start the VRP solver (\S\ref{sec:design:impl}).

The greedy heuristic uses the same formulation as the VRP (\S\ref{sec:motivation}). It computes routes for each vehicle $v \in V$ subject to the budget constraints. \name constructs a schedule iteratively; in each iteration, it executes two steps:
\begin{enumerate}
\item It constructs an $\alpha$-fair path \emph{for each vehicle} that meets the budget constraints by trying to maximize $U_\alpha$. 
\item Then, it invokes a VRP solver with the tasks selected by the paths in (1), to build a high throughput schedule to complete the fair allocation of tasks.
\end{enumerate}
At the end of each iteration, it takes the VRP schedule generated by (2) and tries to squeeze more tasks into the path, preserving fairness. It terminates when no new task can be added according to the greedy optimization in step (1). It then runs the VRP one final time, with a very high weight on the final set of $\alpha$-fair tasks and lower weight on all other customer tasks, so that it can pack the schedule with more tasks to achieve a schedule with high total throughput. 

Intuitively, the iterations over steps (1) and (2) create the $\alpha$-fair schedule with the highest possible throughput, according to the greedy approximation. Then, with the final packing step, we try to boost the throughput of the schedule by fulfilling any additional tasks, without compromising the $\alpha$-fair allocation we have already committed to.

Before starting to construct a path iteratively, we internally maintain the total number of tasks $h_k$ currently fulfilled by the path, for each customer. To do a greedy maximization of $U_\alpha$ in each iteration of constructing the path, we sort all customer tasks according to the \emph{return-on-investment} $R$ for completing each task. Recall that $T_k$ is the set of tasks requested by customer $k$ and $B$ is the time budget for a round (\S\ref{sec:motivation}). The new throughput for customer $k$ by fulfilling a task $l \in T_k$ is $x_k = (h_k + 1)/B$. We compute a task $l \in T_k$ as:
\begin{equation}
R(l) = \frac{U_\alpha(\mathbf{x}) - U_\alpha(\mathbf{h})}{c(m,l)}
\end{equation}
where $m$ is the last task in the path and $c(\cdot)$ is the cost to travel from $m$ to $l$. Then, to select the next task $l$ to add to the path we simply find the task with the greatest return-on-investment:
\begin{equation}
\underset{l \in T_k \; \forall k \in K}{\text{argmax}} \;\;\;\; R(l)
\end{equation}
\section{Runtime of \name}
\label{app:runtime}
In each round, \name uses an efficient algorithm to find support allocations near the target allocation, without having to compute all corner points of the convex hull in each round. This allows \name to invoke the VRP solver \emph{sparingly} in order to find an allocation of rates that steers the long-term rates toward the target. We report some highlights from profiling \name in Table~\ref{tab:app:prof}. These results suggest that the computational overhead for deploying \name would be negligible for many mobility applications.
\begin{table}[ht]
    \footnotesize
	\centering
	\begin{tabular}{c|c|c|c|c}
		\# Cust. & \# of Tasks & \# of Vehicles & Round Duration (min) & Runtime (s) \\
		\hline
		\hline
		2 & 100 & 2 & 10 & 15 \\
        3 & 150 & 3 & 15 & 20 \\
        4 & 200 & 4 & 15 & 35 \\
        6 & 567 & 6 & 90 & 51 \\
        6 & 999 & 6 & 90 & 59 \\
        6 & 567 & 24 & 90 & 88 \\
        6 & 999 & 24 & 90 & 105 \\
	\end{tabular}
	\caption{Performance of \name on different input sizes.}
	\label{tab:app:prof}
\end{table}

\section{Microbenchmarks}
\label{app:eval:synthetic}

\begin{figure*}[t]
	\centering
    \begin{subfigure}[t]{0.15\textwidth}
		\centering
		\includegraphics[scale=0.55]{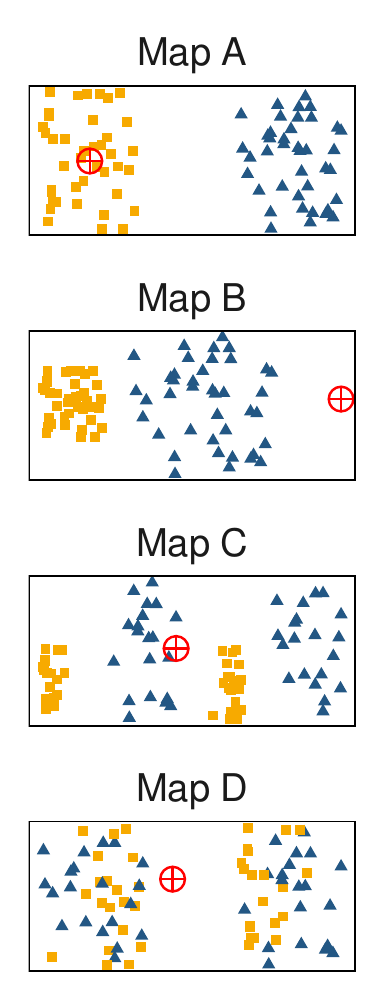}
		\caption{Map of tasks.}
		\label{fig:eval:synthetic-cust2-map}
	\end{subfigure}
    \begin{subfigure}[t]{0.4\textwidth}
		\centering
		\includegraphics[scale=0.55]{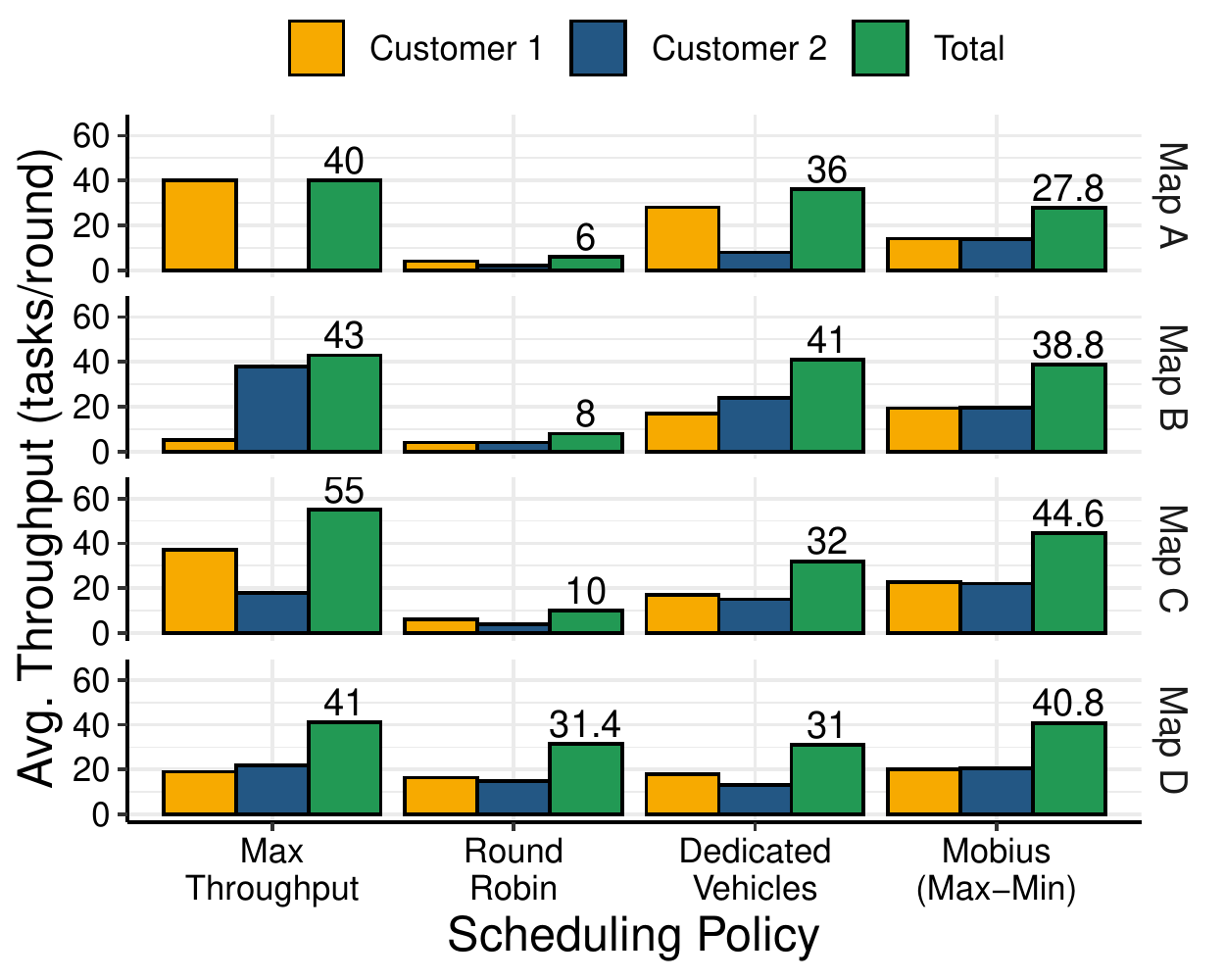}
		\caption{\name vs. other schemes.}
		\label{fig:eval:synthetic-cust2-sum}
	\end{subfigure}
      \begin{subfigure}[t]{0.4\textwidth}
      \centering
      \includegraphics[scale=0.55]{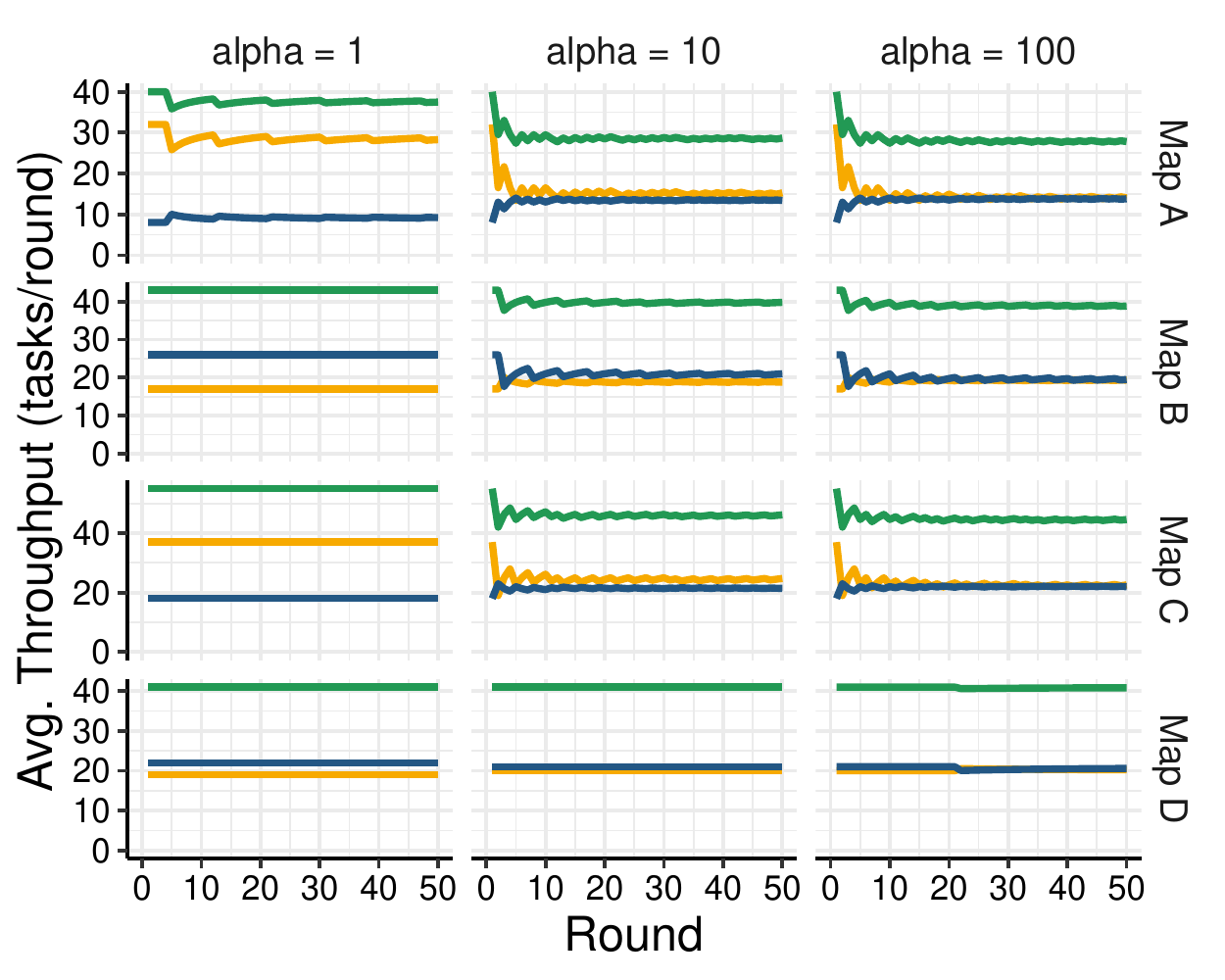}
      \caption{$\alpha$-fair throughputs achieved by \name.}
      \label{fig:eval:synthetic-cust2-thp}
      \end{subfigure}
	\caption{Comparing customer throughputs and platform throughput achieved by \name and other schemes. Customer tasks stream in according to a static task arrival model. \name consistently outperforms other schemes by striking a balance between throughput and fairness.}
    \vspace{-10pt}
	\label{fig:eval:synthetic-cust2}
\end{figure*}

We evaluate \name on several microbenchmarks involving synthetic customer traces. We vary (1) the spatial distribution of customer tasks, (2) the timescale at which tasks are requested, and (3) the degree $\alpha$ to which a schedule is fair. 
Customers submit at most 40 tasks, each taking 10 seconds to fulfill, in any round. Between rounds, they renew any fulfilled tasks at the same location. 
 The travel time between any two nodes is based on their Euclidean distance, assuming a constant travel speed of 10 m/s.

\subsection{Robustness to Spatial Demand}
We evaluate \name's ability to deliver a fair allocation of customer throughputs, in the presence of highly diverse spatial demand. We construct 4 very different maps (\fig~\ref{fig:eval:synthetic-cust2-map}), with 2 vehicles starting at $\oplus$. For this experiment, we assume that the customer tasks arrive from a static task arrival model (\S\ref{sec:design:opt}): the vehicles make a round-trip in each round, and customers renew any fulfilled tasks at the start of every round.
\fig~\ref{fig:eval:synthetic-cust2-sum} shows the average per-customer and total throughputs achieved by different schemes after 50 rounds. 

For all maps, \name (with max-min fairness) indeed provides a fair allocation of average customer throughputs. The other baseline schedules exhibit variable performance depending on the task distribution. For example, in Map A, both max throughput and dedicating vehicles achieve dismal fairness. The max throughput schedule only serves customer 1's cluster, and dedicating a vehicle to customer 2 cannot deliver a fair share of throughput, given the round-trip budget constraints. When customers' tasks overlap and have similar spatial density (Map D), the max throughput schedule provides a roughly fair allocation of rates, and the round-robin schedule achieves reasonably high throughput. Dedicating vehicles suffers from poor throughput when there is an incentive to pool tasks from multiple customers into a single vehicle (Map B and Map C).

\begin{figure}
  \centering
  \includegraphics[scale=0.65]{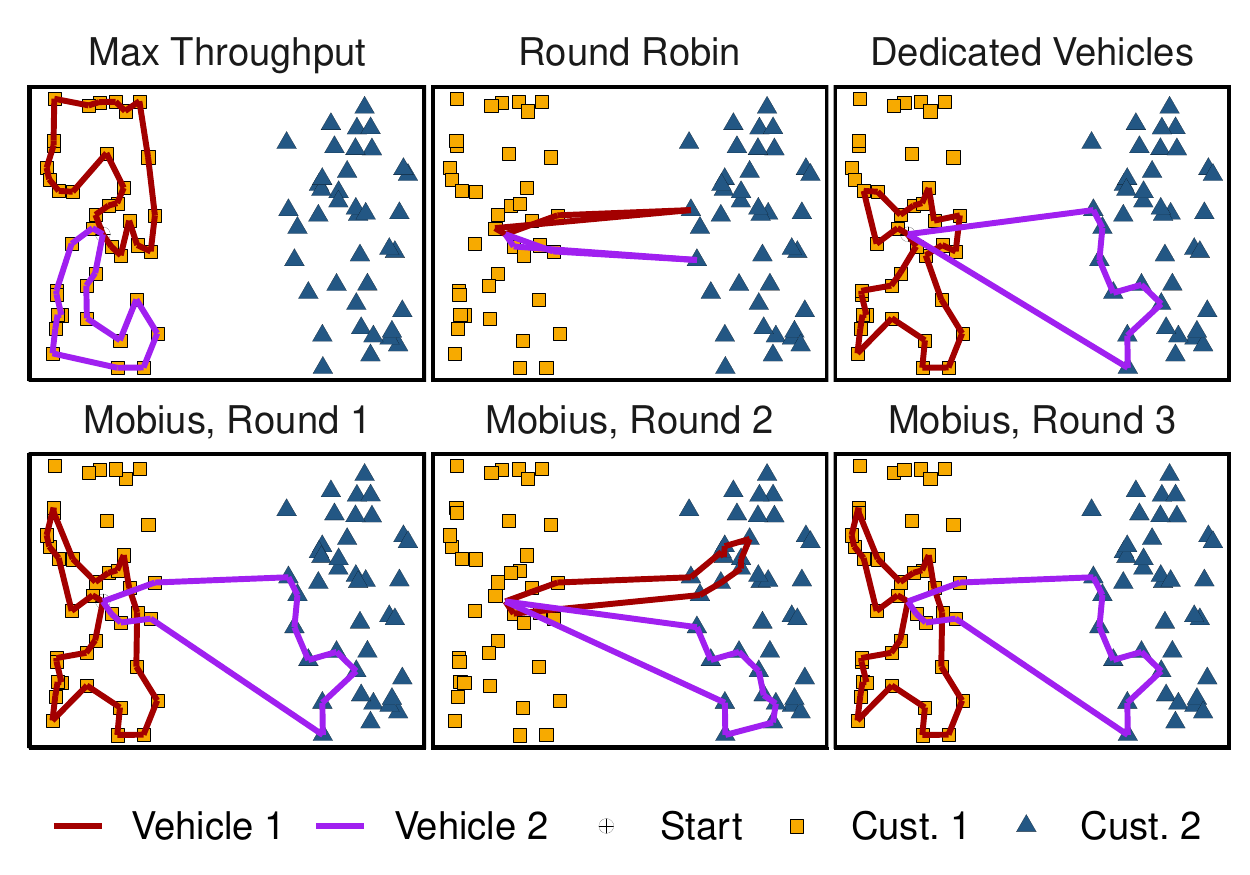}  \vspace{-10pt}
  \caption{Snapshot of per-round schedules computed by \name (for 3 rounds) and other policies. \name compensates for short-term unfairness by switching between schedules on the convex hull over rounds. Other schemes suffer from persistent bias or low throughput.}
  \vspace{-10pt}
  \label{fig:eval:synthetic-routes}
\end{figure}

\vspace{5pt}
\paragraphb{Focusing on Map A.}
To illustrate how \name converges to fair per-customer allocations without significantly degrading platform throughput, we show in \fig~\ref{fig:eval:synthetic-routes} schedules computed for Map A (\fig~\ref{fig:eval:synthetic-cust2-map}). On the top row, we show max throughput, round-robin, and dedicated schedules; on the bottom row, we show schedules computed by \name over 3 consecutive rounds. In round 1, \name \emph{exploits a sharing incentive} to pool some of customer 1's tasks into the journey to customer 2's tasks, achieving a similar throughput to the max throughput schedule. By contrast, the max throughput schedule starves customer 2, and the round-robin schedule wastes time moving between clusters.

\name is able to compensate for short-term unfairness across multiple rounds of scheduling. \fig~\ref{fig:eval:synthetic-routes} shows the first 3 schedules that \name computed for max-min fairness, assuming fulfilled tasks reappear, as before. Although \name does not starve customer 2 in round 1, it still delivers $4\times$ higher throughput to customer 1. However, as we see in round 2, \name compensates for this unfairness by prioritizing customer 2, while still exploiting sharing incentive and collecting a few tasks for customer 1 during the round trip. The schedule in round 3 is identical to that in round 1; since tasks arrive according to a static model in this example, the convex hull remains the same across rounds, and so \name oscillates between the same two support allocations (schedules). The max throughput and dedicated schedules suffer from a persistent bias in throughput (\fig~\ref{fig:eval:synthetic-cust2-sum}) due to the skew in spatial demand.

\subsection{Expressive Schedules with \texorpdfstring{$\alpha$}{a}}
\name's  $\alpha$ parameter allows the platform operator to control fairness; with higher $\alpha$, the platform trades off some total throughput for a fairer allocation of per-customer rates. \fig~\ref{fig:eval:synthetic-cust2-thp} shows, for three different values of $\alpha$, the long-term throughput for each customer and the platform throughput over time. For all maps (\fig~\ref{fig:eval:synthetic-cust2-map}), as we increase the degree of fairness $\alpha$, \name compromises some platform throughput. \name's scheduler is \emph{expressive}; for instance, if an operator would like high throughput, with the only constraint that no customer gets starved, she can run \name with $\alpha = 1$, which ensures a proportionally fair allocation of throughputs. Map A shows an example of this. Furthermore, \name indeed \emph{converges to the target throughput} (\S\ref{sec:design:opt}); this is best illustrated with the max-min schedules, where the customer throughputs converge to the same rate. \name can also converge to any allocation of rates in the spectrum between maximum throughput and max-min fairness (\eg $\alpha = 10$).

\begin{figure}
  \centering
  \includegraphics[scale=0.65]{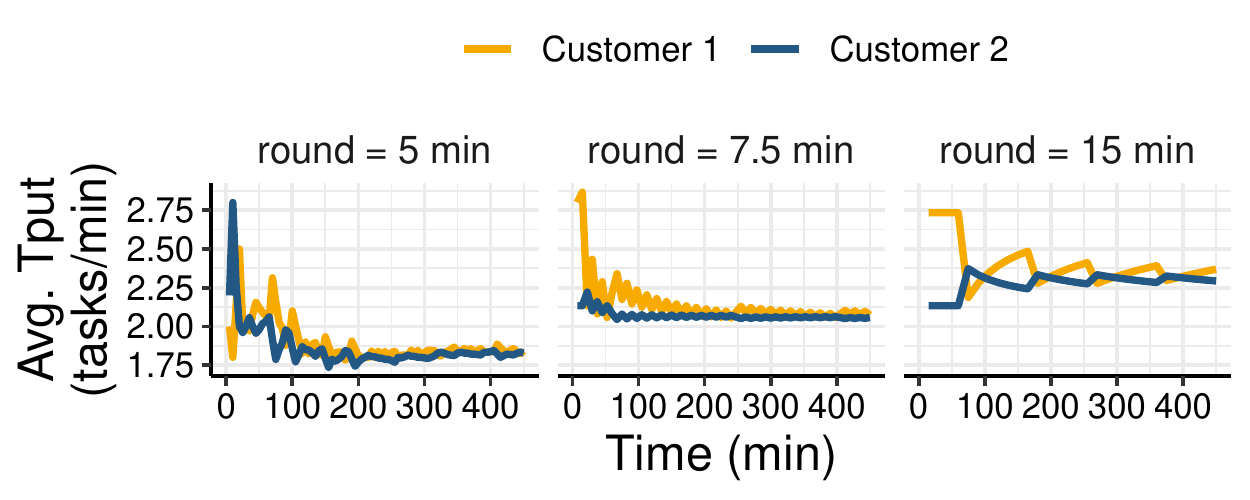}  \vspace{-10pt}
  \caption{\name converges to the fair allocation of throughputs regardless of the timescale of fairness. Scheduling in shorter rounds converges faster to the fair allocation of rates, but longer round durations lead to schedules with greater platform throughput.}
  \vspace{-10pt}
  \label{fig:eval:synthetic-timescale}
\end{figure}

\subsection{Timescale of Fairness}
The duration of a round in \name is a parameter; for instance, an operator could set the round length to be the vehicles' fuel time, or the desired timescale of fairness. We expect that, with shorter durations, \name can converge faster to a fair allocation. To study this behavior, we consider, in \fig~\ref{fig:eval:synthetic-timescale}, max-min fair schedules generated by \name on Map A. We consider three round durations (5 mins, 7.5 mins, and 15 mins), requiring the vehicles to return home every 15 minutes, as before. There are three interesting takeways. First, for all round durations, \name provides an equal allocation of rates to both customers. Second, we see that \name achieves lower platform throughput for shorter round durations; this is because schedules computed at shorter timescales are more myopic. Third, shorter round durations allow \name to converge to faster to an equal allocation rates. In particular, we see that \name at a 5-min timescale achieves the fair allocation within 150 minutes, but at a 15-min timescale, it takes nearly 400 minutes to converge. Thus, the timescale of fairness of fairness allows an operator to trade some total platform throughput for faster convergence.

Additionally, the schedules generated with 5-min and 7.5 min timescales do not observe the static task arrival assumption (\S\ref{sec:design:opt}). Since the vehicles begin some rounds away from their start locations, the convex hull changes across rounds. Still, \name is robust in this setting and provides very similar rates to both customers. The case studies (\S\ref{sec:eval:lyft}-\S\ref{sec:eval:drone}) provide more realistic examples where the static task arrival assumption is relaxed.

\subsection{Geometry of the Convex Boundary}
\label{app:results:boundary}
\begin{figure}
  \centering
  \includegraphics[scale=0.5]{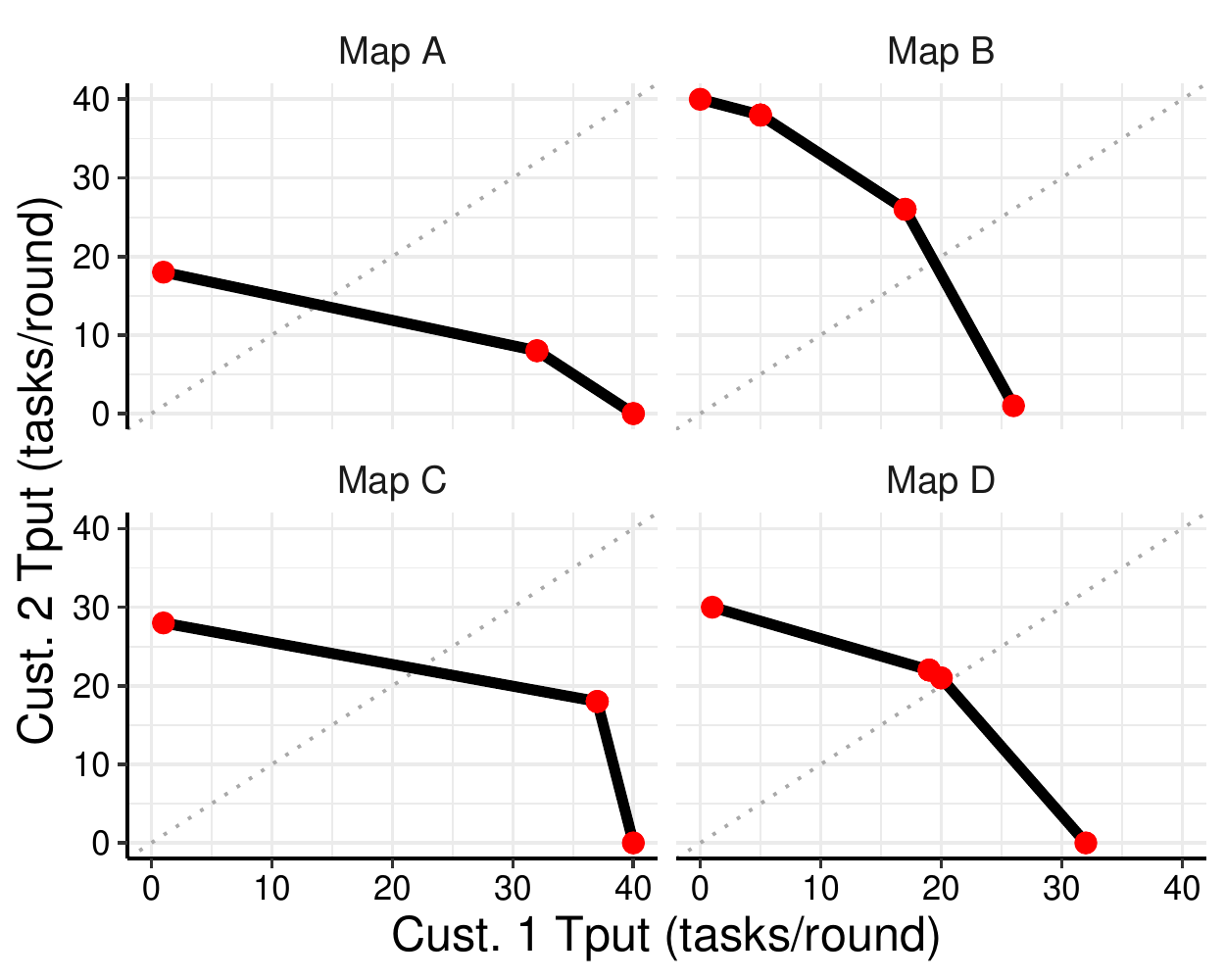}
  \vspace{-10pt}
  \caption{Convex boundaries computed by \name for the different maps shown in \fig~\ref{fig:eval:synthetic-cust2-map}. The shape of the convex boundary describes the inherent tradeoff between fairness and high throughput.}
  \vspace{-10pt}
  \label{fig:app:boundary}
\end{figure}

\name finds an approximately fair schedule in each round by constructing corner points of the convex boundary. The convex boundary of achievable throughputs succinctly captures the tradeoffs between servicing different customers, based on their spatial demand. \fig~\ref{fig:app:boundary} shows the convex boundaries for four different maps of tasks (shown in \fig~\ref{fig:eval:synthetic-cust2-map}). We construct the convex boundary using an extended version of \name's search boundary. Specifically instead of searching the face containing the utility optimum on the \emph{current} convex boundary, we search \emph{all} faces, \ie extend the convex boundary in all directions. The terminating conditions remain the same: we know we have reached face on the convex boundary when we cannot extend it further.

The geometry of the convex boundary indicates which customers the platform can service more easily. For instance, notice that the boundaries for Map A and Map C are both skewed toward customer 1, since the platform incurs less overhead to service both customers. In contrast, the convex boundary for Map D is symmetric, since each customer is equally easy to service.

\subsection{Varying Number of Vehicles}
\label{app:results:vehicles}

The results in \S\ref{sec:eval} show that dedicating vehicles can (i) miss out on sharing incentive, leading to lower platform throughput, and (ii) lead to unfairness in situations where it is inherently harder to service some customers. However, dedicating vehicles is only a viable policy when the number of vehicles is a multiple of the number of customers. \name makes no assumptions about the number of vehicles in the \maas platform; in this section, we study the performance of \name in a \maas platform with different numbers of vehicles.

\begin{figure}
  \centering
  \includegraphics[scale=0.65]{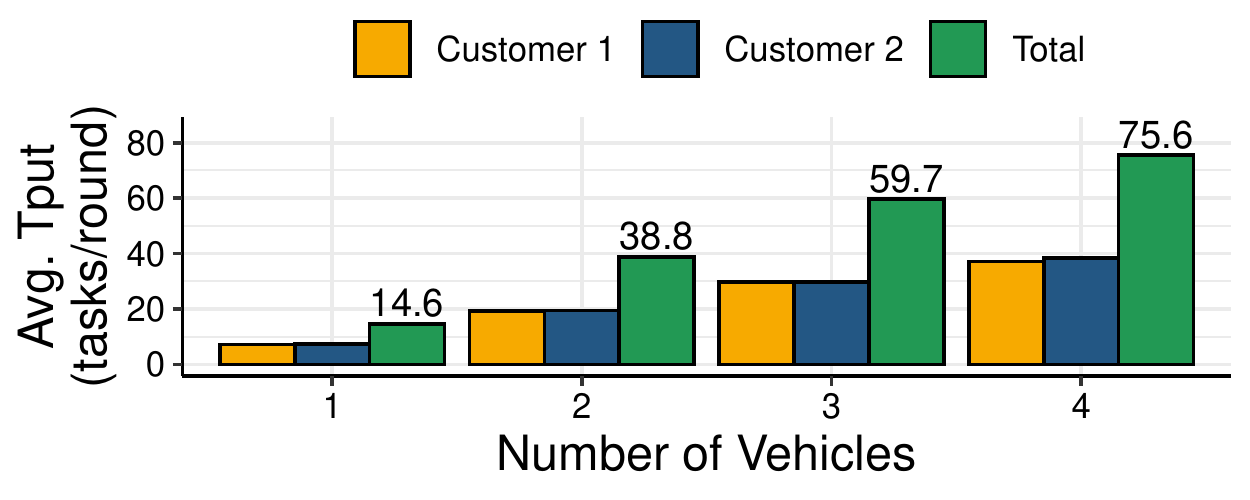}
  \vspace{-10pt}
  \caption{Long-term per-customer rates computed by \name on Map D in \fig~\ref{fig:eval:synthetic-cust2-map}, for different provisioning of vehicles.}
  \vspace{-10pt}
  \label{fig:app:vehicles}
\end{figure}

\fig~\ref{fig:app:vehicles} shows the per-customer average throughputs $\overline{x}_k(t)$ achieved by \name (for max-min fairness) on Map B (\fig~\ref{fig:eval:synthetic-cust2-map}) for different numbers of vehicles. In all cases, \name converges to a max-min fair allocation of rates. As expected, the throughput of the platform increases with more vehicles, since the platform can complete more in parallel.

\subsection{A Case with Three Customers}
\label{app:results:three}

\begin{figure}
  \centering
  \includegraphics[scale=0.65]{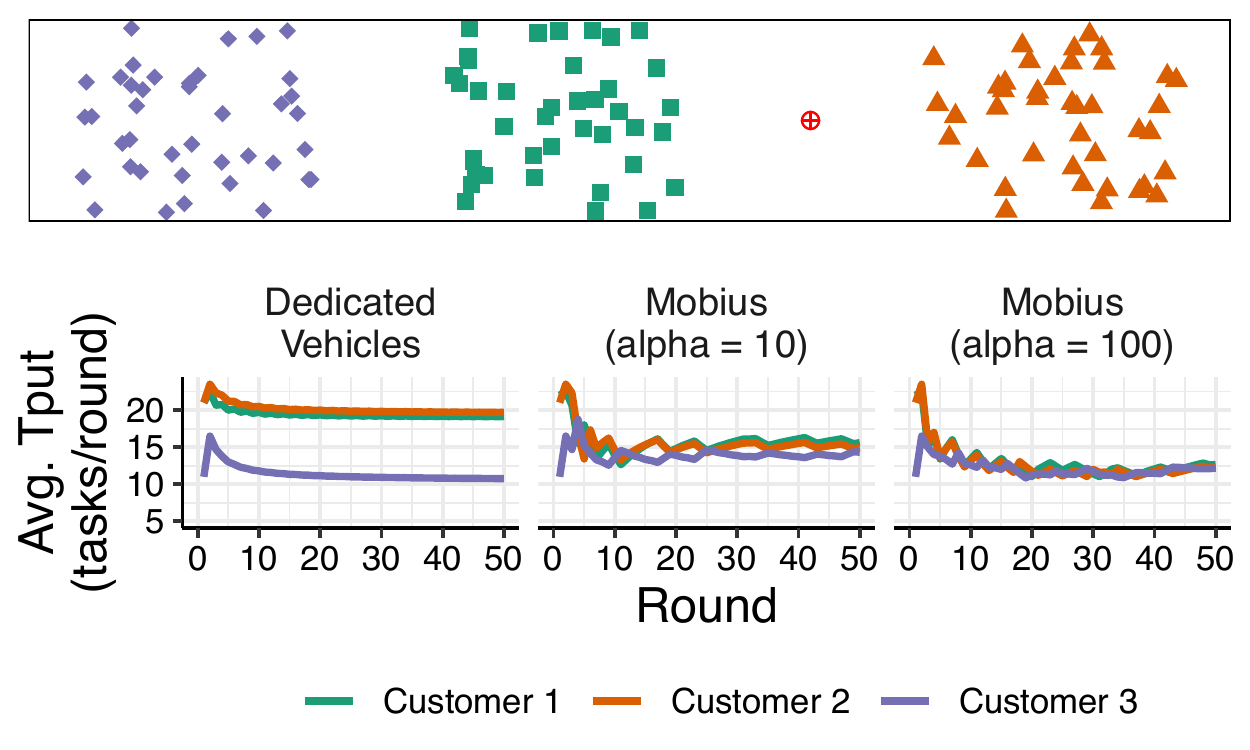}
  \vspace{-10pt}
  \caption{\name vs. dedicating vehicles for example with 3 customers. \name converges to a fair allocation of rates for customers, when the assumption on static task arrival is relaxed.}
  \vspace{-10pt}
  \label{fig:app:three}
\end{figure}

\S\ref{app:eval:synthetic} showed a controlled study of the properties of \name, in environments with two customers. \fig~\ref{fig:app:three} shows an example with three customers and three vehicles, all starting at $\oplus$. We let customers renew fulfilled tasks after every round. We consider a fairness timescale of 5 minutes, and require that the vehicles return home every 15 minutes (\ie 3 rounds); so we relax the assumption on static task arrival, \ie the convex boundary is identical every 3 rounds. \fig~\ref{fig:app:three} shows time series chart of per-customer long-term throughputs achieved by \name (for $\alpha = 10$ and $\alpha = 100$) and by dedicating vehicles. The schedule that dedicates vehicles to customers misses out on the opportunity to fulfill tasks for customer 1 on the way to customer 3's cluster. Additionally, notice that \name can provide a fair allocation of rates for 3 customers, and $\alpha$ allows \name to control the degree to which the rates converge to the same value.

    \fi
\end{sloppypar}
\end{document}